\newtheorem{ntn}{Notation}
\newtheorem{dfn}{Definition}
\newtheorem{prop}[dfn]{Proposition}
\newtheorem{rem}[dfn]{Remark}
\newtheorem{thm}[dfn]{Theorem}
\newtheorem{lem}[dfn]{Lemma}
\newtheorem{prob}{Problem}
\newtheorem{example}{Example}
\newcommand{\w}{\omega}
\newcommand{\ov}{\overline}
\newcommand{\E}{\mathbf{E}}
\newcommand{\ComplexField}{\mathbf{C}}
\newcommand{\transposesymbol}{T}
\newcommand{\genericvector}{v}
\newcommand{\genericvectordimension}{n}
\newcommand{\genericvectorindex}{i}
\newcommand{\genericvectorindexalt}{j}
\newcommand{\genericvectorindexstart}{1}
\newcommand{\genericvectorindexend}{\genericvectordimension}
\newcommand{\genericvectorindexsubset}{I}
\newcommand{\genericvectorindexsubsetdimension}{\genericvectordimension_{\genericvectorindexsubset}}
\newcommand{\genericvectorindexsubsetindexstart}{\genericvectorindex_{1}}
\newcommand{\genericvectorindexsubsetindexend}{\genericvectorindex_{\genericvectorindexsubsetdimension}}
\newcommand{\genericvectorindexsubsetalt}{J}
\newcommand{\genericvectorindexsubsetaltdimension}{\genericvectordimension_{\genericvectorindexsubsetalt}}
\newcommand{\genericvectorindexsubsetaltindexstart}{\genericvectorindexalt_{1}}
\newcommand{\genericvectorindexsubsetaltindexend}{\genericvectorindexalt_{\genericvectorindexsubsetaltdimension}}
\newcommand{\genericmatrix}{Q}
\newcommand{\genericmatrixdimensionA}{\genericvectordimension}
\newcommand{\genericmatrixdimensionB}{\genericvectordimension}
\newcommand{\identitymatrix}{\mathcal{I}}
\newcommand*{\vv}[1]{\vec{\mkern0mu#1}}
\newcommand{\graph}{G}
\newcommand{\mutilatedgraph}{\overline{\graph}}
\newcommand{\vertex}{y}
\newcommand{\vertexalt}{w}
\newcommand{\vertexset}{V}
\newcommand{\vertexnumber}{n}
\newcommand{\vertexindex}{i}
\newcommand{\vertexindexalt}{j}
\newcommand{\vertexindexaltb}{k}
\newcommand{\vertexindexsep}{z}
\newcommand{\edgeset}{E}
\newcommand{\directededgeset}{\vv \edgeset}
\newcommand{\path}{\pi}
\newcommand{\pathlength}{\ell}
\newcommand{\pathnodeindex}{p}
\newcommand{\pathindexstart}{1}
\newcommand{\pathindexend}{\pathlength}
\newcommand{\pathindexendminusone}{\pathlength-1}
\newcommand{\parent}[2]{pa_{#1}\left(#2\right)}
\newcommand{\child}[2]{ch_{#1}\left(#2\right)}
\newcommand{\ancestor}[2]{an_{#1}\left(#2\right)}
\newcommand{\descendant}[2]{de_{#1}\left(#2\right)}
\newcommand{\MarkovBlanket}[2]{MB_{#1}\left(#2\right)}
\newcommand{\subsetvertex}{{\vertex}_{I}}
\newcommand{\subsetvertexalt}{{\vertex}_{J}}
\newcommand{\subsetvertexsep}{{{\vertex}_{Z}}}
\newcommand{\subsetvertexprime}{{\vertex}_{I'}}
\newcommand{\subsetvertexaltprime}{{\vertex}_{J'}}
\newcommand{\dsep}[1]{dsep_{#1}}
\newcommand{\vertexsetsmallertocheckdsep}{\node_{U}}
\newcommand{\PSD}{\Phi}
\newcommand{\ZTvar}{\mathbf{z}}
\newcommand{\TFspanspace}{X}
\newcommand{\CTFspanspace}{X^{C}}
\newcommand{\wiener}{W}
\newcommand{\Wiener}[3][]{
	\wiener_{
		#2
		\ifx&#1&
		\else
		,[#1]
		\fi
		|#3}
	}
\newcommand{\CWiener}[3][]{
	\wiener^{C}_{
		#2
		\ifx&#1&
		\else
		,[#1]
		\fi
		|#3}
	}
\newcommand{\wienerprojection}[2]{ {\hat #1}_{#2}}
\newcommand{\ldim}{\mathcal{\graph}}
\newcommand{\adjTF}{\genericmatrix}
\newcommand{\nodeset}{V}
\newcommand{\node}{y}
\newcommand{\nodealt}{x}
\newcommand{\nodenumber}{n}
\newcommand{\nodeindex}{i}
\newcommand{\nodeindexalt}{j}
\newcommand{\nodeindexaltb}{k}
\newcommand{\nodeindexstart}{1}
\newcommand{\nodeindexend}{\nodenumber}
\newcommand{\subsetnodeindex}{I}
\newcommand{\subsetnodeindexalt}{J}
\newcommand{\subsetnodeindexsep}{Z}
\newcommand{\subsetnodeindexsepchild}{\subsetnodeindexsep_{\subsetnodeindex}}
\newcommand{\subsetnodeindexsepchildalt}{\subsetnodeindexsep_{\subsetnodeindexalt}}
\newcommand{\DSFOutputMatrix}{Q}
\newcommand{\DSFInputMatrix}{P}
\newcommand{\subsetnode}{{\node}_{\subsetnodeindex}}
\newcommand{\subsetnodeprime}{{\node}_{\subsetnodeindex'}}
\newcommand{\subsetnodealt}{{\node}_{\subsetnodeindexalt}}
\newcommand{\subsetnodealtprime}{{\node}_{\subsetnodeindexalt'}}
\newcommand{\subsetnodesep}{{{\node}_{\subsetnodeindexsep}}}
\newcommand{\subsetnodesepchild}{{{\node}_{\subsetnodeindexsepchild}}}
\newcommand{\subsetnodesepchildalt}{{{\node}_{\subsetnodeindexsepchildalt}}}
\newcommand{\wsep}{wsep}
\newcommand{\cwsep}{cwsep}
\newcommand{\subsetindexnodesmallertocheckdsep}{U}
\newcommand{\nodesetsmallertocheckdsep}{\node_{\subsetindexnodesmallertocheckdsep}}
\newcommand{\nodesetsmallertocheckdsepcomplement}{{\nodesetsmallertocheckdsep^{c}}}
\newcommand{\subsetnodeindexenlarged}{\bar{I}}
\newcommand{\subsetnodeindexaltenlarged}{\bar{J}}
\newcommand{\subsetnodeenlarged}{{\node}_{\subsetnodeindexenlarged}}
\newcommand{\subsetnodealtenlarged}{{\node}_{\subsetnodeindexaltenlarged}}
\newcommand{\inputsignal}{u}
\newcommand{\noise}{e}
\newcommand{\noisenumber}{\nodenumber}
\newcommand{\noiseindexstart}{1}
\newcommand{\noiseindexend}{\noisenumber}
\newcommand{\subsetnoise}{{\noise}_{I}}
\newcommand{\subsetnoisealt}{{\noise}_{J}}
\newcommand{\pdg}{PDG}
\title{\LARGE \bf
	Signal selection for estimation and identification in networks of dynamic systems: a graphical model approach
}
\author{
	Donatello Materassi \quad
	Murti V.~Salapaka 
	\thanks{Donatello Materassi is with Department of Electrical Engineering and Computer Science, University of Tennessee.}
	\thanks{Murti Salapaka is with Department of Electrical and Computer Engineering, University of Minnesota.}
}
\begin{document}

%
%
%
\maketitle
\begin{abstract}
    Network systems have become a ubiquitous modeling tool in many areas of science where nodes in a graph represent distributed processes and edges between nodes represent a form of dynamic coupling.
    When a network topology is already known (or partially known), two associated goals are (i) to derive estimators for nodes of the network which cannot be directly observed or are impractical to measure; and (ii) to quantitatively identify the dynamic relations between nodes. 
    In this article we address both problems in the challenging scenario where only some outputs of the network are being measured and the inputs are not accessible.
    The approach makes use of the notion of $d$-separation for the graph associated with the network.
    In the considered class of networks, it is shown that the proposed technique can determine or guide the choice of optimal sparse estimators.
    The article also derives identification methods that are applicable to cases where loops are present providing a different perspective on the problem of closed-loop identification.
    The notion of $d$-separation is a central concept in the area of probabilistic graphical models, thus an additional contribution is to create connections between control theory and machine learning techniques.
\end{abstract}

\section*{Introduction}

Networks of dynamic systems have become a ubiquitous tool in science to describe interactions in modular systems.
Applications span areas as diverse as 
economics (see e.g. \cite{AtaHor11}),
social systems (see e.g. \cite{AceDah2011,chetty2017applying})
biology (see e.g. \cite{EisSpe98,DelNin08}),
cognitive sciences (see e.g. \cite{QuiCol11}),
and
geology (see e.g. \cite{BaiMon06}).
In many of these applications 
the system variables being observed are not the responses to known or manipulable inputs that are actively injected to probe the network, but rather are measurements acquired while the system is naturally operating and is responding to excitations that are unknown.
Scenarios where inputs are neither manipulable nor measurable include systems with critical or uninterruptible functions where injecting probing signals is deemed unacceptable to system operations (such as the power grid, or the transport infrastructure), or when it is too expensive or impractical to actively manipulate the nodes (such as a financial network \cite{Ale01} or a gene network \cite{EisSpe98}).

This article considers networks in the Dynamical Structure Function (DSF) form, introduced in \cite{GonWar08}, and addresses two fundamental problems in the domain of this class of networked systems:
(i) estimation of the output signal (behavior) of an individual node,
and
(ii) consistent identification of the dynamic relation (transfer function) between two nodes in the network.
Both problems are investigated  in the presence of latent agents with focus on 
methods that preclude the manipulation or observation of the inputs.
The article also bridges problems of signal estimation and consistent identification to the graphical model framework.

Toward estimation of an agent's behavior in a dynamically networked system, 
graphical conditions are leveraged to derive simpler estimators which make use of a reduced number of signals.
One first fundamental result is a rigorous characterization of the effects of observations' inclusion or exclusion on the estimate of an agent's activity.
For example, when a set of observed nodes is used to estimate another node, it stands to reason that the introduction of new observed nodes renders some of the nodes in the original set irrelevant.
The proven characterization also demonstrates that it is possible to remove some observations and make other observations irrelevant, too.

The prevalent approach in identification theory assumes the design of an appropriate input to inject into the system (or subsystem) to be identified \cite{goodwin1977dynamic}.
In the case of identification in networks, some approaches also consider knocking-out one or more agents in their identification procedure \cite{NabMes10}.
Since these techniques rely on the design of an input or on the active suppression agents in the networks, they are not applicable to systems that are not manipulable.
There are also several data-driven techniques which aim at reconstructing the underlying graph of a network \cite{MatInn10,MatSal12} and which also obtain the transfer functions describing the network's dynamics among the nodes  \cite{GonWar08,yuan2011robust,MatSal12,yue2017linear,hayden2017network}.
Some of these techniques make limiting assumptions on the structures they can identify (i.e. \cite{MatInn10} is limited to trees). Others assume that the system is not manipulable, but still rely on the measurement of the forcing inputs.
For example \cite{GonWar08} finds necessary and sufficient informativity conditions to recover the network's dynamics among the nodes given the transfer function from inputs to outputs (or an estimate of it). 
If the inputs are not measured and just modeled as stochastic, multiple transfer functions could be compatible with the observed outputs, given that the spectral factorization of their power spectral density is in general not unique \cite{hayden2017network}.
Work in \cite{yuan2011robust} robustifies the approach in \cite{GonWar08} incorporating the priori knowledge of the network structure in an optimization problem, but still requires an estimate of the transfer function from inputs to outputs or the actual measurements of the forcing inputs.
Among the approaches based on DSF models, \cite{hayden2017network} is arguably the most related to this article.
The authors consider a stochastic scenario, referred to as ``blind identification'', where the forcing inputs are not accessible and the objective of the identification has to be determined from the output spectral density only.
In \cite{hayden2017network}, under the assumption of strictly causality for the network dynamics and mutually independent forcing inputs, it is shown that, given the outputs' spectral density matrix, its spectral factorizations compatible with the DSF dynamics are a finite number.  Furthermore, under the additional minimum phase condition on the network's dynamics, the spectral factor is shown to be unique (modulo a multiplication by a signed identity matrix), so that the network can be fully identified via the methods in \cite{GonWar08}.
As an important feature, \cite{hayden2017network} requires no a priori knowledge of the network structure.
Under the same strict causality and minimum-phase conditions, similar results which do not require a priori knowledge of the structure are obtained in \cite{MatSal12} using a Granger causality approach \cite{Gra69}. 

This article follows a problem formulation close to the blind identification in \cite{hayden2017network}: the considered class of networks is expressed in a DSF form and the unknown inputs are modeled as stochastic processes.
Some of the assumptions in \cite{hayden2017network} are relaxed: the dynamics is not required to be strictly causal, and only some of the outputs are being measured. Thus, while \cite{hayden2017network} assumes full knowledge of the power spectral density matrix of the outputs, here only a minor of the output spectral density matrix is known.
At the same time, compared to \cite{hayden2017network}, the results of this article make use of partial a priori knowledge about the network structure and attempt to identify a single transfer function and not the whole dynamics.
An example (Example~\ref{ex:DSF comparison}) shows how the identification problem formulated in this article can also be applied to formulate blind identification problems as in \cite{hayden2017network}, but considering non-mutually independent forcing inputs.

On the other hand, if the network topology is fully known, there is a wealth of closed loop parametric identification techniques which have been extended to general networks in order to identify individual transfer functions.
These techniques include the Direct Method, the Joint IO method, the Two-Stage identification \cite{van2013identification}, and the instrumental variable method \cite{DanVan15}.
Most of these approaches can address scenarios where inputs are non manipulable if observed internal signals in the network can be used as predictors and/or instrumental variables.
They result in sufficient conditions on how to select these signals to arrive at consistent estimation of an individual transfer function \cite{DanVan16}.
However, these techniques often require some form of knowledge about the strict causality or the degree of delay embedded in some operators (especially if feedback loops are present \cite{DanVan15}).

This article follows a related, but non-parametric, approach.
The identification of a transfer function is achieved by computing a multi-input Wiener filter estimating a signal over a selected set of other signals, in such a way that the one of the entries of the Wiener filter matches the expression of the transfer function to be identified.
A precise characterization is provided for the set of agents whose activities, if available, result in an unbiased and consistent estimation of the dynamics between a specific pair of agents. 
No knowledge about the strict causality of specific operators is required and, compared to other techniques, the results hold even when the transfer functions are zero, but this information is not available.
These features allow one to seamlessly deal with latent nodes and situations where the network topology is not known exactly, by using purely graphical criteria.

The resulting conditions are graphical and constructive.  They depend exclusively on the structure of the network and the location of the edge associated with the transfer function to be identified.
Furthermore, the articles' framework builds bridges between the area of probabilistic graphical models \cite{Pea88} and extends them to the area of network of dynamical systems \cite{MatSal12,materassi2015identification}.
In particular, in this article, the notion of $d$-separation on graphs, widely used in the domain of probabilistic graphical models (see \cite{Pea88}), plays an enabling role: it is established that $d$-separation on the graphical representation implies a notion of independence for signals in networks of linear dynamical systems based on Wiener filtering. This is achieved both in the non-causal and in the more challenging causal case, broadening the framework developed in \cite{MatSal14}.


The article has the following structure. Section~\ref{sec:preliminary} introduces preliminary concepts including the notion of $d$-separation for directed graphs that is key for the development of the main results. Section~\ref{sec:LDMI} introduces a flexible class of networks via the DSF formalism. Section~\ref{sec:problem} formalizes the two problems which are the focus of this article.
Section~\ref{sec:manipulations} describes graphical manipulations for our network models.
Section~\ref{sec:dseparation} shows that $d$-separation defined on the graph representation of implies specific sparsity properties of Wiener filters on such class of networks.
Section~\ref{sec:idresults} makes use of $d$-separation to select appropriate signals in order to obtained unbiased identification of a transfer function in the network.
Throughout the article, efficacy of the results are illustrated via examples.

\section*{Notation and basic definitions}
\noindent
\begin{itemize}
    \item   $\{\node_{\nodeindex}\}_{\nodeindex=1}^{N}$: set of $N$ column vector processes indexed by $\nodeindex$
    \item   $\node=(\node_{1}^{\transposesymbol},...,\node_{N}^{\transposesymbol})^{\transposesymbol}$: vector of stochastic processes obtained by stacking $\node_{1},...,\node_{N}$
    \item   $I,J$: set of indeces in $\{1,...,N\}$
    \item   $\PSD_{\node}(\ZTvar)$: power spectral density of the vector process $\node$
    \item   $G=(V,\vec{E})$: directed graph with nodes $V$ and edges $\vec{E}$
    \item   $\Wiener{\nodealt}{\node}(\ZTvar)$ and $\CWiener{\nodealt}{\node}(\ZTvar)$: respectively non-causal and causal Wiener filter estimating the process $\nodealt$ from the process $\node$
    \item   $\Wiener[\node_{\nodeindex}]{\nodealt}{\node}(\ZTvar)$, $\CWiener[\node_{\nodeindex}]{\nodealt}{\node}(\ZTvar)$: component of the non-causal and causal Wiener filter estimating the process $\nodealt$ from the process $\node$ associated with the subvector $\node_{\nodeindex}$ of $\node$
\end{itemize}

The following notation for the definition of subvectors and submatrices will be adopted through the article.
\begin{ntn}[Ordered and unordered sets]
	An unordered set (a collection of elements where the listing order is not relevant) is denoted using curly brackets.
	An ordered set (a collection of elements where the listing order is relevant) is denoted using round brackets.
\end{ntn}
We extend the notions of element and subset of an unordered set to ordered sets in the natural way, so that we can write
$3\in (1,2,3,4,5)$ and $(1,2,4)\subseteq (5,4,3,2,1)$

For vectors and matrices, we use an index subset notation.
\begin{ntn}[Index subset notation for vectors]
	Let
	$\genericvector
	:=
	(\genericvector_{\genericvectorindexstart}^{\transposesymbol}
		,...,
		\genericvector_{\genericvectorindexend}^{\transposesymbol})^{\transposesymbol}$
	be a vector defined by
	$\genericvectordimension$
	subvectors,
	$\genericvector_{\genericvectorindexstart},...,\genericvector_{\genericvectorindexend}$,
	and let
	$\genericvectorindexsubset
	:=
	(\genericvectorindexsubsetindexstart,...,\genericvectorindexsubsetindexend)$
	be an ordered set of integers in
	$\left\{\genericvectorindexstart,...,\genericvectorindexend\right\}$.
	We denote by
	$\genericvector_{\genericvectorindexsubset}
	:=
	(\genericvector_{\genericvectorindexsubsetindexstart}^{\transposesymbol}
	,...,
	\genericvector_{\genericvectorindexsubsetindexend}^{\transposesymbol})^{\transposesymbol}$,
	the vector obtained by considering the subvectors in
	$\genericvector$
	indexed by
	$(\genericvectorindexsubsetindexstart,...,\genericvectorindexsubsetindexend)$.
\end{ntn}
\begin{ntn}[Index subset notation for matrices]
	Let
	$\genericmatrix$
	be a matrix with a
	$\genericmatrixdimensionA \times \genericmatrixdimensionB$
	block structure and let
	$\genericvectorindexsubset
	:=
	(\genericvectorindexsubsetindexstart,...,\genericvectorindexsubsetindexend)$
	and
	$\genericvectorindexsubsetalt
	:=
	(\genericvectorindexsubsetaltindexstart,...,\genericvectorindexsubsetaltindexend)$
	be two ordered sets of integers in
	$\left\{\genericvectorindexstart,...,\genericvectorindexend\right\}$.
	We denote by
	$\genericmatrix_{\genericvectorindexsubsetalt \genericvectorindexsubset}$
	the submatrix of
	$\genericmatrix$
	obtained by considering the row blocks indexed by
	$(\genericvectorindexsubsetaltindexstart,...,\genericvectorindexsubsetaltindexend)$.
	and the column blocks indexed by
	$(\genericvectorindexsubsetindexstart,...,\genericvectorindexsubsetindexend)$.
\end{ntn}

\section{Preliminaries}\label{sec:preliminary}
In this section we recall some basic concept of graph theory, including the notion of $d$-separation, and define the class of Linear Dynamic Influence Models, derived from the notion of Dynamical Structural Function (DSF) introduced in \cite{GonWar08}.
\subsection{Basic notions of graph theory}
In this section we want to recall some fundamental notions of graph theory which are functional to the subsequent development.
We define directed graphs and their restrictions with respect to subsets of nodes (see also \cite{Die06}).
\begin{dfn}[Directed Graphs and Restrictions]
	A {\it directed graph}
	$\graph$
	is a pair
	$(\vertexset,\directededgeset)$
	where
	$\vertexset$
	is a set of vertices (or nodes) and
	$\directededgeset$
	is a set of edges (or arcs) which are ordered pairs of elements of
	$\vertexset$.
	The restriction of
	$\graph$
	to the node set
	$\vertexset'\subseteq \vertexset$
	is the graph
	$\graph'=(\vertexset ',\directededgeset\,')$
	where
	$\directededgeset\,'=
	\{
		(\vertex_{\vertexindex},\vertex_{\vertexindexalt})\in \directededgeset\,|\,
		\vertex_{\vertexindex}\in \vertexset'
		\,\text{and}\,
		\vertex_{\vertexindexalt}\in \vertexset'
	\}$.
	{$\square$}
\end{dfn}
\noindent Figure~\ref{fig:genericgraph}(a) shows a directed graph and Figure~\ref{fig:genericgraph}(b) shows its restriction to a subset of its nodes.
\begin{figure}[h!]
	\centering
	\begin{tabular}{cc}
		\includegraphics[width=0.35\columnwidth]{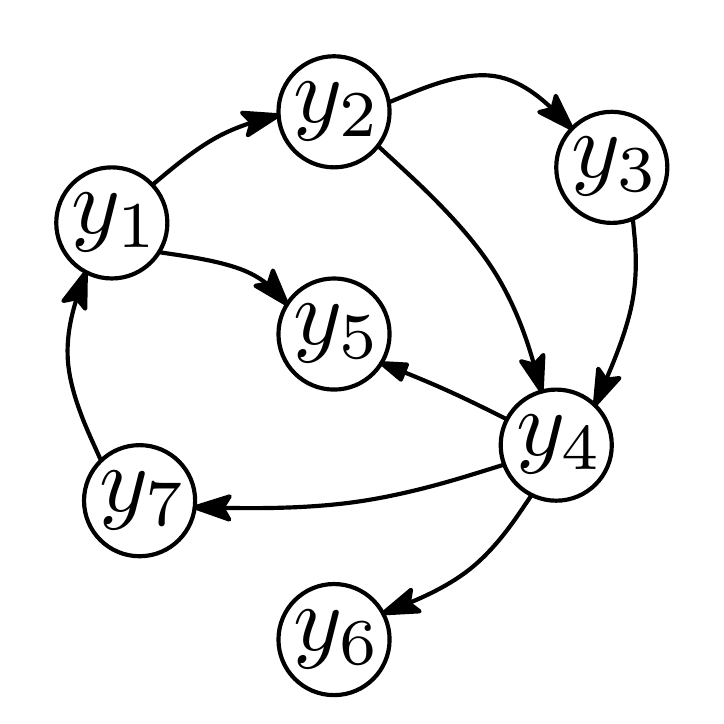} &
		\includegraphics[width=0.35\columnwidth]{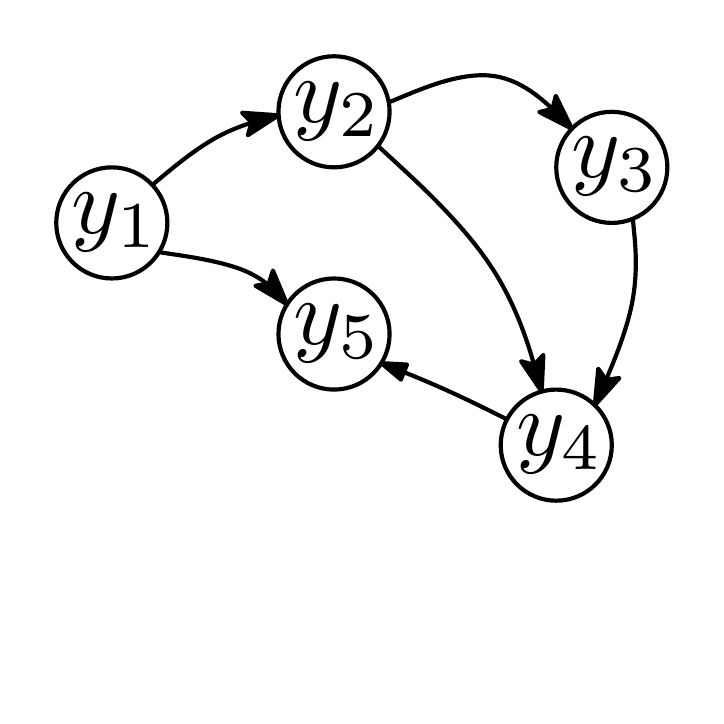}\\
		(a) & (b)
	\end{tabular}
	\caption{A directed graph (a) and its restriction to the nodes $\node_{1}$, $\node_{2}$, $\node_{3}$,$\node_{4}$, and $\node_{5}$ (b).
	\label{fig:genericgraph}
	}
\end{figure}

Self-loops are edges that start and terminate on the same node.
\begin{dfn}[Self-loops]
	A directed graph
	$\graph=(\vertexset,\directededgeset)$
	has a self-loop on node
	$\vertex_{\vertexindex}\in\vertexset$
	if
	$(\vertex_{\vertexindex},\vertex_{\vertexindex})\in \directededgeset$.
\end{dfn}

\noindent Chains and paths are fundamental concepts in graphs. 
\begin{dfn}[Paths, Chains]
	Consider a directed graph
	$\graph=(\vertexset, \directededgeset)$
	with vertices
	$\vertex_{1},...,\vertex_{\vertexnumber}$.
	A {\it chain} starting from
	$\vertex_{\vertexindex}$
	and ending in
	$\vertex_{\vertexindexalt}$
	is an ordered sequence of distinct edges in
	$\directededgeset$
	\begin{align*}
		(\,(\vertex_{\path_{1}}, \vertex_{\path_{2}}), (\vertex_{\path_{2}}, \vertex_{\path_{3}}), ...\,,(\vertex_{\path_{\pathlength-1}},\vertex_{\path_{\pathlength}})\,)
	\end{align*}
	where
	$\vertex_{\vertexindex}=\vertex_{\path_{1}}$, 
	$\vertex_{\vertexindexalt}=\vertex_{\path_{\pathlength}}$,
	and
	$(\vertex_{\path_{\pathnodeindex}},\vertex_{\path_{\pathnodeindex+1}})\in \directededgeset$
	for all
	$\pathnodeindex={\pathindexstart},...,{\pathindexendminusone}$. 
	A {\it path} between two vertices
	$\vertex_{\vertexindex}$
	and
	$\vertex_{\vertexindexalt}$
	is an ordered sequence of distinct ordered pairs of nodes
	\begin{align*}
		(\,(\vertex_{\path_{\pathindexstart}}, \vertex_{\path_{2}}), (\vertex_{\path_{2}}, \vertex_{\path_{3}}), ...\,,(\vertex_{\path_{\pathindexend-1}},\vertex_{\path_{\pathindexend}})\,)
	\end{align*}
		where
	$\vertex_{\vertexindex}=\vertex_{\path_{\pathindexstart}}$,
	$\vertex_{\vertexindexalt}=\vertex_{\path_{\pathindexend}}$,
	and
	either 
	$(\vertex_{\path_{\pathnodeindex}},\vertex_{\path_{\pathnodeindex+1}})\in \directededgeset$
	or
	$(\vertex_{\path_{\pathnodeindex+1}},\vertex_{\path_{\pathnodeindex}})\in \directededgeset$
	for all
	$\pathnodeindex={\pathindexstart},...,{\pathindexendminusone}$. 
	 We refer to the nodes $\vertex_{\path_{2}},...,\vertex_{\path_{\pathindexendminusone}}$ as internal nodes of a chain or path.
	 {$\square$}
\end{dfn}
As it follows from the definition, chains are a special case of paths.
All paths (and consequently also all chains) can be suggestively denoted by separating the nodes in the sequence $\{\vertex_{\path_{\pathnodeindex}}\}_{\pathnodeindex=\pathindexstart}^{\pathindexend}$
with the arrow symbol
$\rightarrow$
if 
$(\vertex_{\path_{\pathnodeindex-1}},\vertex_{\path_{\pathnodeindex}})\in \directededgeset$
or the symbol
$\leftarrow$
if 
$(\vertex_{\path_{\pathnodeindex}},\vertex_{\path_{\pathnodeindex-1}})\in \directededgeset$.
For example, in Figure~\ref{fig:genericgraph}(a), the path
$\node_{1}\rightarrow \node_{2} \rightarrow \node_{4} \rightarrow \node_{6}$
is also a chain, while 
$\node_{1}\rightarrow \node_{5} \leftarrow \node_{4} \leftarrow \node_{3}$
is a path, but not a chain.
Here we also say that, there is a chain from vertex
$\node_{1}$
to vertex
$\node_{6}$
in the graph.

From the concept of chain, we can derive the notions of ancestry and descendance, which are common in the theory of bayesian networks (see also \cite{Pea88}).

\begin{dfn}[Parents, children, ancestors, descendants]
	Consider a directed graph $\graph=(\vertexset,\directededgeset)$.
	A vertex
	$\vertex_{\vertexindex}$
	is a {\it parent} of a vertex
	$\vertex_{\vertexindexalt}$
	if there is a directed edge from
	$\vertex_{\vertexindex}$
	to
	$\vertex_{\vertexindexalt}$.
	In such a case
	$\vertex_{\vertexindexalt}$
	is a {\it child} of
	$\vertex_{\vertexindex}$.
	Also
	$\vertex_{\vertexindex}$
	is an  {\it ancestor} of
	$\vertex_{\vertexindexalt}$
	if
	$\vertex_{\vertexindex}=\vertex_{\vertexindexalt}$
	or if there is a chain from
	$\vertex_{\vertexindex}$
	to
	$\vertex_{\vertexindexalt}$.
	In such a case $\vertex_{\vertexindexalt}$
	is a {\it descendant} of $\vertex_{\vertexindex}$.
	Given a set $\subsetvertex \subseteq \vertexset$, we define the following sets
	\small
	\begin{align*}
		& \parent{\graph}{\subsetvertex}
		:=
		\left\{\vertexalt\in \vertexset \,|\,
			 \exists \,\vertex\in\subsetvertex:
			 \vertexalt
			 \text{ is a parent of }
			 \vertex
		\right\},\\
		& \child{\graph}{\subsetvertex}
		:=
		\left\{\vertexalt\in \vertexset \,|\,
			 \exists \,\vertex\in\subsetvertex:
			 \vertexalt
			 \text{ is a child of }
			 \vertex
		\right\},\\
		& \ancestor{\graph}{\subsetvertex}
		:=
		\left\{\vertexalt\in \vertexset \,|\,
			 \exists \,\vertex\in\subsetvertex:
			 \vertexalt
			 \text{ is an ancestor of }
			 \vertex
		\right\}, \text{ and}\\
		& \descendant{\graph}{\subsetvertex}
		:=
		\left\{\vertexalt\in \vertexset \,|\,
			 \exists \,\vertex\in\subsetvertex:
			 \vertexalt
			 \text{ is a descendant of }
			 \vertex
		\right\}.
	\end{align*}
	\normalsize
\end{dfn}
On a path we define forks and colliders (see again \cite{Pea88}).
\begin{dfn}[Forks and colliders]
	A path involving the nodes
	$\vertex_{\path_{\pathindexstart}},...,\vertex_{\path_{\pathindexend}}$
	has a {\it fork} at
	$\vertex_{\path_{\pathnodeindex}}$,
	for $\pathindexstart<p<\pathindexend$,
	if
	$\vertex_{\path_{\pathnodeindex-1}}$
	and
	$\vertex_{\path_{\pathnodeindex+1}}$
	are both children of
	$\vertex_{\path_{\pathnodeindex}}$
	(that is
	$\vertex_{\path_{\pathnodeindex-1}}
	\leftarrow 
	\vertex_{\path_{\pathnodeindex}}
	\rightarrow
	\vertex_{\path_{\pathnodeindex+1}}$
	appears in the path).
	A path has an {\it inverted fork} (or a collider) at
	$\vertex_{\path_{\pathnodeindex}}$
	if
	$\vertex_{\path_{\pathnodeindex-1}}$
	and
	$\vertex_{\path_{\pathnodeindex+1}}$
	are both parents of
	$\vertex_{\path_{\pathnodeindex}}$
	(that is
	$\vertex_{\path_{\pathnodeindex-1}}
	\rightarrow 
	\vertex_{\path_{\pathnodeindex}}
	\leftarrow
	\vertex_{\path_{\pathnodeindex+1}}$
	appears in the path).
\end{dfn}
The following definition introduces a notion of separation on subsets of vertices in a directed graph \cite{Pea88}.
\begin{dfn}[$d$-separation and $d$-connection]
	Consider a directed graph
	$\graph=(\vertexset,\directededgeset)$
	and three mutually disjoint sets of vertices
	$\subsetvertex,\subsetvertexsep,\subsetvertexalt\subseteq \vertexset$.
	The set
	$\subsetvertexsep$
	is said to
	$d$-separate
	$\subsetvertex$
	and
	$\subsetvertexalt$
	if for every
	$\vertex_{\vertexindex}\in \subsetvertex$
	and
	$\vertex_{\vertexindexalt}\in \subsetvertexalt$
	all paths between
	$\vertex_{\vertexindex}$ and $\vertex_{\vertexindexalt}$
	meet at least one of the following conditions:
	\begin{enumerate}
		\item	the path contains a node
				$\vertex_{\vertexindexsep}\in \subsetvertexsep$
				that is not a collider
		\item	the path contains a collider
				$\vertex_{\vertexindexaltb}$
				such that neither
				$\vertex_{\vertexindexaltb}$
				nor its descendants belong to
				$\subsetvertexsep$.
	\end{enumerate}
	If
	$\subsetvertexsep$
	$d$-separates
	$\subsetvertex$
	and
	$\subsetvertexalt$
	in the graph 
	$\graph$,
	we write
	$\dsep{\graph}(\subsetvertex,\subsetvertexsep,\subsetvertexalt)$.
	Otherwise we write
	$\neg\dsep{\graph}(\subsetvertex,\subsetvertexsep,\subsetvertexalt)$
	and say that
	$\subsetvertexsep$
	$d$-connects
	$\subsetvertex$
	and
	$\subsetvertexalt$.
\end{dfn}
Since the notion of $d$-separation plays a central role in the developments of this article we illustrate it with some examples.
\begin{example}[Examples of $d$-separation]
	\normalsize
	In the graph
	$\graph$
	of Figure~\ref{fig:genericgraph}(a)
	we have that
	$\dsep{\graph}(\node_{3},\{\node_{4}\},\node_{6})$.
	Also we have that 
	$\neg\dsep{\graph}(\node_{4},\{\node_{2},\node_{5},\node_{7}\},\node_{1})$
	because
	$\node_{5}$
	is a collider on the path
	$\node_{4}\rightarrow\node_{5}\leftarrow\node_{1}$.
	The smallest set that $d$-separates $\node_{1}$ and $\node_{4}$ is $\{\node_{2},\node_{7}\}$, but it also holds that 
	$\dsep{\graph}(\node_{4},\{\node_{2},\node_{3},\node_{7}\},\node_{1})$,
	$\dsep{\graph}(\node_{4},\{\node_{2},\node_{6},\node_{7}\},\node_{1})$, and
	$\dsep{\graph}(\node_{4},\{\node_{2},\node_{3},\node_{6},\node_{7}\},\node_{1})$.
	We have that 
	$\neg\dsep{\graph}(\node_{2},\{\node_{1},\node_{6}\},\node_{3})$
	because 
	$\node_{6}$ is a descendant of $\node_{4}$ which is a collider in the path
	$\node_{2}\rightarrow\node_{4}\leftarrow\node_{3}$.
\end{example}

\subsection{Wiener Filter and Wiener separation}
We introduce a class of linear operators to transform sets of rationally related random processes.
\begin{dfn}
	The set $\mathcal{F}$ is defined as the set of real-rational transfer functions that are analytic and invertible on the unit circle $\{\ZTvar\in\ComplexField|~~|\ZTvar|=1\}$.
	Given a transfer function $\DSFInputMatrix(\ZTvar)\in\mathcal{F}$, it can be uniquely represented in the time domain by a bi-infinite sequence $p_k$ (the impulse response of $\DSFInputMatrix(\ZTvar)$) satisfying
	\begin{align}
		\DSFInputMatrix(\ZTvar)=\sum_{k=-\infty}^{\infty}p_k \ZTvar^{-k}
	\end{align}
	for all $|\ZTvar|=1$. 
	If $k< 0$ implies $p_{k}=0$, then we say that the transfer function is causal. We define the space of causal transfer functions as $\mathcal{F}^{+}$.
	If $k\leq 0$ implies $p_{k}=0$, then we say that the transfer function is strictly causal.
\end{dfn}
We also define the following notation.
\begin{dfn} We define $\node = \DSFInputMatrix(\ZTvar)\inputsignal$ as the process obtained by computing the convolution of $\inputsignal$ with the impulse response of $\DSFInputMatrix(\ZTvar)$, namely $\node(t) = \sum_{-\infty}^{\infty}{p_{t-k}\inputsignal(k)}$ for all $t \in \mathbb{Z}$. \end{dfn}
For example, $\frac{1}{\ZTvar}\inputsignal$ denotes the process $\inputsignal$ delayed by one time step.
For a set of processes  $S=\{\node_{1},...,\node_{n}\}$, we denote by $\DSFInputMatrix(\ZTvar)S$ the set $\{\DSFInputMatrix(\ZTvar)\node_{1},...,\DSFInputMatrix(\ZTvar)\node_{n}\}$. For example $\frac{1}{\ZTvar}S$ denotes the set of all the processes in $S$ delayed by one time step.  
The defined linear operators can be used to define the following spaces of stochastic processes.
\begin{dfn}[Transfer Function Spaces and Spans]
	For a finite number of wide-sense stationary signals
	$\node_{\nodeindexstart},...,\node_{\nodeindexend}$,
	the rational transfer function span (tfspan) and the rational causal transfer span (ctfspan) are defined as
	\small
	\begin{align*}
		\text{tfspan}\{\node_{\nodeindexstart},...,\node_{\nodeindexend}\}
		&:=
		\left\{q=\sum_{\nodeindex=\nodeindexstart}^{\nodeindexend} \DSFInputMatrix_{i}(\ZTvar)
			\node_{\nodeindex}
		~\lvert~\DSFInputMatrix_{i}(\ZTvar)\in\mathcal{F}
		\right\}\\
        \text{ctfspan}\{\node_{\nodeindexstart},...,\node_{\nodeindexend}\}
        &:=
		\left\{q=\sum_{\nodeindex=\nodeindexstart}^{\nodeindexend} \DSFInputMatrix_{i}(\ZTvar)
			\node_{\nodeindex}
		~\lvert~\DSFInputMatrix_{i}(\ZTvar)\in\mathcal{F}^{+}
		\right\}.
	\end{align*}
	\normalsize
	where the variable $\ZTvar$ is related to a discrete time-shift operator.
\end{dfn}
We provide a specific formulation for Wiener filter that covers both non-causal and causal estimators.
\begin{prop}[Wiener Filtering for processes with rational Power Spectral Density]\label{prop: my wiener}
	Let
	$\nodealt$
	and
	$\node_{\nodeindexstart},...,\node_{\nodeindexend}$
	be  vector processes that are jointly stationary
	with rational power cross-spectral densities and rational auto-spectral densities.
	Define
	$\node:=(\node_{\nodeindexstart}^{T},...,\node_{\nodeindexend}^{T})^T$
	and
	\begin{align*}
		\TFspanspace
		&:=
		\text{tfspan}\{\node_{\nodeindexstart},...,\node_{\nodeindexend}\}\\
		\CTFspanspace
		&:=
		\text{ctfspan}\{\node_{\nodeindexstart},...,\node_{\nodeindexend}\}.
	\end{align*}
	Consider the problems
	\begin{align*}
		\inf_{q \in \TFspanspace} \E[(\nodealt-q)^{T}(\nodealt-q)] 
		\quad \text{and}
		\inf_{q \in \CTFspanspace} \E[(\nodealt-q)^{T}(\nodealt-q)] 
	\end{align*}
	where,
	$\E[\cdot],$
	is the expectation operator.
	Then the solutions for both problems exist 
	are unique and given respectively by
	$\wienerprojection{\nodealt}{\node}=W_{\nodealt|\node}(z)\node$
	and
	$\wienerprojection{\nodealt}{\node}^{C}=W^{C}_{\nodealt|\node}(\ZTvar)\node$
	where,
    if
	$\PSD_{\node}(e^{i\w})\succ 0$,
	for
	$\w\in[-\pi,\pi]$,
    the transfer function $W_{\nodealt|\node}(z)$ and $W^{C}_{\nodealt|\node}(z)$ are unique and referred to as Wiener filter and causal Wiener filter, respectively.
	Moreover, $\wienerprojection{\nodealt}{\node}$ is the only element in $\TFspanspace$ such that, for any $q\in \TFspanspace$,
	\begin{align}\label{eq:perp hilbert projection thm}
		\E[(\nodealt-\wienerprojection{\nodealt}{\node})q]=0
	\end{align}
	and
	$\wienerprojection{\nodealt}{\node}^{C}$ is the only element in $\CTFspanspace$ such that, for any $q\in \CTFspanspace$,
	\begin{align}\label{eq:perp hilbert projection thm causal}
		\E[(\nodealt-\wienerprojection{\nodealt}{\node}^{C})q]=0.
	\end{align}
\end{prop}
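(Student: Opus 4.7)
My plan is to cast both optimization problems as orthogonal projections in a Hilbert space of jointly stationary vector processes and then produce the minimizers via explicit frequency-domain formulas, exploiting the rationality of the spectra. Equip the space of zero-mean jointly stationary processes with finite second moments with the inner product $\langle u,v\rangle := \E[u^{\transposesymbol}v]$, which is independent of time by stationarity and makes both $\TFspanspace$ and $\CTFspanspace$ linear subspaces; this recasts the two problems as orthogonal projections onto these subspaces. By the Hilbert space projection theorem it then suffices to exhibit an element $\wienerprojection{\nodealt}{\node} \in \TFspanspace$ satisfying \eqref{eq:perp hilbert projection thm} and an element $\wienerprojection{\nodealt}{\node}^{C} \in \CTFspanspace$ satisfying \eqref{eq:perp hilbert projection thm causal}, since any such element is automatically the unique best approximation in its class.

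For the non-causal problem, I parametrize any $q \in \TFspanspace$ as $q = \DSFInputMatrix(\ZTvar)\node$ with $\DSFInputMatrix$ a block row of transfer functions in $\mathcal{F}$, and use Parseval to rewrite the cost as
\begin{align*}
\frac{1}{2\pi}\int_{-\pi}^{\pi} \mathrm{tr}\bigl[\PSD_{\nodealt}
- \DSFInputMatrix\, \PSD_{\node\nodealt}
- \PSD_{\nodealt\node}\, \DSFInputMatrix^{*}
+ \DSFInputMatrix\, \PSD_{\node}\, \DSFInputMatrix^{*}\bigr]\,d\w,
\end{align*}
with all quantities evaluated at $e^{i\w}$. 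Completing the square frequency by frequency, and using $\PSD_{\node}(e^{i\w}) \succ 0$, yields the pointwise minimizer $\DSFInputMatrix(e^{i\w}) = \PSD_{\nodealt\node}(e^{i\w})\PSD_{\node}^{-1}(e^{i\w})$; rationality of $\PSD_{\node}$ together with analyticity of its inverse on the unit circle forces the entries of $\DSFInputMatrix$ into $\mathcal{F}$, giving an honest element of $\TFspanspace$ that I take to be $W_{\nodealt|\node}(\ZTvar)\node$. For the causal problem I invoke rational spectral factorization: there exist $L \in \mathcal{F}^{+}$ with $L^{-1} \in \mathcal{F}^{+}$ and $\PSD_{\node} = L L^{*}$, so after whitening via $\tilde\node := L^{-1}\node$ (whose spectral density is the identity) the optimal causal filter is the classical Wiener--Hopf projection $\bigl[\PSD_{\nodealt\node}L^{-*}\bigr]_{+}L^{-1}$, where $[\,\cdot\,]_{+}$ extracts the causal part of a Laurent expansion on the unit circle, and rationality of each factor keeps the product in $\mathcal{F}^{+}$.

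It remains to verify the orthogonality conditions and the uniqueness of the underlying transfer functions. For any $q = R(\ZTvar)\node$ with $R \in \mathcal{F}$, a Parseval computation reduces the inner product of $\nodealt - \wienerprojection{\nodealt}{\node}$ with $q$ to the integral of $\bigl(\PSD_{\nodealt\node} - W_{\nodealt|\node}\,\PSD_{\node}\bigr)R^{*}$ around the unit circle, which vanishes by construction of $W_{\nodealt|\node}$; the causal case is analogous once $R$ is restricted to $\mathcal{F}^{+}$ and one uses that extraction of the causal part is the adjoint of the inclusion of causal factors. Uniqueness of the projections as processes then follows from the Hilbert space projection theorem, and the injectivity of the map $\DSFInputMatrix \mapsto \DSFInputMatrix\node$ from rational filters into $\TFspanspace$ (respectively $\CTFspanspace$), guaranteed by $\PSD_{\node} \succ 0$ on the unit circle, promotes this to uniqueness of $W_{\nodealt|\node}$ and $W^{C}_{\nodealt|\node}$ themselves. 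I expect the main technical point to be ensuring that the constructed optima actually lie in the rational classes $\mathcal{F}$ and $\mathcal{F}^{+}$ rather than merely in their $L^{2}$ closures; this is precisely where $\PSD_{\node}(e^{i\w}) \succ 0$ is essential, as it prevents pole/zero cancellations from destroying analyticity of $\PSD_{\node}^{-1}$ or invertibility of the causal spectral factor $L$.
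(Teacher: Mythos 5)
Your argument is correct and is precisely the standard Wiener-filtering derivation (Hilbert-space orthogonal projection, frequency-domain completion of the square for the non-causal case, and minimum-phase spectral factorization with the Wiener--Hopf causal-part extraction for the causal case) that the paper's proof invokes in one line by citing \cite{KaiSay00}. Your explicit attention to why the constructed minimizers land in the rational classes $\mathcal{F}$ and $\mathcal{F}^{+}$ rather than merely in their $L^{2}$ closures, using $\PSD_{\node}(e^{i\w})\succ 0$, fills in a point the paper leaves implicit but does not change the approach.
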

\begin{proof}
    The result is a direct consequence of standard Wiener filter theory. Rational power spectal densities guarantee the existence of a spectral factorization. See \cite{KaiSay00}.
\end{proof}
\begin{dfn}
	Let
	$\nodealt$
	and
	$\node=(\node_{\nodeindexstart}^{T}|...|\node_{\nodeindexend}^{T})^T$
	be jointly stationary processes.
    Let
	$\Wiener{\nodealt}{\node}(z)\node$
	and
	$\CWiener{\nodealt}{\node}(z)\node$
	be respectively the non-causal and causal Wiener estimates of 
	$\nodealt$
	from
	$\node$.
	We denote the component of the Wiener filter operating on $\node_{\nodeindex}$ as
	$\Wiener[\node_{\nodeindex}]{\nodealt}{y}$, namely
	\begin{align*}
		\Wiener{\nodealt}{y}&=
			\left(\Wiener[\node_{\nodeindexstart}]{\nodealt}{y},
				\cdots, \Wiener[\node_{\nodeindexend}]{\nodealt}{y} \right)\\
		\CWiener{\nodealt}{y}&=
			\left(\CWiener[\node_{\nodeindexstart}]{\nodealt}{y},
				\cdots, \CWiener[\node_{\nodeindexend}]{\nodealt}{y} \right).
	\end{align*}
	As an extension, given an ordered subset
	$\subsetnode$
	of components of
	$\node$,
	we use the notation 
	$\Wiener[\subsetnode]{\nodealt}{y}$ 
	(or $\CWiener[\subsetnode]{\nodealt}{y}$)
	to denote the vector of components of the non-causal Wiener filter (or causal Wiener filter) associated with the vector of signals
	$\subsetnode$
	in the estimation of 
	$\nodealt$.
\end{dfn}
\begin{dfn}[Wiener separation]
	Let
	$\subsetnode$,
	$\subsetnodealt$
	and
	$\subsetnodesep$
	be disjoint subsets in a set of jointly stationary random processes.
	We say that
	$\subsetnode$
	is (non-causally) Wiener-separated from
	$\subsetnodealt$
	by
	$\subsetnodesep$
	if the Wiener estimate of
	$\subsetnodealt$
	from
	$\subsetnode$
	and the Wiener estimate of 
	$\subsetnodealt$
	from
	$\subsetnode\cup \subsetnodesep$ are identical, namely:
	\begin{align*}
		\Wiener{\subsetnodealt}{(\subsetnode,\subsetnodesep)}(\ZTvar)
		\left(\begin{array}{c}
			\subsetnode\\\subsetnodesep
		\end{array}\right)
		=
		\Wiener{\subsetnodealt}{\subsetnodesep}(\ZTvar)\,
		\subsetnodesep.
	\end{align*}
	We denote this relation as
	$wsep(\subsetnode,\subsetnodesep,\subsetnodealt)$.
	An equivalent way of expressing $wsep(\subsetnode,\subsetnodesep,\subsetnodealt)$ is by saying that all components of the Wiener filter associated with $\subsetnode$ vanish, namely:
	\begin{align*}
		\Wiener[\subsetnode]{\subsetnodealt}{(\subsetnode,\subsetnodesep)}=0.
	\end{align*}
	Analogously, we use the notation
    $cwsep(\subsetnode,\subsetnodesep,\subsetnodealt)$
    to denote that
		$\CWiener[\subsetnode]{\subsetnodealt}{(\subsetnode,\subsetnodesep)}(\ZTvar)=0$.
\end{dfn}

\section{Linear Dynamic Influence Models and their graphical representations}\label{sec:LDMI}
In the last decade several network models have emerged, such as Dynamical Structural Function (DSF) models \cite{GonWar08}, Linear Dynamic Graphs (LDGs) \cite{MatInn09b,MatSal12} and the dynamic interconnection proposed in \cite{dankers2012dynamic}.
These models share very similar characteristics, the main one being that they can be interpreted via graphs in a way analogous to Signal Flow Graphs \cite{shannon1942theory}.
While in their original formulations these models were not exactly equivalent, subsequent generalizations have made more and more similar (i.e. DSF was originally considering only strictly causal transfer functions, while LDGs was considering non-strictly causal transfer functions from the beginning).
In this article, we consider a specific class of dynamic networks defined by input-output relations among wide-sense stationary stochastic processes following the Dynamical Structural Function (DSF) approach proposed in \cite{GonWar08}, even though these results can be seamlessly applied to the network models in \cite{MatSal12} or \cite{dankers2012dynamic}.

We remind that DSF is an input/output representation of a network in the form
\begin{align*}
    \node=\DSFOutputMatrix(\ZTvar)\node+\DSFInputMatrix(\ZTvar)\inputsignal
\end{align*}
where $\node$ is a vector of outputs associated with the nodes of the network and $\inputsignal$ is the vector of inputs forcing it. The pair $(\DSFOutputMatrix(\ZTvar),\DSFInputMatrix(\ZTvar))$ defines the input-output relation in the model:
the matrix $\DSFOutputMatrix$ represents how the nodes are interconnected and affect each other, while the matrix 
$\DSFInputMatrix$ represents how the inputs are influencing the nodes. In DSF representation the number of forcing inputs and the number of output nodes can be different.
Also, DSF considers in general signals which can be either stochastic and deterministic.

In this article we specifically consider scenarios where the forcing signals in $\inputsignal$ are not observable and can be modeled as wide-sense-stationary stochastic processes.
The forcing inputs are assumed as mutually independent and act separately on each node of the network leading to a block diagonal structure for $\DSFInputMatrix(\ZTvar)$, which is assumed stable.
In the context of this article, since the forcing inputs are independent and not observable, there is no loss of generality, by defining $\noise=\DSFInputMatrix(\ZTvar)\inputsignal$,
and consider $\noise$ as the vector of inputs directly exciting the network.
In other words, the dynamics in $\DSFInputMatrix(z)$ can be incorporated in the power spectral density of
$\noise$,
$\PSD_{\noise}(\ZTvar)=\DSFInputMatrix(\ZTvar)\PSD_{\inputsignal}(\ZTvar)\DSFInputMatrix^{*}(\ZTvar)$ which has the same diagonal structure of $\PSD_{\inputsignal}(\ZTvar)$.
This motivates the following definition.

\begin{dfn}[Linear Dynamic Influence Models]\label{dfn:LDIM}
    Consider a network following the DSF dynamics
    $\node=\DSFOutputMatrix(\ZTvar)\node+\DSFInputMatrix(\ZTvar)\inputsignal$,
    as described in \cite{GonWar08} where
    \begin{itemize}
	\item 
        $\node
		=
		(\node_{\nodeindexstart}^{\transposesymbol}
			,...,
			\node_{\nodeindexend}^{\transposesymbol})^{\transposesymbol}$ are the $\nodenumber$ output nodes;
        \item
        $\inputsignal
		=
		(\inputsignal_{\noiseindexstart}^{\transposesymbol}
			,...,
			\inputsignal_{\noiseindexend}^{\transposesymbol})^{\transposesymbol}$
        are the $\nodenumber$ wide sense stationary inputs;
        \item   $\DSFOutputMatrix(\ZTvar)\in\mathcal{F}^{\nodenumber\times\nodenumber}$ and $\DSFInputMatrix(\ZTvar)\in (\mathcal{F^{+}})^{\nodenumber\times\nodenumber}$
        are stable $(\nodenumber\times\nodenumber)$-block matrices;
        \item	$\PSD_{\inputsignal}(\ZTvar)=\identitymatrix$ and
        $\DSFInputMatrix(\ZTvar)$ is block diagonal.
    \end{itemize}
    Define $\noise=\DSFInputMatrix(\ZTvar)\inputsignal$, so that $\PSD_{\noise}(\ZTvar)$ is block-diagonal and the dynamics can be written as $\node=\DSFOutputMatrix(\ZTvar)\node+\noise$.
    We say that the DSF network is a LDIM and we denote it with the pair $(\DSFOutputMatrix(\ZTvar),\noise)$.
    
	We say that the LDIM is causal if all entries of
	$\DSFOutputMatrix(\ZTvar)$
	are causal.
	We say that the LDIM is well-posed (causally well-posed) if
	$(\identitymatrix-\DSFOutputMatrix_{\subsetnodeindex\subsetnodeindex}(\ZTvar))$
	is invertible (causally invertible) for all $\subsetnodeindex\subseteq \{1,...,\nodenumber\}$ with no poles on the unit circle.
	The well-posedness (causal well-posedness) of a LDIM guarantees that
	$\node$ is wide-sense stationary.
\end{dfn}
Observe that the concept of causality is often a requirement for standard well-posedness (see \cite{willems1971analysis}).
Here, the dynamic matrix $\DSFOutputMatrix(\ZTvar)$ determining a LDIM is assumed not necessarily causal (unless the LDIM is causal): the motivation behind this choice is to allow the general application of the article results, potentially, to other domains where transfer functions might not necessarily be causal (i.e. video or image processing). This is the reason behind the distinct notions of well-posedness and causal well-posedness in Definition~\ref{dfn:LDIM}.
For a more extensive discussion on the well-posedness of DSF structure we refer to \cite{woodbury2018dynamic}

\begin{dfn}[Graphical representations of a LDIM and its associated perfect directed  graph (\pdg)]
	Given a LDIM,
	$(\adjTF(\ZTvar),\noise),$
	with output processes
	$\node
	=
	(\node_{\nodeindexstart}^{T}|...|\node_{\nodeindexend}^{T})^{T}$
	and a directed graph
	$\graph=(\nodeset, \directededgeset)$
	where
	$\nodeset
	=
	\{\node_{\nodeindexstart}, ..., \node_{\nodeindexend}\}$
	we say that 
	$\graph$
	is a graphical representation of the LDIM
	if
	$(\node_{\nodeindex}, \node_{\nodeindexalt})\notin\directededgeset$
	implies
	$\adjTF_{\nodeindexalt\nodeindex}(\ZTvar)= 0$.
	If it also holds that
	$\adjTF_{\nodeindexalt\nodeindex}(\ZTvar)= 0$
	implies
	$(\node_{\nodeindex}, \node_{\nodeindexalt})\notin\directededgeset$,
	then we say that
	$\graph$
	is the (unique) Perfect Directed Graph (\pdg) associated  with the LDIM.
\end{dfn}
Abusing the nomenclature we will sometimes refer to nodes, edges, paths and chains of a LDIM even though, formally, we should refer to them as nodes, edges, paths and chains of its graphical representation or its \pdg.

\section{Problem formulations: irrelevance of measurements and identification of a transfer function under graph constraints}\label{sec:problem}
Given a LDIM, we consider two basic problems.
We refer to the first problem as irrelevance of a measurement.
\begin{prob}[Irrelevance of a node in an estimate]\label{prob:irrelevance}
	Given a LDIM with graphical representation $G$, determine if the least square estimate of a node $\node_{\nodeindexalt}$ from the signals $\node_{S}$ and $\node_{\nodeindex}$ is equivalent to the estimate of $\node_{\nodeindexalt}$ from $y_{S}$ only.
\end{prob}

In other terms, Problem~\ref{prob:irrelevance} boils down to determining that the signal $\node_{\nodeindex}$ is irrelevant for the estimate of $\node_{\nodeindexalt}$ when $\node_{S}$ is available.
Observe that Problem~\ref{prob:irrelevance} is formulated assuming only knowledge of the graphical representation, thus, interestingly, the solution is going to be based purely on graphical conditions and holds for all LDIMs with the same structure.

The second basic problem is about the identification of a transfer function under graph constraints via non-invasive observations.
\begin{prob}[Module identification from partial knowledge of the structure and partial observation of the outputs]
	Consider a LDIM $\ldim$ with nodes $y_{1},...,y_{n}$ and forcing signals $\noise_{1},...,\noise_{n}$. Assume that
	\begin{itemize}
		\item	the dynamics is given by $\node=\DSFOutputMatrix(\ZTvar)\node+\noise$ with unknown $\DSFOutputMatrix(\ZTvar)$;
		\item	the graph $\graph$ is a known graphical representation for $\ldim$;
		\item	only a prespecified subset of the signals
		$\node_{1},...,\node_{\nodenumber}$ is observable (namely a minor of the PSD $\Phi_{\node}(\ZTvar)$ is known)
		\item	the forcing signals $\noise_{1},...,\noise_{n}$ are not observable.
	\end{itemize}
	For fixed
	$\nodeindex,\nodeindexalt\in \{1,...,\nodenumber\}$, $\nodeindex\neq \nodeindexalt$,
	determine the transfer function $\DSFOutputMatrix_{\nodeindexalt\nodeindex}(\ZTvar)$.
\end{prob}
A related problem is described in \cite{hayden2017network}, where no a priori knowledge of the network is provided, but the transfer marix $\DSFOutputMatrix(\ZTvar)$ is assumed strictly causal and all the node signals are observed (or equivalently, the full matrix $\PSD_{\node}(\ZTvar)$ is known).
In \cite{hayden2017network} the problem is tackled by showing that only a finite number of factorizations of $\PSD_{\node}(\ZTvar)=G(\ZTvar)G^{*}(\ZTvar)$ are compatible with the assumptions made on the network dynamics. Furthermore, modulo a multiplication by a signed identity matrix, the factor $G(\ZTvar)$ is unique in the case of a minimum phase $(I-\DSFOutputMatrix(\ZTvar))$.
After computing such a spectral factor, the results in \cite{GonWar08} are applied to recover the network.

In our scenario, the spectral factorizations compatible with the data (partial measurement of $\node$) and the network dynamics (non-necessarily strictly causal $\DSFOutputMatrix(\ZTvar)$) are in general uncountably many. It is the partial a priori knowledge of the network structure that will enable the recovery of the target transfer function $\DSFOutputMatrix_{\nodeindexalt\nodeindex}(\ZTvar)$.
Importantly, we consider situations where only a subset of the output nodes in $\node$ is being measured.
This will play a fundamental role to suitably apply our results to situations where the inputs are non-target specific \cite{chetty2017applying}, for example when dealing with confounding variables, by assuming that an input is an unmeasured node.

\section{Graphical manipulations}\label{sec:manipulations}
Different techniques can be used to manipulate LDIMs to obtain a new network of dynamic systems.
For example, the mechanisms of immersion described in \cite{langston1998algorithmic} and used in \cite{DanVan16} can be applied to a LDIM, but in general they do not lead to the definition of another LDIM because of the potential correlation occurring between the unknown external signals.
Instead, we will introduce two operations to manipulate a LDIM and obtain another LDIM: marginalization and self-loop removal.

The marginalization of a node in a LDIM is the definition of another LDIM with the same output processes, but such that the incoming links in the marginalized node are removed.
\begin{lem}[Node marginalization lemma]\label{lem:node marginalization}
	Consider a LDIM
	$\ldim=(\adjTF(\ZTvar),\noise)$
	with output processes
	$(\node_{\nodeindexstart},...,\node_{\nodeindexend})$.
	Let
	$(\nodeset,\directededgeset)$
	be a graphical representation of the LDIM
	$\ldim$.
	Let
	$\subsetnodealt \subset \nodeset$
	and
	$\subsetnode=V\setminus \subsetnodealt$.
	Define two vector processes:
	\begin{align}
		\node'&:=
		\left(\begin{array}{c}
			\node_{\subsetnodeindexalt}\\
			\noise_{\subsetnodeindex}
		\end{array}\right)
		\quad \text{and} \quad
		\noise':=
		\left(\begin{array}{c}
			\noise_{\subsetnodeindexalt}\\
			\noise_{\subsetnodeindex}
		\end{array}\right),
	\end{align}
	and the transfer matrices,
	\begin{align}	
		\adjTF'_{\subsetnodeindexalt\subsetnodeindexalt}
			&=\adjTF_{\subsetnodeindexalt\subsetnodeindexalt}+ \adjTF_{\subsetnodeindexalt\subsetnodeindex}(\identitymatrix-\adjTF_{\subsetnodeindex\subsetnodeindex})^{-1}\adjTF_{\subsetnodeindex\subsetnodeindexalt},\\
		\adjTF'_{\subsetnodeindexalt\subsetnodeindex}&=\adjTF_{\subsetnodeindexalt\subsetnodeindex}(\identitymatrix-\adjTF_{\subsetnodeindex\subsetnodeindex})^{-1},\\
		\adjTF'_{\subsetnodeindex\subsetnodeindex}&=0,\\
		\adjTF'_{\subsetnodeindex\subsetnodeindexalt}&=0.
	\end{align}
	The following properties hold:
	\begin{enumerate}
		\item\label{item:newLDIM}
			The vector processes $\node'$ and $\noise'$ define LDIM (``the marginalization of $\ldim$ with respect to $\node_{\subsetnodeindex}$'') which satisfies the input-output relation:
			\begin{align}\label{eq:ldgprime}
				\node'=\noise'+\adjTF'(\ZTvar)\node'.
			\end{align}
		\item\label{item:Hji'=0}
			Given two nodes
			$\node_{\nodeindex}\in\subsetnode$
			and
			$\node_{\nodeindexalt}$ in $\subsetnodealt$,
			if there is no chain of the form
			$\node_{\nodeindex}\rightarrow \node_{\path_{1}}\rightarrow \ldots\rightarrow \node_{\path_{\ell}}\rightarrow \node_{\nodeindexalt}$,
			with
			$\pathindexend\geq 1$,
			where
			$\{\node_{\path_{1}}, \ldots, \node_{\path_{\ell}}\}\subseteq \subsetnode$,
			then
			${\adjTF}_{{\nodeindexalt}{\nodeindex}}=0$
			implies
			${\adjTF}_{{\nodeindexalt}{\nodeindex}}'=0$.
		\item\label{item:tfUnchanged}
			Given two nodes
			$\node_{j_{1}}$
			and
			$\node_{j_{2}}\in\subsetnodealt$,
			if there is no chain of the form
			$\node_{j_{1}}\rightarrow \node_{\path_{1}}\rightarrow \ldots\rightarrow \node_{\path_{\ell}}\rightarrow \node_{j_{2}}$,
			with $\pathindexend\geq 1$,
			where
			$\{\node_{\path_{1}}, \ldots, \node_{\path_{\ell}}\}\subseteq \subsetnode$,
			then
			${\adjTF}_{{j_{2}}{j_{1}}}'
			={\adjTF}_{{j_{2}}{j_{1}}}$.
	\end{enumerate}
	Furthermore if $(\adjTF(\ZTvar),\noise)$ is causal and causally well-posed, then $(\adjTF'(\ZTvar),\noise')$ is causal and causally well-posed.
\end{lem}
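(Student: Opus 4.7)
The plan is to prove item~(\ref{item:newLDIM}) by explicit block manipulation of the original equations, then to extract items~(\ref{item:Hji'=0}) and~(\ref{item:tfUnchanged}) from a reachability argument on $(\identitymatrix-\adjTF_{\subsetnodeindex\subsetnodeindex})^{-1}$, and finally to handle causality and causal well-posedness via a Schur-complement identity. First, I would partition $\node = \adjTF(\ZTvar)\node + \noise$ into the $\subsetnodeindexalt$ and $\subsetnodeindex$ blocks, solve the second block for $\node_{\subsetnodeindex} = (\identitymatrix-\adjTF_{\subsetnodeindex\subsetnodeindex})^{-1}(\adjTF_{\subsetnodeindex\subsetnodeindexalt}\node_{\subsetnodeindexalt} + \noise_{\subsetnodeindex})$ (well-posedness of $\ldim$ ensures this inverse exists), and substitute back to obtain $\node_{\subsetnodeindexalt} = \adjTF'_{\subsetnodeindexalt\subsetnodeindexalt}\node_{\subsetnodeindexalt} + \adjTF'_{\subsetnodeindexalt\subsetnodeindex}\noise_{\subsetnodeindex} + \noise_{\subsetnodeindexalt}$. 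Stacking this with the trivial identity $\noise_{\subsetnodeindex}=\noise_{\subsetnodeindex}$ produces $\node' = \adjTF'(\ZTvar)\node' + \noise'$ with the announced block pattern for $\adjTF'$, and since $\noise'$ is just a reordering of $\noise$, the PSD $\PSD_{\noise'}$ inherits the block-diagonal structure and $(\adjTF',\noise')$ is a LDIM, establishing~(\ref{item:newLDIM}).

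The core step for~(\ref{item:Hji'=0}) and~(\ref{item:tfUnchanged}) is a zero-pattern lemma for $(\identitymatrix-\adjTF_{\subsetnodeindex\subsetnodeindex})^{-1}$. Let $R_i\subseteq \subsetnodeindex$ denote the set of indices reachable from $i$ by a directed chain whose vertices all lie in $\subsetnodeindex$ (with $i\in R_i$ by convention). Reordering $\subsetnodeindex$ so that $R_i$ precedes $\subsetnodeindex\setminus R_i$ makes the lower-left block of $\adjTF_{\subsetnodeindex\subsetnodeindex}$ identically zero, since an edge $m\to k$ in the graphical representation with $m\in R_i$ and $k\in\subsetnodeindex$ would force $k\in R_i$. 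Hence $(\identitymatrix-\adjTF_{\subsetnodeindex\subsetnodeindex})^{-1}$ is block upper-triangular in this ordering and its $(k,i)$ entry vanishes whenever $k\notin R_i$. Applied to $\adjTF'_{\subsetnodeindexalt\subsetnodeindex} = \adjTF_{\subsetnodeindexalt\subsetnodeindex}(\identitymatrix-\adjTF_{\subsetnodeindex\subsetnodeindex})^{-1}$, a nonzero $(j,i)$ entry forces the existence of some $k\in R_i$ with $\adjTF_{jk}\neq 0$: either $k=i$, excluded by $\adjTF_{ji}=0$, or $k\neq i$, in which case concatenating the chain from $i$ to $k$ inside $\subsetnodeindex$ with the edge $k\to j$ produces a chain $i\to\cdots\to k\to j$ with at least one internal vertex in $\subsetnodeindex$, contradicting the hypothesis of~(\ref{item:Hji'=0}). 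The same argument applied to the correction term $\adjTF_{\subsetnodeindexalt\subsetnodeindex}(\identitymatrix-\adjTF_{\subsetnodeindex\subsetnodeindex})^{-1}\adjTF_{\subsetnodeindex\subsetnodeindexalt}$ appearing inside $\adjTF'_{\subsetnodeindexalt\subsetnodeindexalt}$ forces any nonzero entry at $(j_2,j_1)$ to correspond to a chain from $j_1$ to $j_2$ whose internal vertices all lie in $\subsetnodeindex$, yielding~(\ref{item:tfUnchanged}).

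For the final claim, causal well-posedness of $\ldim$ makes $(\identitymatrix-\adjTF_{\subsetnodeindex\subsetnodeindex})^{-1}$ causal, so the closed-form expressions for $\adjTF'_{\subsetnodeindexalt\subsetnodeindexalt}$ and $\adjTF'_{\subsetnodeindexalt\subsetnodeindex}$ yield causal entries. The block pattern $\adjTF'_{\subsetnodeindex\subsetnodeindex}=\adjTF'_{\subsetnodeindex\subsetnodeindexalt}=0$ renders $\identitymatrix-\adjTF'_{K'K'}$ block upper-triangular for every subset $K'$ of the new index set, so its causal invertibility reduces to that of $\identitymatrix-\adjTF'_{K'_{\subsetnodeindexalt} K'_{\subsetnodeindexalt}}$, which by the Schur-complement identity coincides with the Schur complement of $(\identitymatrix-\adjTF_{\subsetnodeindex\subsetnodeindex})$ inside $(\identitymatrix-\adjTF_{(K'_{\subsetnodeindexalt}\cup\subsetnodeindex)(K'_{\subsetnodeindexalt}\cup\subsetnodeindex)})$; causal well-posedness of $\ldim$ delivers causal invertibility of the latter, and the Schur complement transfers it to the former. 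The main obstacle I anticipate is the careful bookkeeping of the zero-pattern of $(\identitymatrix-\adjTF_{\subsetnodeindex\subsetnodeindex})^{-1}$ at the level of rational transfer functions: the Neumann-series intuition is transparent, but the rigorous justification must proceed through the reachability reordering above rather than through a convergence argument.
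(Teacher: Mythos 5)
Your proposal is correct, and item~\ref{item:newLDIM} is obtained exactly as in the paper (solve the $\subsetnodeindex$-block, substitute into the $\subsetnodeindexalt$-block, stack with $\noise_{\subsetnodeindex}=\noise_{\subsetnodeindex}$). Where you genuinely diverge is in the combinatorial core behind items~\ref{item:Hji'=0} and~\ref{item:tfUnchanged}: the paper expands $(\identitymatrix-\adjTF_{\subsetnodeindex\subsetnodeindex})^{-1}$ as a matrix polynomial $\sum_{k=0}^{N-1}d_k\adjTF_{\subsetnodeindex\subsetnodeindex}^{k}$ via Cayley--Hamilton and then shows entrywise that $[\adjTF_{\subsetnodeindex\subsetnodeindex}^{\ell}]_{\path_{\ell}\nodeindex}$ vanishes absent a length-$\ell$ chain inside $\subsetnodeindex$, whereas you reorder $\subsetnodeindex$ by the reachability set $R_{\nodeindex}$ so that $\adjTF_{\subsetnodeindex\subsetnodeindex}$ becomes block triangular and the zero pattern of the inverse follows from triangularity alone. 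Both yield the same sparsity conclusion; the paper's route gives an explicit algebraic expansion (which it reuses to evaluate the surviving $d_0$ term), while yours is arguably cleaner, avoids tracking the polynomial coefficients, and sidesteps any worry about the degree of the expansion. A further difference in your favor: you actually prove the final causality and causal well-posedness claim (via the Schur-complement identity $(\identitymatrix-\adjTF'_{K'_{\subsetnodeindexalt}K'_{\subsetnodeindexalt}})^{-1}$ being the corresponding block of a causally invertible matrix), whereas the paper's proof asserts this part without argument. The only points worth tightening are cosmetic: a directed walk produced by reachability should be reduced to a chain with distinct edges to match the paper's definition, and item~\ref{item:newLDIM} formally also requires well-posedness of $(\adjTF',\noise')$, which follows from the same Schur-complement argument you already give in the causal case.
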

\begin{proof}
	Note that 
	\begin{align*}
		\subsetnode
			&=\subsetnoise+\adjTF_{\subsetnodeindex\subsetnodeindexalt}\subsetnodealt
				+\adjTF_{\subsetnodeindex\subsetnodeindex}\subsetnode\\
			&=(\identitymatrix-\adjTF_{\subsetnodeindex\subsetnodeindex})^{-1}\subsetnoise
				+(\identitymatrix-\adjTF_{\subsetnodeindex\subsetnodeindex})^{-1}\adjTF_{\subsetnodeindex\subsetnodeindexalt}\subsetnodealt.
	\end{align*}
	Thus, substituting
	$\subsetnode$
	in
	$\subsetnodealt=\adjTF_{\subsetnodeindexalt\subsetnodeindexalt}\subsetnodealt
		+\adjTF_{\subsetnodeindexalt\subsetnodeindex}\subsetnode+\subsetnoisealt$
	it follows that, 
	\begin{align}\label{eq:yJinNewLDIM}
		\subsetnodealt'&:=
		\subsetnodealt
			=\subsetnoisealt
				+\adjTF_{\subsetnodeindexalt\subsetnodeindex}(\identitymatrix-\adjTF_{\subsetnodeindex\subsetnodeindex})^{-1}\subsetnoise+\\
				&\qquad\quad+[\adjTF_{\subsetnodeindexalt\subsetnodeindexalt}+\adjTF_{\subsetnodeindexalt\subsetnodeindex}(\identitymatrix-\adjTF_{\subsetnodeindex\subsetnodeindex})^{-1}\adjTF_{\subsetnodeindex\subsetnodeindexalt}]\subsetnodealt\\
			&=\subsetnoisealt
				+\adjTF'_{\subsetnodeindexalt\subsetnodeindex}\subsetnode'
				+\adjTF'_{\subsetnodeindexalt\subsetnodeindexalt}\subsetnodealt'.
	\end{align}
	Moreover,
	\begin{align*}
		\node_{\subsetnodeindex}'&:=\noise_{\subsetnodeindex}'
		=\noise_{\subsetnodeindex}+\adjTF'_{\subsetnodeindex\subsetnodeindex}\subsetnode'
			+\adjTF'_{\subsetnodeindex\subsetnodeindexalt}\subsetnodealt'.
	\end{align*}
	Thus \ref{item:newLDIM}) is proven.
	Now, let
	$T(\ZTvar):=(I-\adjTF_{\subsetnodeindex\subsetnodeindex}(\ZTvar))$
	where
	$T(\ZTvar)$
	is an invertible
	$N\times N$
	matrix.
	From Cayley-Hamilton theorem \cite{HorJoh12} it follows that there exist coefficients
	$a_{0}(\ZTvar),\ldots , a_{N-1}(\ZTvar) $
	such that:
	\begin{align*}
		T^N+a_{N-1}(\ZTvar)T^{N-1}+\ldots+(-1)^N det(T)I=0,
	\end{align*}
	where $a_0(\ZTvar):=(-1)^{N-1} det(T)\not= 0$.
	Multiplying the above identity by $T^{-1}$ it follows that,
	\begin{align*}
		T^{-1}=\sum_{k=0}^{N-1}c_k T^k,
	\end{align*}
	for some coefficients
	$c_{k}$
	which in turn admits an expansion of the form
	$T^{-1}=\sum_{k=0}^{N-1}d_k \adjTF_{\subsetnodeindex\subsetnodeindex}^k$.
	Thus, there exist coefficients
	$d_k$
	such that
	$(I-\adjTF_{\subsetnodeindex\subsetnodeindex})^{-1}=\sum_{k=0}^{N-1}d_k \adjTF_{\subsetnodeindex\subsetnodeindex}^k$.
	Consider
	$\node_{\nodeindex}\in\subsetnodeindex$
	and
	$\node_{\nodeindexalt}\in\subsetnodeindexalt$.
	Note that the entry
	$(\path_{1},\nodeindex)$
	of the matrix
	$\adjTF_{\subsetnodeindex\subsetnodeindex}$
	is $0$ if there is no link
	from $\node_{\nodeindex}$ to $\node_{\path_{1}}$.
	Also, note that the entry
	$(\path_{2},\nodeindex)$
	of the matrix $(\adjTF_{\subsetnodeindex\subsetnodeindex})^2$ can be written as:
	\begin{align*}
		[(\adjTF_{\subsetnodeindex\subsetnodeindex})^2]_{\path_{2}\nodeindex}=\sum_{\path_{1}\in\subsetnodeindex} [\adjTF_{\subsetnodeindex\subsetnodeindex}]_{\path_{2}\path_{1}}[\adjTF_{\subsetnodeindex\subsetnodeindex}]_{\path_{1}\nodeindex},
	\end{align*}
	and thus, if there is no chain of the form
	$\node_{\nodeindex}\to\node_{\path_{1}}\rightarrow \node_{\path_{2}}$
	for any
	$\node_{\path_{1}}\in\subsetnode$
	then
	$[\adjTF_{\subsetnodeindex\subsetnodeindex}]_{\path_{2}\path_{1}}[\adjTF_{\subsetnodeindex\subsetnodeindex}]_{\path_{1}\nodeindex}=0$
	for all
	$\path_{1}\in\subsetnodeindex$ implying
	$[(\adjTF_{\subsetnodeindex\subsetnodeindex})^2]_{\path_{2}\nodeindex}=0$.
	Similarly the entry 
	$(\path_{3},\nodeindex)$
	of the matrix
	$(\adjTF_{\subsetnodeindex\subsetnodeindex})^3$
	can be written as
	\begin{align*}
		[(\adjTF_{\subsetnodeindex\subsetnodeindex})^3]_{\path_{3}\nodeindex}=
		\sum_{\path_{2}\in\subsetnodeindex}\sum_{\path_{1}\in\subsetnodeindex}
			[\adjTF_{\subsetnodeindex\subsetnodeindex}]_{\path_{3}\path_{2}}
			[\adjTF_{\subsetnodeindex\subsetnodeindex}]_{\path_{2}\path_{1}}
			[\adjTF_{\subsetnodeindex\subsetnodeindex}]_{\path_{1}\nodeindex},
	\end{align*}
	and thus, if there is no chain of the form
	$\node_{\nodeindex}\to\node_{\path_{1}}\rightarrow \node_{\path_{2}}\rightarrow\node_{\path_{3}}$
	with
	$\{\node_{\path_{1}},\node_{\path_{2}},\node_{\path_{3}}\}\subseteq\subsetnode$
	then we have 
	$[\adjTF_{\subsetnodeindex\subsetnodeindex}]_{\path_{3}\path_{2}}[\adjTF_{\subsetnodeindex\subsetnodeindex}]_{\path_{2}\path_{1}}[\adjTF_{\subsetnodeindex\subsetnodeindex}]_{\path_{1}\nodeindex}=0$
	for all
	$\path_{1},\path_{2}\in\subsetnodeindex$ implying
	$[(\adjTF_{\subsetnodeindex\subsetnodeindex})^2]_{\path_{3}\nodeindex}=0$.
	Iterating this argument it can be shown that if there is no chain of the form
	$\node_{\nodeindex}\to\node_{\path_{1}}\rightarrow ...\rightarrow \node_{\path_{\ell}}$
	with
	$\{\node_{\path_{1}},...,\node_{\path_{\pathlength-1}}\}\subseteq\subsetnode$
	for
	$\ell\geq 1$
	then
	$[(\adjTF_{\subsetnodeindex\subsetnodeindex})^\ell]_{\path_{\ell}\nodeindex}=0$.
	Suppose there is no chain of the form
	$\node_{\nodeindex}\rightarrow \node_{\path_{1}}\rightarrow \ldots\rightarrow \node_{\path_{\ell}}\rightarrow \node_{\nodeindexalt}$
	with
	$\{\node_{\path_{1}},...,\node_{\path_{\pathlength}}\}\subseteq\subsetnode$.
	Then
	\begin{align*}
		&[\adjTF'_{\subsetnodeindexalt\subsetnodeindex}]_{\nodeindexalt\nodeindex}
		=[\adjTF_{\subsetnodeindexalt\subsetnodeindex}(I-\adjTF_{\subsetnodeindex\subsetnodeindex})^{-1}]_{\nodeindexalt\nodeindex}
		=\sum_{k\in\subsetnodeindex}
			\adjTF_{\nodeindexalt k}[(I-\adjTF_{\subsetnodeindex\subsetnodeindex})^{-1}]_{k\nodeindex}\\
		&\quad=\sum_{k\in\subsetnodeindex}\sum_{m=0}^{N-1}
			\adjTF_{\nodeindexalt k}d_{m}[\adjTF_{\subsetnodeindex\subsetnodeindex}^{m}]_{k\nodeindex}
		=\sum_{k\in\subsetnodeindex}
			\adjTF_{\nodeindexalt k}d_{0}[\adjTF_{\subsetnodeindex\subsetnodeindex}^{0}]_{k\nodeindex}
		=\adjTF_{\nodeindexalt\nodeindex}d_{0},
	\end{align*}
	where the last two equalities follow respectively from the fact that
	there is no chain of the form
	$\node_{\nodeindex}\rightarrow \node_{\path_{1}}\rightarrow \ldots\rightarrow \node_{\path_{\ell}}\rightarrow \node_{\nodeindexalt}$
	with
	$\{\node_{\path_{1}},...,\node_{\path_{\pathlength}}\}\subseteq\subsetnode$
	and the fact that
	$[\adjTF_{\subsetnodeindex\subsetnodeindex}^{0}]_{k\nodeindex}=\identitymatrix_{k\nodeindex}=0$
	for $k\neq\nodeindex$. This proves~\ref{item:Hji'=0}).
	
	For
	$\node_{j_{1}}\in\node_{\subsetnodeindexalt}$
	and
	$\node_{j_{2}}\in\node_{\subsetnodeindexalt}$,
	following a similar argument,
	we show that
	\small
	\begin{align*}
		[\adjTF'_{\subsetnodeindexalt\subsetnodeindexalt}]_{j_{2}j_{1}}&=
		[\adjTF_{\subsetnodeindexalt\subsetnodeindex}(I-\adjTF_{\subsetnodeindex\subsetnodeindex})^{-1}\adjTF_{\subsetnodeindex\subsetnodeindexalt}
	+\adjTF_{\subsetnodeindexalt\subsetnodeindexalt}]_{j_{2}j_{1}}=\\
	&=\adjTF_{j_{2}j_{1}}
	+\sum_{k\in\subsetnodeindex}\sum_{h\in\subsetnodeindex}
		[\adjTF_{\subsetnodeindexalt\subsetnodeindex}]_{j_{2}k}
		[(I-\adjTF_{\subsetnodeindex\subsetnodeindex})^{-1}]_{kh}
		[\adjTF_{\subsetnodeindex\subsetnodeindexalt}]_{hj_{1}}\\
	&=\adjTF_{j_{2}j_{1}}
	+\sum_{k\in\subsetnodeindex}\sum_{h\in\subsetnodeindex}\sum_{m=0}^{N-1}
		d_{m}
		[\adjTF_{\subsetnodeindexalt\subsetnodeindex}]_{j_{2}k}
		[\adjTF_{\subsetnodeindex\subsetnodeindex}^{m}]_{kh}
		[\adjTF_{\subsetnodeindex\subsetnodeindexalt}]_{hj_{1}}\\
	&=\adjTF_{j_{2}j_{1}}+
	\sum_{k\in\subsetnodeindex}\sum_{h\in\subsetnodeindex}
		d_{0}
		[\adjTF_{\subsetnodeindexalt\subsetnodeindex}]_{j_{2}k}
		\identitymatrix_{kh}
		[\adjTF_{\subsetnodeindex\subsetnodeindexalt}]_{hj_{1}}\\
	&=\adjTF_{j_{2}j_{1}}+
	\sum_{k\in\subsetnodeindex}
		[\adjTF_{\subsetnodeindexalt\subsetnodeindex}]_{j_{2}k}
		[\adjTF_{\subsetnodeindex\subsetnodeindexalt}]_{kj_{1}}
	=
	\adjTF_{j_{2}j_{1}}
	\end{align*}
	\normalsize
	where the last two equalities follow respectively from the fact that
	there is no chain of the form
	$\node_{\nodeindex}\rightarrow \node_{\path_{1}}\rightarrow \ldots\rightarrow \node_{\path_{\ell}}\rightarrow \node_{\nodeindexalt}$
	with
	$\{\node_{\path_{1}},...,\node_{\path_{\pathlength}}\}\subseteq\subsetnode$
	and the fact that
	$[\adjTF_{\subsetnodeindex\subsetnodeindex}^{0}]_{kh}=\identitymatrix_{kh}=0$
	for $k\neq h$.	
\end{proof}
In Figure~\ref{fig:marginalization} we report the graphical representation of a LDIM resulting from the marginalization of one of its nodes.
\begin{figure}
    \centering
    \begin{tabular}{cc}
        \includegraphics[width=0.4\columnwidth]{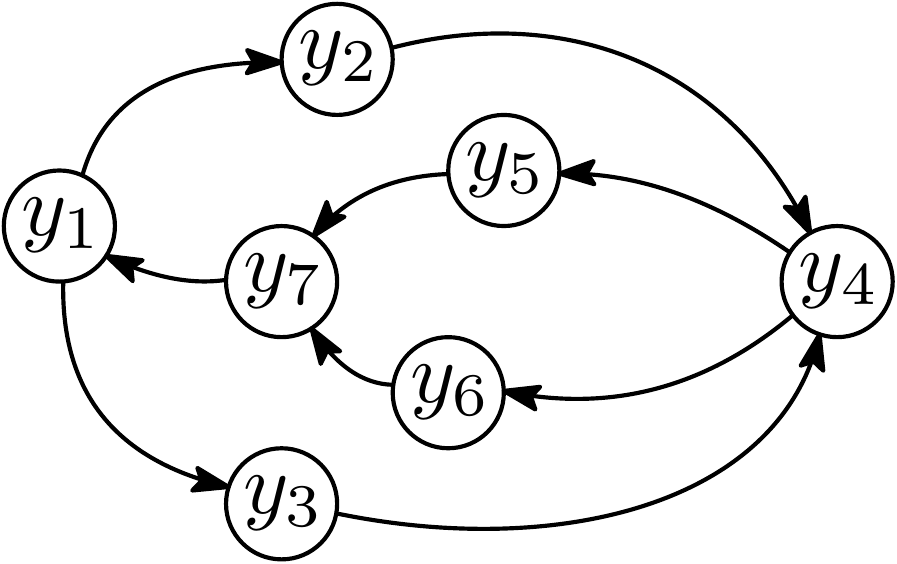} & 
        \includegraphics[width=0.4\columnwidth]{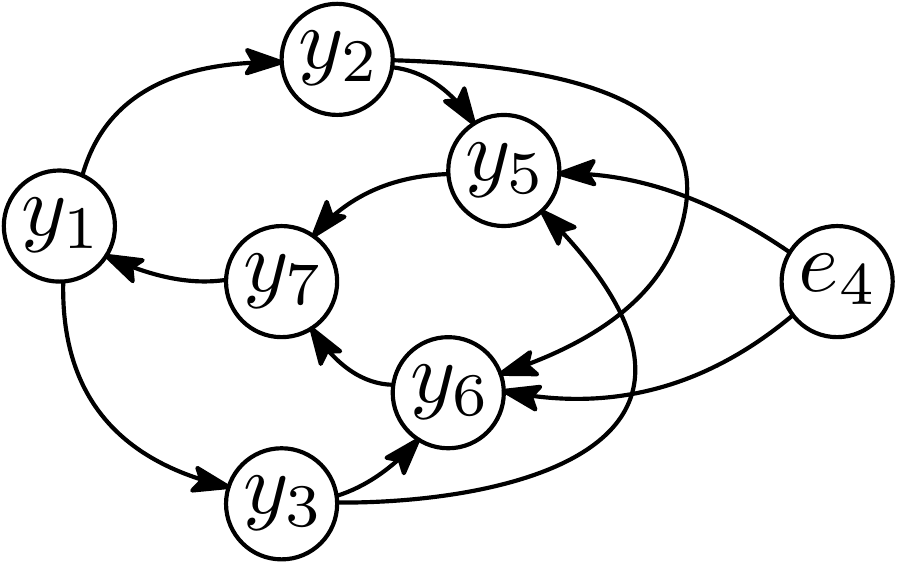}\\
        (a) & (b)
    \end{tabular}
    \caption{\label{fig:marginalization} Graphical representation of a LDIM (a) and the graphical representation resulting from the marginalization of $\node_{4}$. For simplicity it is assumed that in this example the procedure has not generated any self-loops.}
\end{figure}

The marginalization procedure can introduce non-zero elements in the diagonal blocks of the transfer matrix $\adjTF'(\ZTvar)$,  even if  the diagonal blocks of $\adjTF$ are zero.
These non-zero elements in the diagonal blocks of $\adjTF'(\ZTvar)$, however, can be removed using the lemma described below.
\begin{lem}[Self-loop removal]\label{lem:selfloopremoval}
	Consider a set of processes related to each other via:
	\begin{align*}
		\node=\noise+\adjTF(\ZTvar)\node
	\end{align*}
	where
	$\node=(\node_{1}^{T}|...|\node_{\nodeindexend}^{T})$, 
	$\noise=(\noise_{1}^{T}|...|\noise_{\noiseindexend}^{T})$. 
	Consider, $\adjTF$ to have a $n\times n$ block structure commensurate with dimensions of $\node_i,\ i=1,\ldots,n.$
	Assume that  block transfer matrix
	$\adjTF_{kk}(\ZTvar)\neq 0$
	for all
	$k\in K\subset \{1,2,\ldots,n\}$,
	and
	$\adjTF_{kk}(\ZTvar)= 0$
	for all
	$k\notin K$.
	Define the block diagonal transfer matrix
	$D(\ZTvar)$
	such that
	$D_{kk}(\ZTvar)=(\identitymatrix-\adjTF_{kk})^{-1}$ for all $k$.
	A new set of processes $\node'$ following the relation
	\begin{align*}
		\node'=\noise'+\adjTF'(\ZTvar)\node,
	\end{align*}
	can be defined with the following properties:
	\begin{enumerate}
		\item	\label{prop:nochangey} $\node'=\node$.
		\item	\label{prop:newnoise}	$\noise'=D(\ZTvar)\noise$ (thus $e'_{j}=e_{j}$ if $j\notin K$ and $\PSD_{\noise'}=D(\ZTvar)\PSD_{\noise}D^{*}(\ZTvar)$).
		\item	\label{prop:noselfloops}	$\adjTF'$,  imposed with the same block structure of $\adjTF$, has zero entries in its diagonal blocks.
		\item	\label{prop:newoffdiagonal}	$\adjTF'_{ki}=D_{kk}\adjTF_{ki}$ for $i\neq k$ (thus $\adjTF_{ki}'\neq 0$ if $\adjTF_{ki}\neq 0$ for $i\neq j$ and $\adjTF_{ji}'=\adjTF_{ji}$ if $j\notin K$ and  for all $i$).	
	\end{enumerate}
	Furthermore if $(\adjTF(\ZTvar),\noise)$ is a LDIM, then $(\adjTF'(\ZTvar),\noise')$ is a LDIM and if 
	$(\adjTF(\ZTvar),\noise)$ is a causal and causally well-posed LDIM, then $(\adjTF'(\ZTvar),\noise')$ is a causal and causally well-posed LDIM.
\end{lem}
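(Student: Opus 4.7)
The plan is to absorb the self-loop of each affected node into its own equation by algebraic row-scaling. Starting from the $k$-th block equation
\begin{align*}
\node_k = \noise_k + \adjTF_{kk}(\ZTvar)\node_k + \sum_{i\neq k}\adjTF_{ki}(\ZTvar)\node_i,
\end{align*}
I would isolate $(\identitymatrix - \adjTF_{kk})\node_k$ on the left and, for $k\in K$, left-multiply by $D_{kk}(\ZTvar):=(\identitymatrix - \adjTF_{kk})^{-1}$, giving
\begin{align*}
\node_k = D_{kk}\noise_k + \sum_{i\neq k}D_{kk}\adjTF_{ki}\,\node_i.
\end{align*}
For $k\notin K$ we have $\adjTF_{kk}=0$, hence $D_{kk}=\identitymatrix$ and the equation is unchanged. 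Stacking these $n$ equations defines $\noise':=D(\ZTvar)\noise$ and the new block matrix $\adjTF'$ with $\adjTF'_{kk}=0$ and $\adjTF'_{ki}=D_{kk}\adjTF_{ki}$ for $i\neq k$, yielding the advertised relation $\node=\noise'+\adjTF'(\ZTvar)\node$.

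With this construction, properties \ref{prop:nochangey})--\ref{prop:newoffdiagonal}) are immediate: $\node'=\node$ by design; $\noise'=D\noise$ by definition; $\adjTF'$ has null diagonal blocks by construction; and the off-diagonal update $\adjTF'_{ki}=D_{kk}\adjTF_{ki}$ reduces to $\adjTF_{ki}$ whenever $k\notin K$ (since $D_{kk}=\identitymatrix$), preserving the supports as stated. The identity $\PSD_{\noise'}=D\PSD_{\noise}D^{*}$ then follows from $\noise'=D\noise$ by a direct spectral density computation.

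For the LDIM-preservation clause, I would check the three structural requirements of Definition~\ref{dfn:LDIM}. Since $D(\ZTvar)$ is block diagonal, $D\PSD_{\noise}D^{*}$ inherits the block-diagonal structure of $\PSD_{\noise}$, so $\noise'$ still has mutually uncorrelated block components. Stability and $\mathcal{F}$-membership of $D_{kk}=(\identitymatrix-\adjTF_{kk})^{-1}$ follow from LDIM well-posedness applied to the singleton index set $\{k\}$, which ensures the inverse exists with no poles on the unit circle; consequently $\adjTF'_{ki}=D_{kk}\adjTF_{ki}\in\mathcal{F}$ as well. For the causal case, causal well-posedness guarantees that each $D_{kk}$ lies in $\mathcal{F}^{+}$, so $\adjTF'$ remains causal; checking causal well-posedness of the resulting LDIM then reduces to invertibility of $(\identitymatrix-\adjTF'_{II})$ for each $I$, which one can deduce from causal invertibility of $(\identitymatrix-\adjTF_{II})$ through the factorization built from $D$.

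The main technical point is the verification in the last paragraph that well-posedness (respectively causal well-posedness) transfers from $(\adjTF,\noise)$ to $(\adjTF',\noise')$; the remainder of the proof is a direct algebraic unfolding once one writes $\adjTF'=D(\adjTF-\mathrm{blockdiag}(\adjTF_{kk}))$ and uses the block-diagonality of $D$ to commute it past the independence/stationarity structure of $\noise$.
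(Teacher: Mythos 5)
Your proposal is correct and follows essentially the same route as the paper: write the $k$-th block equation, left-multiply by $D_{kk}=(\identitymatrix-\adjTF_{kk})^{-1}$, and read off the new $\noise'$ and $\adjTF'$, with the remaining properties checked by inspection. In fact you supply slightly more detail than the paper does on the LDIM-preservation clause (via the factorization $\identitymatrix-\adjTF'_{II}=D_{II}(\identitymatrix-\adjTF_{II})$), which the paper leaves as ``verified by inspection.''
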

\begin{proof}
	For every $k\in K$ observe that:
	\begin{align*}
		\node_{k}=\noise_{k}+\adjTF_{kk}\node_{k}+\sum_{\nodeindex \neq k}\adjTF_{k\nodeindex}\node_{\nodeindex}.
	\end{align*}
	This implies that
	\begin{align*}
		\node_{\nodeindexaltb}=(\identitymatrix-\adjTF_{kk})^{-1}\noise_{k}
			+\sum_{\nodeindex \neq k}(\identitymatrix-\adjTF_{kk})^{-1}\adjTF_{k\nodeindex}\node_{\nodeindex}.
	\end{align*}
	Define $\node':=\node$. Also, for $k\in K$, define 
	\begin{align*}
		\noise_{\nodeindexaltb}'&:=(\identitymatrix-\adjTF_{kk})^{-1}\noise_{k}\\
		\adjTF_{\nodeindexaltb\nodeindex}'&:=(\identitymatrix-\adjTF_{kk})^{-1}\adjTF_{k\nodeindex} \quad \text{for all } \nodeindex\neq k\\
		\adjTF_{\nodeindexaltb\nodeindexaltb}'&:=0,
	\end{align*}
	and for $k\notin K$ define
	\begin{align*}
		\noise_{k}'&:=\noise_{k},\\
		\adjTF_{k\nodeindex}'&=\adjTF_{k\nodeindex} \quad \text{for all } \nodeindex.
	\end{align*}
	All the properties in the theorem statement can be verified by inspection.
\end{proof}
\begin{rem}
	Note that in Lemma~\ref{lem:selfloopremoval} we have that
	\begin{align*}
		\adjTF'=D\adjTF-diag(D_{11}\adjTF_{11},...,D_{nn}\adjTF_{nn}),
	\end{align*}
	where $n$ is the block dimension of the transfer matrix $\adjTF$.
\end{rem}
The self-loop removal lemma can be used to remove all self-loops after marginalizing a LDIM  in order to obtain a new LDIM with no self-loops.
\begin{rem}\label{rem:newgraphproperties}
	The marginalization of a LDIM, with the subsequent removal of all self-loops, produces a new LDIM with no self-loops and  with the property that nodes
	$\node_{\subsetnodeindex}$
	have no incoming links and have outgoing links only to the nodes in
	$\child{\graph}{\nodeset\setminus\noise_{\subsetnodeindex}}$.
\end{rem}

\section{Relation between $d$-separation and Wiener separation in LDIMs}\label{sec:dseparation}
The main goal of this section is to show that if the graph
$\graph$
is a graphical representation of the LDIM
$\ldim=(\adjTF(\ZTvar),\noise)$
then
$\dsep{\graph}(\subsetnode,\subsetnodesep,\subsetnodealt)$
implies 
$\wsep(\subsetnode,\subsetnodesep,\subsetnodealt)$.
Furthermore, if the LDIM is causal, we also have that 
$\dsep{\graph}(\subsetnode,\subsetnodesep,\subsetnodealt)$
implies 
$\cwsep(\subsetnode,\subsetnodesep,\subsetnodealt)$
for the LDIM.
Preliminary results, but only limited to the non-causal scenario, were shown in \cite{MatSal14} following a derivation similar to the one provided in \cite{Kos99} for graphical models of gaussian random variables.
Such a derivation  cannot be repeated for the causal scenario, since it would require an analytical expression for a matrix spectral factorization. Here, we provide a more general derivation that can tackle both scenarios at once.
Such a novel derivation still borrows three lemmas from \cite{MatSal14} and \cite{Kos99} which are of general utility since they can be used to obtain simpler $d$-separation relations in graphs.

When three disjoints sets 
$\subsetvertex$,
$\subsetvertexalt$,
and
$\subsetvertexsep$
are a partition of the node set $V$ of a graph, 
the following lemma provides a characterization of $d$-separation.
\begin{lem}\label{lem:linkTypeRestricted}\label{lem:separation under 3 conditions}
	Consider a directed graph $\graph=(\vertexset,\directededgeset)$.
	Suppose
	$\subsetvertex$,
	$\subsetvertexalt$
	and
	$\subsetvertexsep$
	are three disjoint subsets of
	$\vertexset$
	such that
	$\vertexset=\subsetvertex\cup\subsetvertexalt\cup\subsetvertexsep$;
	thus
	$\subsetvertex$,
	$\subsetvertexalt$
	and
	$\subsetvertexsep$
	partition the set 
	$\vertexset$.
	Then
	$\subsetvertexsep$
	$d$-separates
	$\subsetvertex$
	and
	$\subsetvertexalt$
	(i.e. $\dsep{\graph}(\subsetvertex,\subsetvertexsep,\subsetvertexalt)$)
	if and only if there are no paths of the form:
	\begin{enumerate}[label=\alph*)]
		\item\label{path a} $\vertex_{\vertexindex}\rightarrow \vertex_{\vertexindexalt},$
		\item\label{path b} $\vertex_{\vertexindexalt}\rightarrow \vertex_{\vertexindex}$,
		\item\label{path c} $\vertex_{\vertexindex}\rightarrow \vertex_{\vertexindexsep}\leftarrow \vertex_{\vertexindexalt}.$
	\end{enumerate}
	where
	$\vertex_{\vertexindex}\in\subsetvertex$,
	$\vertex_{\vertexindexalt}\in \subsetvertexalt$
	and 
	$\vertex_{\vertexindexsep}\in \subsetvertexsep.$
\end{lem}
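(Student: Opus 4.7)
The proof splits into the two implications. The forward direction ($\dsep{\graph}(\subsetvertex,\subsetvertexsep,\subsetvertexalt) \Rightarrow$ no paths of the three forbidden forms) is essentially immediate: each of the patterns \ref{path a}, \ref{path b}, \ref{path c} is itself a path that is not blocked by $\subsetvertexsep$ under the definition of $d$-separation. For \ref{path a} and \ref{path b} the path has no internal vertices to block it; for \ref{path c} the unique internal vertex $\vertex_{\vertexindexsep}$ is a collider that lies in $\subsetvertexsep$, so neither blocking clause applies. Hence existence of any such path contradicts $d$-separation.

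For the reverse direction I would argue by contradiction, choosing a $d$-connecting path $\pi=(\vertex_{\path_{1}},\ldots,\vertex_{\path_{\pathlength}})$ between $\subsetvertex$ and $\subsetvertexalt$ that is of \emph{minimum length} among all such paths (equivalently, one may induct on $\pathlength$). So $\vertex_{\path_{1}}\in\subsetvertex$, $\vertex_{\path_{\pathlength}}\in\subsetvertexalt$, and every internal vertex either is a non-collider outside $\subsetvertexsep$ or is a collider having itself or a descendant in $\subsetvertexsep$. The case $\pathlength=2$ immediately produces a path of type \ref{path a} or \ref{path b}, so assume $\pathlength\ge 3$. The key step is a case analysis on $\vertex_{\path_{2}}$, which by the hypothesis $\vertexset=\subsetvertex\cup\subsetvertexalt\cup\subsetvertexsep$ lies in one of the three sets.

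If $\vertex_{\path_{2}}\in\subsetvertexalt$, the sub-path $(\vertex_{\path_{1}},\vertex_{\path_{2}})$ is of length $2$ and is thus a forbidden pattern \ref{path a}/\ref{path b}, contradicting the hypothesis. If $\vertex_{\path_{2}}\in\subsetvertex$, then the sub-path $(\vertex_{\path_{2}},\ldots,\vertex_{\path_{\pathlength}})$ connects $\subsetvertex$ to $\subsetvertexalt$; since the collider/non-collider status of its internal vertices $\vertex_{\path_{3}},\ldots,\vertex_{\path_{\pathlength-1}}$ is unchanged from $\pi$ (it depends only on the two neighboring edges), this shorter sub-path is still $d$-connecting, contradicting minimality of $\pi$. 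The remaining case $\vertex_{\path_{2}}\in\subsetvertexsep$ forces $\vertex_{\path_{2}}$ to be a collider on $\pi$ (otherwise $\pi$ would already be blocked at $\vertex_{\path_{2}}$), so $\vertex_{\path_{1}}\to\vertex_{\path_{2}}\leftarrow\vertex_{\path_{3}}$. A second case split on $\vertex_{\path_{3}}$ finishes the argument: $\vertex_{\path_{3}}\in\subsetvertexalt$ gives a forbidden pattern of type \ref{path c}; $\vertex_{\path_{3}}\in\subsetvertex$ produces a strictly shorter $d$-connecting sub-path $(\vertex_{\path_{3}},\ldots,\vertex_{\path_{\pathlength}})$, again contradicting minimality; and $\vertex_{\path_{3}}\in\subsetvertexsep$ is impossible because the edge $\vertex_{\path_{3}}\to\vertex_{\path_{2}}$ makes $\vertex_{\path_{3}}$ a non-collider on $\pi$, so $\vertex_{\path_{3}}\in\subsetvertexsep$ would block $\pi$ by the first clause of $d$-separation.

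The subtlety to watch is the claim that sub-paths inherit $d$-connection from $\pi$: one must verify that a vertex which was an internal collider (or non-collider) of $\pi$ remains so in the sub-path, which is immediate since colliderhood is determined only by the two edges incident to that vertex. The only other care required is to make sure paths and sub-paths remain paths in the sense of the definition (no repeated edges), which is clear because sub-sequences of distinct edges are again distinct. No other result from the paper is needed beyond the definitions of path, collider, and $d$-separation already given.
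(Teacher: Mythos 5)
Your proof is correct. Note that the paper does not actually prove this lemma itself --- its ``proof'' is a one-line pointer to Lemma~5 of \cite{MatSal14} --- so there is no in-paper argument to compare against; your write-up supplies the self-contained proof that the paper delegates to the reference. The forward direction is exactly right: patterns (a) and (b) have no internal vertices and so satisfy neither blocking clause, and in pattern (c) the sole internal vertex is a collider lying in $y_{Z}$, so again neither clause applies; hence any such pattern is a $d$-connecting path. The reverse direction via a minimum-length $d$-connecting path is sound, and the two points that genuinely need care are both handled: (i) a sub-path of a $d$-connecting path is again $d$-connecting because the collider/non-collider status of an internal vertex depends only on its two incident edges, and sub-sequences of distinct ordered pairs remain distinct, so sub-paths are paths in the paper's sense; (ii) in the sub-case $y_{\pi_{3}}\in y_{Z}$, the orientation $y_{\pi_{3}}\to y_{\pi_{2}}$ forced by $y_{\pi_{2}}$ being a collider makes $y_{\pi_{3}}$ a non-collider, so it would block the path --- and this sub-case can only occur for $\ell\geq 4$, since for $\ell=3$ the vertex $y_{\pi_{3}}$ is the endpoint in $y_{J}$, which is disjoint from $y_{Z}$. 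The partition hypothesis $V=y_{I}\cup y_{J}\cup y_{Z}$ enters exactly where it should, namely to make the case analyses on $y_{\pi_{2}}$ and $y_{\pi_{3}}$ exhaustive; without it the lemma is false, so it is good that your argument visibly relies on it.
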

\begin{proof}
    See Lemma~5 in \cite{MatSal14} for the proof.
\end{proof}

This lemma, instead, provides a way to verify $d$-separation in a graph $\graph$ using a graph $\graph'$ that is a restriction of $\graph$.
\begin{lem}\label{lem:separation on restriction}
	Let
	$\graph=(\vertexset,\directededgeset)$
	be a directed graph.
	Let
	$\subsetvertex$,
	$\subsetvertexalt$
	and
	$\subsetvertexsep$
	be disjoint subsets of
	$\vertexset$.
	Let
	$\vertexset'=an_{\graph}(\subsetvertex\cup\subsetvertexalt\cup\subsetvertexsep)$
	and let
	$\graph'=(\vertexset',\directededgeset\, ')$
	be the restriction of
	$\graph$
	to
	$\vertexset'$.
	Then
	\begin{align*}
		\dsep{\graph}(\subsetvertex,\subsetvertexsep,\subsetvertexalt)
		\Leftrightarrow
		\dsep{\graph'}(\subsetvertex,\subsetvertexsep,\subsetvertexalt).
	\end{align*}
\end{lem}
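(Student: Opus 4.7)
The plan is to prove both implications by relating $d$-connecting paths in $\graph$ to $d$-connecting paths in $\graph'$. Since $\graph'$ is obtained from $\graph$ by restriction, every edge of $\graph'$ is also an edge of $\graph$, and two nodes of $\vertexset'$ are joined in $\graph'$ exactly when they are joined in $\graph$. The $(\Leftarrow)$ direction is immediate: if $\pi$ is a $d$-connecting path in $\graph'$ between $\subsetvertex$ and $\subsetvertexalt$ given $\subsetvertexsep$, the same node sequence is a path in $\graph$ with identical edge orientations, so collider/non-collider status is preserved, and any descent chain certifying that a collider on $\pi$ has a descendant in $\subsetvertexsep$ in $\graph'$ is also present in $\graph$. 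Hence $\pi$ $d$-connects $\subsetvertex$ and $\subsetvertexalt$ in $\graph$ as well.

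For the substantive $(\Rightarrow)$ direction, I would assume $\pi=(\vertex_{\path_{1}},\ldots,\vertex_{\path_{\ell}})$ is a $d$-connecting path in $\graph$ from some $\vertex_{\vertexindex}\in\subsetvertex$ to some $\vertex_{\vertexindexalt}\in\subsetvertexalt$ given $\subsetvertexsep$. The central claim is that every node on $\pi$, together with every node on any directed chain witnessing that a collider of $\pi$ has a descendant in $\subsetvertexsep$, lies in $\vertexset'=\ancestor{\graph}{\subsetvertex\cup\subsetvertexalt\cup\subsetvertexsep}$. Once this claim is established, $\pi$ is a path in $\graph'$ with the same edge orientations, the required descent chains survive the restriction to $\vertexset'$, and the $d$-connection conditions continue to hold there, yielding $\neg\dsep{\graph'}(\subsetvertex,\subsetvertexsep,\subsetvertexalt)$.

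To prove the claim I would analyze each node $\vertex$ on $\pi$ by cases. Endpoints lie in $\subsetvertex\cup\subsetvertexalt\subseteq\vertexset'$ trivially. If $\vertex$ is an internal collider on $\pi$, the $d$-connection condition forces $\vertex$ to have a descendant in $\subsetvertexsep$, placing $\vertex$ itself in $\vertexset'$; moreover every node on a realizing descent chain is an ancestor of a node in $\subsetvertexsep$, hence in $\vertexset'$ as well. If $\vertex$ is an internal non-collider, at least one of its two path-incident edges is outgoing from $\vertex$, and I would walk along $\pi$ in that direction following forward-pointing edges for as long as possible. The walk terminates either at an endpoint, in which case $\vertex$ is an ancestor of a node in $\subsetvertex\cup\subsetvertexalt$, or at the first node where the following edge reverses orientation, which is necessarily a collider and of which $\vertex$ is an ancestor; the collider case then provides a descendant in $\subsetvertexsep$ of which $\vertex$ is still an ancestor. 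Either way $\vertex\in\vertexset'$.

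The main obstacle is precisely this walking argument for non-colliders: one has to be careful about edge orientations along $\pi$ to guarantee that the walk actually reaches either an endpoint or a collider (rather than looping), which follows because $\pi$ has finitely many distinct nodes and the walk strictly advances along $\pi$. Once the claim is secured, all remaining verifications, such as preservation of the non-collider-outside-$\subsetvertexsep$ condition and of the collider-descendant condition inside $\graph'$, are immediate from the fact that both edge orientations on $\pi$ and membership in $\subsetvertexsep$ are independent of whether one works in $\graph$ or in its restriction $\graph'$.
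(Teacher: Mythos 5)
Your proof is correct, but note that the paper itself does not prove this lemma at all: it simply cites it as Corollary~1 of Koster's 1999 paper and defers there for the argument. What you have written is, in substance, the standard self-contained proof of that corollary, and it goes through. The easy direction (lifting a $d$-connecting path from $\graph'$ to $\graph$) is fine because restriction preserves edges, orientations, and hence collider status, and any descent chain in $\graph'$ is a descent chain in $\graph$. The substantive direction is also handled correctly: endpoints lie in $\subsetvertex\cup\subsetvertexalt$; every collider on a $d$-connecting path must have itself or a descendant in $\subsetvertexsep$ (negation of condition~2 in the paper's definition), so it and its witnessing descent chain lie in $\ancestor{\graph}{\subsetvertexsep}$; and your forward-walk from a non-collider terminates at an endpoint or at the first orientation reversal, which is a collider, placing the non-collider in the ancestor set as well. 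Since all these nodes survive into $\vertexset'$ and the restriction retains every edge between them, the path remains $d$-connecting in $\graph'$. One cosmetic remark: your direction labels are swapped relative to the biconditional as stated (your ``$(\Leftarrow)$ direction'' actually establishes $\dsep{\graph}\Rightarrow\dsep{\graph'}$ via contraposition, and your ``$(\Rightarrow)$ direction'' establishes the converse), but both implications are covered, so nothing is missing. Compared with the paper's citation, your argument buys self-containedness at the cost of a page of routine case analysis; it is the right argument to include if one does not want to lean on Koster.
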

\begin{proof}
	This lemma is the same as Corollary~1 in \cite{Kos99}. We refer to \cite{Kos99} for its proof.
\end{proof}
The following lemma allows one to check
if
$\dsep{\graph}(\subsetvertex,\subsetvertexsep,\subsetvertexalt)$
holds by testing
$\dsep{\graph}(\subsetvertexprime,\subsetvertexsep,\subsetvertexaltprime)$
where
$\subsetvertexprime$, $\subsetvertexsep$, $\subsetvertexaltprime$
partition the set of all ancestors of 
$\subsetvertex$, $\subsetvertexsep$, $\subsetvertexalt$ and
$\subsetvertex\subseteq\subsetvertexprime$
and
$\subsetvertexalt\subseteq\subsetvertexaltprime$.
\begin{lem}\label{lem:inferDsepOnEnlargedSets}
	Given a directed graph
	$\graph=(\vertexset,\directededgeset)$
	and three disjoint sets
	$\subsetvertex,\subsetvertexalt,\subsetvertexsep \subset \vertexset$.
	Let
	$\vertexsetsmallertocheckdsep:=an(\subsetvertex,\subsetvertexalt,\subsetvertexsep)$.
	Let 
	$\subsetvertexaltprime$
	be the set of all nodes in
	$\vertexsetsmallertocheckdsep$
	that are $d$-separated from the set
	$\subsetvertex$
	by
	$\subsetvertexsep$
	\begin{align*}
		\subsetvertexaltprime:=
			\{\vertex_{\vertexindexalt}\in\nodesetsmallertocheckdsep|\dsep{\graph}(\subsetvertex,\subsetvertexsep,\vertex_{\vertexindexalt})\}.
	\end{align*}
	Let the set
	$\subsetvertexprime$
	be defined as complement of
	$\{\subsetvertexaltprime\cup \subsetvertexsep\}$
	in
	$\nodesetsmallertocheckdsep$:
	\begin{align*}
		\subsetvertexprime:=\nodesetsmallertocheckdsep\backslash \{\subsetvertexaltprime\cup \subsetvertexsep\}.
	\end{align*}
	Then 
	\begin{equation}\label{eq:dsepIZJimpliesI'ZJ'}
		\dsep{\graph}(\subsetvertex,\subsetvertexsep,\subsetvertexalt)
			\Leftrightarrow
		\dsep{\graph}(\subsetvertexprime,\subsetvertexsep,\subsetvertexaltprime).
	\end{equation}
\end{lem}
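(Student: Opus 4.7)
The plan is to combine the two preparatory results: first I would use Lemma~\ref{lem:separation on restriction} to pass to the restriction $\graph'$ of $\graph$ to $\nodesetsmallertocheckdsep$, then I would exploit Lemma~\ref{lem:separation under 3 conditions} to leverage the fact that $(\subsetvertexprime,\subsetvertexsep,\subsetvertexaltprime)$ partitions the vertex set of $\graph'$. Before doing so, I would check two containments. The first, $\subsetvertex\subseteq\subsetvertexprime$, follows because $\subsetvertex$ is disjoint from $\subsetvertexsep$ by hypothesis and no node of $\subsetvertex$ can be $d$-separated from $\subsetvertex$ by $\subsetvertexsep$, so $\subsetvertex\cap\subsetvertexaltprime=\emptyset$, while $\subsetvertex\subseteq\nodesetsmallertocheckdsep$ holds trivially. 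The second, $\subsetvertexalt\subseteq\subsetvertexaltprime$, follows whenever $\dsep{\graph}(\subsetvertex,\subsetvertexsep,\subsetvertexalt)$ is assumed, since then every $\vertex\in\subsetvertexalt$ individually satisfies $\dsep{\graph}(\subsetvertex,\subsetvertexsep,\{\vertex\})$. Because $\nodesetsmallertocheckdsep$ is closed under ancestry, Lemma~\ref{lem:separation on restriction} makes the left side of the biconditional equivalent to $\dsep{\graph'}(\subsetvertex,\subsetvertexsep,\subsetvertexalt)$ and the right side equivalent to $\dsep{\graph'}(\subsetvertexprime,\subsetvertexsep,\subsetvertexaltprime)$.

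Once both sides are reformulated in $\graph'$, Lemma~\ref{lem:separation under 3 conditions} reduces $\dsep{\graph'}(\subsetvertexprime,\subsetvertexsep,\subsetvertexaltprime)$ to the absence of the three elementary patterns (a)~$\vertex_\vertexindex\to\vertex_\vertexindexalt$, (b)~$\vertex_\vertexindexalt\to\vertex_\vertexindex$, and (c)~$\vertex_\vertexindex\to\vertex_\vertexindexsep\leftarrow\vertex_\vertexindexalt$ with $\vertex_\vertexindex\in\subsetvertexprime$, $\vertex_\vertexindexalt\in\subsetvertexaltprime$, and $\vertex_\vertexindexsep\in\subsetvertexsep$. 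The $\Leftarrow$ implication is then immediate: using $\subsetvertex\subseteq\subsetvertexprime$ and $\subsetvertexalt\subseteq\subsetvertexaltprime$, any path joining $\subsetvertex$ and $\subsetvertexalt$ is simultaneously a path joining $\subsetvertexprime$ and $\subsetvertexaltprime$, hence blocked by $\subsetvertexsep$. For the $\Rightarrow$ implication I would argue by contradiction: suppose a forbidden elementary path between $\vertex_\vertexindex\in\subsetvertexprime$ and $\vertex_\vertexindexalt\in\subsetvertexaltprime$ exists in $\graph'$. If $\vertex_\vertexindex\in\subsetvertex$, the pattern itself is already an open path from $\subsetvertex$ to $\vertex_\vertexindexalt\in\subsetvertexaltprime$, directly contradicting the definition of $\subsetvertexaltprime$. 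Otherwise $\vertex_\vertexindex\in\subsetvertexprime\setminus\subsetvertex$, and by definition of $\subsetvertexaltprime$ the node $\vertex_\vertexindex$ is not $d$-separated from $\subsetvertex$ by $\subsetvertexsep$; hence there exists an open path $\gamma$ from some $\vertex\in\subsetvertex$ to $\vertex_\vertexindex$. Concatenating $\gamma$ with the forbidden pattern should produce an open path from $\vertex$ to $\vertex_\vertexindexalt$, contradicting $\vertex_\vertexindexalt\in\subsetvertexaltprime$.

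The main obstacle is verifying that this concatenation step actually yields an open path: at the junction $\vertex_\vertexindex$ the combined path can create either a new non-collider (which must not lie in $\subsetvertexsep$) or a new collider (which must have $\vertex_\vertexindex$ or a descendant of $\vertex_\vertexindex$ in $\subsetvertexsep$). Handling this requires a careful case split across the three forbidden patterns (a), (b), (c) and the direction of the last edge of $\gamma$ at $\vertex_\vertexindex$. The non-collider subcases are immediate since $\vertex_\vertexindex\notin\subsetvertexsep$, whereas the new-collider subcases rely on the fact that $\nodesetsmallertocheckdsep=\ancestor{\graph}{\subsetvertex\cup\subsetvertexsep\cup\subsetvertexalt}$, which guarantees a descendant of $\vertex_\vertexindex$ in $\subsetvertex\cup\subsetvertexsep\cup\subsetvertexalt$ and thereby secures the descendants-in-$\subsetvertexsep$ clause. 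This bookkeeping is the technical core of the argument and mirrors the proof of Corollary~1 in~\cite{Kos99}.
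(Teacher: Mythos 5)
The paper offers no proof of this lemma---it only cites Proposition~2 of \cite{Kos99}---so your argument must stand on its own. Your skeleton (restrict to the ancestral set via Lemma~\ref{lem:separation on restriction}, apply the partition characterization of Lemma~\ref{lem:separation under 3 conditions}, then splice a $d$-connecting path $\gamma$ from $\subsetvertex$ to $\vertex_{\vertexindex}\in\subsetvertexprime$ onto the forbidden pattern) is the right one, but the step you yourself flag as the technical core is handled incorrectly. In the new-collider subcase of pattern~(b), where the last edge of $\gamma$ and the edge $\vertex_{\vertexindexalt}\to\vertex_{\vertexindex}$ both point into $\vertex_{\vertexindex}$, a collider is activated only by a descendant in $\subsetvertexsep$; membership of $\vertex_{\vertexindex}$ in $\nodesetsmallertocheckdsep=\ancestor{\graph}{\subsetvertex\cup\subsetvertexalt\cup\subsetvertexsep}$ only provides a descendant in $\subsetvertex\cup\subsetvertexalt\cup\subsetvertexsep$, and a descendant in $\subsetvertex$ or $\subsetvertexalt$ does \emph{not} ``secure the descendants-in-$\subsetvertexsep$ clause'' as you claim. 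The repair is a three-way split: if $\vertex_{\vertexindex}\in\ancestor{\graph}{\subsetvertexsep}$ the collider is active; if $\vertex_{\vertexindex}$ has a descendant in $\subsetvertex$ but none in $\subsetvertexsep$, replace the junction by the (necessarily $\subsetvertexsep$-free) descending chain to that node and obtain an active walk from $\subsetvertex$ to $\vertex_{\vertexindexalt}$, contradicting $\vertex_{\vertexindexalt}\in\subsetvertexaltprime$; if $\vertex_{\vertexindex}$ has a descendant only in $\subsetvertexalt$, append that descending chain to $\gamma$ and obtain an active walk from $\subsetvertex$ to $\subsetvertexalt$, contradicting the \emph{hypothesis} $\dsep{\graph}(\subsetvertex,\subsetvertexsep,\subsetvertexalt)$. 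That last subcase is the only place the hypothesis enters the forward implication, and it is missing from your sketch; you also need, in every subcase, the standard but nontrivial fact that an active walk can be shortened to an active path, since the concatenations may revisit nodes.

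The $\Leftarrow$ implication as you argue it is circular: you invoke $\subsetvertexalt\subseteq\subsetvertexaltprime$, but you established that containment only under the assumption $\dsep{\graph}(\subsetvertex,\subsetvertexsep,\subsetvertexalt)$, which is exactly what the converse is supposed to deliver. The circularity is not cosmetic. For the two-node graph $\vertex_{1}\to\vertex_{2}$ with $\subsetvertex=\{\vertex_{1}\}$, $\subsetvertexalt=\{\vertex_{2}\}$, $\subsetvertexsep=\emptyset$, one computes $\subsetvertexaltprime=\emptyset$ and $\subsetvertexprime=\{\vertex_{1},\vertex_{2}\}$, so $\dsep{\graph}(\subsetvertexprime,\subsetvertexsep,\subsetvertexaltprime)$ holds vacuously while $\dsep{\graph}(\subsetvertex,\subsetvertexsep,\subsetvertexalt)$ fails; no monotonicity argument can close this gap. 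What is true, and what the paper actually uses (in the proof of Theorem~\ref{thm:non-causal Wiener-separation} and in the Appendix), is the forward implication together with the elementary equivalence $\dsep{\graph}(\subsetvertex,\subsetvertexsep,\subsetvertexalt)\Leftrightarrow\subsetvertexalt\subseteq\subsetvertexaltprime$. You should either prove the converse via that equivalence or restrict your claim to the forward direction.
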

\begin{proof}
	This lemma is the same as Proposition~2 in \cite{Kos99}. We refer to \cite{Kos99} for its proof.
\end{proof}

In a graph
$\graph=(\vertexset,\directededgeset)$,
the Markov Blanket of a set 
$\subsetnode$
is the set given by the nodes in $\vertexset\setminus\subsetnode$ that are parents, children or parents of children of nodes in 
$\subsetnode$.
\begin{dfn}[Markov Blanket]
    Given a graph
    $\graph$
    and a set
    $\subsetnode$,
    define the Markov Blanket of $\subsetnode$ as 
    \begin{align*}
        \MarkovBlanket{\graph}{\subsetnode}:=
            \parent{\graph}{\subsetnode}
            \cup
            \child{\graph}{\subsetnode}
            \cup
            \parent{\graph}{\child{\graph}{\subsetnode}}\setminus \subsetnode.
    \end{align*}
\end{dfn}

\begin{thm}[Markov Blanket Separation]\label{thm:MarkovBlanketSeparation}
    Consider a LDIM
    $\ldim$
    with set of nodes $\vertexset$
    and no self-loop in
    $\subsetnode$.
    For every
    $\subsetnodesep\supseteq \MarkovBlanket{\graph}{\subsetnode}$,
    we have that
    $\wsep\left(\subsetnode,\subsetnodesep,\vertexset\setminus (\subsetnodesep \cup \subsetnode)\right)$.
    Furthemore, if 
    $\ldim$
    is causal and causally well-posed we also have
    $\cwsep\left(\subsetnode,\subsetnodesep,\vertexset\setminus (\subsetnodesep \cup \subsetnode)\right)$.
\end{thm}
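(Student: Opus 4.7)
The plan is to exploit the orthogonality characterization of the Wiener filter in Proposition~\ref{prop: my wiener}. Setting $T:=\vertexset\setminus(\subsetnode\cup\subsetnodesep)$ and applying Lemma~\ref{lem:selfloopremoval}, I may assume without loss of generality that the LDIM has no self-loops anywhere (this preserves the hypothesis on $\subsetnode$ and leaves all outputs unchanged). The hypothesis $\subsetnodesep\supseteq\MarkovBlanket{\graph}{\subsetnode}$ then yields three structural consequences: (i) $\adjTF_{\subsetnode T}=0$, since any $T$-node that were a parent of a $\subsetnode$-node would lie in $\parent{\graph}{\subsetnode}\subseteq\MarkovBlanket{\graph}{\subsetnode}\subseteq\subsetnodesep$, contradicting $T\cap\subsetnodesep=\emptyset$; (ii) $\adjTF_{T\subsetnode}=0$ by the analogous argument for children; and (iii) letting $\subsetnodesep_{\subsetnode}:=\child{\graph}{\subsetnode}\cap\subsetnodesep$ and $\subsetnodesep_{T}:=\child{\graph}{T}\cap\subsetnodesep$, these two sets are disjoint, since a common child would force the corresponding $T$-parent into $\parent{\graph}{\child{\graph}{\subsetnode}}\subseteq\MarkovBlanket{\graph}{\subsetnode}\subseteq\subsetnodesep$. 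I therefore partition $\subsetnodesep=\subsetnodesep_{\subsetnode}\sqcup\subsetnodesep_{T}\sqcup\subsetnodesep_{0}$, where $\subsetnodesep_{0}$ collects the remaining nodes.

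By (i) and (ii) together with well-posedness, the dynamics yield
\[
\subsetnode = L\,\subsetnodesep + A_{\subsetnode}\,\noise_{\subsetnode},\qquad T = K\,\subsetnodesep + A_{T}\,\noise_{T},
\]
with $A_{\subsetnode}:=(\identitymatrix-\adjTF_{\subsetnode\subsetnode})^{-1}$, $L:=A_{\subsetnode}\adjTF_{\subsetnode\subsetnodesep}$, $A_{T}:=(\identitymatrix-\adjTF_{TT})^{-1}$, and $K:=A_{T}\adjTF_{T\subsetnodesep}$. Invertibility of $A_{\subsetnode}$ gives $\text{tfspan}(\subsetnode,\subsetnodesep)=\text{tfspan}(\noise_{\subsetnode},\subsetnodesep)$, so by the orthogonality characterization \eqref{eq:perp hilbert projection thm} the claim $\wsep(\subsetnode,\subsetnodesep,T)$ is equivalent to
\[
\E\!\left[(T-\Wiener{T}{\subsetnodesep}\subsetnodesep)\,\noise_{\subsetnode}^{*}\right]=0.
\]
Substituting $T=K\subsetnodesep+A_T\noise_T$, using block-diagonality of $\PSD_{\noise}$, and expressing $\subsetnodesep$ in noise coordinates from the dynamics, this condition collapses to the single algebraic identity $\adjTF_{\subsetnodesep T}^{*}\,\Sigma^{-1}\,\adjTF_{\subsetnodesep\subsetnode}=0$, where $\Sigma:=\adjTF_{\subsetnodesep T}A_{T}\PSD_{\noise_{T}}A_{T}^{*}\adjTF_{\subsetnodesep T}^{*}+\adjTF_{\subsetnodesep\subsetnode}A_{\subsetnode}\PSD_{\noise_{\subsetnode}}A_{\subsetnode}^{*}\adjTF_{\subsetnodesep\subsetnode}^{*}+\PSD_{\noise_{\subsetnodesep}}$.

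The crux of the argument, and the step I expect to require the most care, is verifying this last identity. The rows of $\adjTF_{\subsetnodesep T}$ are supported exclusively in $\subsetnodesep_{T}$ and the rows of $\adjTF_{\subsetnodesep\subsetnode}$ exclusively in $\subsetnodesep_{\subsetnode}$, so each of the three summands defining $\Sigma$ is block-diagonal with respect to the partition $\subsetnodesep_{\subsetnode}\sqcup\subsetnodesep_{T}\sqcup\subsetnodesep_{0}$: the first is supported on $\subsetnodesep_{T}\times\subsetnodesep_{T}$, the second on $\subsetnodesep_{\subsetnode}\times\subsetnodesep_{\subsetnode}$, and $\PSD_{\noise_{\subsetnodesep}}$ is block-diagonal node by node. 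Hence $\Sigma^{-1}$ inherits the same block structure, its $(\subsetnodesep_{T},\subsetnodesep_{\subsetnode})$ off-diagonal block vanishes, and the triple product is zero. The causal statement $\cwsep$ follows from the identical decomposition, with $A_{\subsetnode},A_{T},L,K$ causal by causal well-posedness and the causal Wiener filter characterized by \eqref{eq:perp hilbert projection thm causal}; the block-diagonality of $\Sigma$ does not depend on causality, so the same identity closes the causal argument. A minor technicality to handle is potential singularity of $\PSD_{\noise_{\subsetnode}}$ or $\PSD_{\subsetnodesep}$, which can be accommodated via a pseudoinverse or standard limiting argument.
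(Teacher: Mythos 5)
Your non-causal argument is correct, but it takes a genuinely different route from the paper's. The paper reduces the set case to the singleton case: it marginalizes $\subsetnode\setminus\{\node_{i}\}$, checks that the Markov blanket of $\node_{i}$ in the marginalized graph stays inside $\MarkovBlanket{\graph}{\subsetnode}$, and then invokes the single-node Markov-blanket separation results (Theorems~27 and 34 of \cite{MatSal12}) as black boxes. You instead prove the statement from scratch via the normal equations: the Markov-blanket hypothesis gives $\adjTF_{\subsetnode T}=\adjTF_{T\subsetnode}=0$ and the disjointness of $\subsetnodesep_{\subsetnode}$ and $\subsetnodesep_{T}$, and the innovation covariance $\Sigma$ of $\subsetnodesep$ is block-diagonal with respect to $\subsetnodesep_{\subsetnode}\sqcup\subsetnodesep_{T}\sqcup\subsetnodesep_{0}$, which annihilates $\adjTF_{\subsetnodesep T}^{*}\Sigma^{-1}\adjTF_{\subsetnodesep\subsetnode}$. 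I checked the reduction: writing $M=(\identitymatrix-\adjTF_{\subsetnodesep\subsetnodesep}-\adjTF_{\subsetnodesep\subsetnode}L-\adjTF_{\subsetnodesep T}K)^{-1}$ (invertible by well-posedness, as a Schur complement of $\identitymatrix-\adjTF$), the residual cross-spectrum is $-A_{T}\PSD_{\noise_{T}}A_{T}^{*}\adjTF_{\subsetnodesep T}^{*}\Sigma^{-1}\adjTF_{\subsetnodesep\subsetnode}A_{\subsetnode}\PSD_{\noise_{\subsetnode}}$, the $M$ factors cancel as you claim, and your support argument is sound. Your route has the advantage of being self-contained where the paper's is not; the paper's has the advantage of localizing all analysis in one previously proven single-node result.

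The causal half, however, has a gap as written. The sentence ``the block-diagonality of $\Sigma$ does not depend on causality, so the same identity closes the causal argument'' is not a proof: the causal Wiener filter is not $\PSD_{T\subsetnodesep}\PSD_{\subsetnodesep}^{-1}$, so the computation that collapsed the orthogonality condition to $\adjTF_{\subsetnodesep T}^{*}\Sigma^{-1}\adjTF_{\subsetnodesep\subsetnode}=0$ does not carry over verbatim. One has $\CWiener{T}{\subsetnodesep}=\bigl[\PSD_{T\subsetnodesep}(\PSD_{\subsetnodesep}^{+})^{-*}\bigr]_{+}(\PSD_{\subsetnodesep}^{+})^{-1}$ with $\PSD_{\subsetnodesep}=\PSD_{\subsetnodesep}^{+}(\PSD_{\subsetnodesep}^{+})^{*}$ the canonical spectral factorization, and the quantity that must vanish is $\bigl[A_{T}\PSD_{\noise_{T}}A_{T}^{*}\adjTF_{\subsetnodesep T}^{*}(\Sigma^{1/2})^{-*}\bigr]_{+}(\Sigma^{1/2})^{-1}\adjTF_{\subsetnodesep\subsetnode}$: a causal truncation now sits in the middle and blocks the naive cancellation. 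This is exactly the obstruction the paper itself flags when it says the non-causal derivation ``cannot be repeated for the causal scenario, since it would require an analytical expression for a matrix spectral factorization.'' The good news is that in your setup the factorization is available: causal well-posedness makes $M$ causal and causally invertible, so $\PSD_{\subsetnodesep}^{+}=M\Sigma^{1/2}$ where $\Sigma^{1/2}$ is the canonical factor of $\Sigma$, which inherits the block-diagonal structure of $\Sigma$; the truncation $[\cdot]_{+}$ acts entrywise and therefore preserves the column support in $\subsetnodesep_{T}$, after which disjointness of $\subsetnodesep_{T}$ and $\subsetnodesep_{\subsetnode}$ again gives zero. You need to add this spectral-factorization step explicitly (together with the observation that $\text{ctfspan}(\subsetnode,\subsetnodesep)=\text{ctfspan}(\noise_{\subsetnode},\subsetnodesep)$ by causal invertibility of $A_{\subsetnode}$) for the $\cwsep$ claim to stand.
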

\begin{proof}
    If $\subsetnode$ has a single element, this result is equivalent to Theorem~27 and Theorem~34 in \cite{MatSal12} for the non-causal and causal case respectively.
    In order to show Wiener separation for
    a vector process $\subsetnode$,
    it is enough to show Wiener separation for every element
    in
    $\subsetnode$.
    We show this via some manipulations.
    Define $\node_{\overline{I}}:=\subsetnode\setminus\{\node_{i}\}$ and marginalize $\node_{\overline{I}}$.
    Let $\graph_{i}'$ be the graphical representation of the graph after the marginalization of $\node_{\overline{I}}$. 
    We want to show that $\MarkovBlanket{\graph_{i}'}{\node_{i}}\subseteq \MarkovBlanket{\graph}{\subsetnode}$.
    Trivially all children of and parents of $\node_{i}$ in $\graph_{i}'$ are respectively children and parents of $\subsetnode$ of $\graph$.
    The marginalization of $\node_{\overline{I}}$, though can create some edges in $\graph_{i}'$ that are not in $\graph$.
    There is an edge from $\node_{k}$ to $\node_{\ell}$ in $\graph_{i}'$ that is not in $\graph$ if and only if there is a path in $\graph$ from $\node_{k}$ to $\node_{\ell}$ that involves only nodes in $\node_{\overline{I}}$. Thus $\node_{k}$ needs to be a parent of $\node_{I}$ in $\graph$ and $\node_{\ell}$ needs to be a child of $\node_{I}$.
    Necessarily, then, a coparent of $\node_{i}$ in $\graph_{i}'$ is a coparent of $\node_{i}$ in $\graph$ or a parent of $\node_{I}$ in $\graph$. This implies that 
    $\MarkovBlanket{\graph_{i}'}{\node_{i}}\subseteq \MarkovBlanket{\graph}{\subsetnode}$.
    Then, since $\MarkovBlanket{\graph_{i}'}{\node_{i}} \subseteq \subsetnodesep$ for all $\node_{i}\in\subsetnode$, we have,
    from Theorem~27 in \cite{MatSal12}, that
    $\wsep\left(\node_{i},\subsetnodesep,\vertexset\setminus (\subsetnodesep \cup \subsetnode)\right)$ for all $\node_{i}\in\subsetnode$.
    Indentical step prove the statement 
    $\cwsep\left(\node_{i},\subsetnodesep,\vertexset\setminus (\subsetnodesep \cup \subsetnode)\right)$
    in the causal scenario,
    where the fact that
    $\ldim$
    is a causal and causally well-posed guarantees that
    $\ldim_{\node}'$ is causal and causally well posed
    so that 
    Theorem~34 in \cite{MatSal12} can be applied.
\end{proof}

Lemmas~\ref{lem:separation under 3 conditions},
\ref{lem:separation on restriction},
and \ref{lem:inferDsepOnEnlargedSets},
along with Theorem~\ref{thm:MarkovBlanketSeparation}
allow us to introduce the first main contribution of this article.
The following theorem estabilishes a strong connection between the concept of $d$-separation defined on the graph and the notion of Wiener separation that is instead associated with LDIM.
\begin{thm}\label{thm:non-causal Wiener-separation}
	In a LDIM  with graphical representation $\graph=(\vertexset,\directededgeset)$
	let
	$\subsetnode,\subsetnodesep,\subsetnodealt$
	be disjoint subsets of
	$\vertexset$. Then
	\begin{align*}
	\dsep{\graph}(\subsetnode,\subsetnodesep,\subsetnodealt)
		\Rightarrow
	\wsep(\subsetnode,\subsetnodesep,\subsetnodealt).
	\end{align*}
\end{thm}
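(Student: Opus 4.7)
The plan is to reduce the hypothesis $\dsep{\graph}(\subsetnode, \subsetnodesep, \subsetnodealt)$ to a Markov-blanket containment, apply Theorem~\ref{thm:MarkovBlanketSeparation}, and then descend back to the original sets. First, I would restrict attention to the ancestor set $U := \ancestor{\graph}{\subsetnode \cup \subsetnodesep \cup \subsetnodealt}$; by Lemma~\ref{lem:separation on restriction}, the hypothesis is equivalent to $\dsep{\graph'}(\subsetnode, \subsetnodesep, \subsetnodealt)$ where $\graph'$ is the restriction of $\graph$ to $U$. Next, I would apply Lemma~\ref{lem:node marginalization} to marginalize the non-ancestors $\vertexset \setminus U$ and then Lemma~\ref{lem:selfloopremoval} to clear any residual self-loops on $U$. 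Because no node of $\vertexset \setminus U$ has a descendant in $U$, the block $\adjTF_{U,\vertexset\setminus U}$ of the original transfer matrix vanishes; this annihilates the cross-term in the marginalization formula for $\adjTF'_{UU}$ and also forces $\adjTF'_{U,\vertexset\setminus U}=0$, so the external signals $\noise_{\vertexset\setminus U}$ are disconnected from $U$ in the new LDIM. Consequently the signals $\node_U$ are unchanged and the restriction of the new LDIM's graph to $U$ coincides with $\graph'$.

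Working inside this marginalized, self-loop-free LDIM, I would invoke Lemma~\ref{lem:inferDsepOnEnlargedSets} to enlarge $\subsetnode$ to $\subsetnodeprime$ and $\subsetnodealt$ to $\subsetnodealtprime$ so that $(\subsetnodeprime, \subsetnodesep, \subsetnodealtprime)$ partitions $U$ while preserving $\dsep{\graph'}(\subsetnodeprime, \subsetnodesep, \subsetnodealtprime)$. Lemma~\ref{lem:separation under 3 conditions} then equates this partition-level $d$-separation with the absence of three path patterns between $\subsetnodeprime$ and $\subsetnodealtprime$: $\node_i \to \node_j$, $\node_j \to \node_i$, and $\node_i \to \node_z \leftarrow \node_j$ with $\node_z \in \subsetnodesep$. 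From these I would deduce the Markov-blanket containment $\MarkovBlanket{\graph'}{\subsetnodeprime} \subseteq \subsetnodesep$: the first two patterns exclude children and parents of $\subsetnodeprime$ from $\subsetnodealtprime$, and for any would-be coparent $\node_j \in \subsetnodealtprime$ with shared child $\node_c$, I would case-split on the partition cell containing $\node_c$ --- if $\node_c \in \subsetnodesep$ the third pattern rules it out, if $\node_c \in \subsetnodeprime$ the edge $\node_j \to \node_c$ matches the second pattern, and if $\node_c \in \subsetnodealtprime$ the edge from $\subsetnodeprime$ to $\node_c$ matches the first. Theorem~\ref{thm:MarkovBlanketSeparation} then yields $\wsep(\subsetnodeprime, \subsetnodesep, \subsetnodealtprime)$ in the new LDIM, and hence in the original one since the signals on $U$ agree.

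Finally, I would shrink back to $(\subsetnode, \subsetnodealt)$. From $\wsep(\subsetnodeprime, \subsetnodesep, \subsetnodealtprime)$, restricting to the subvector of components in $\subsetnodealt \subseteq \subsetnodealtprime$ shows that $\wienerprojection{\subsetnodealt}{\subsetnodesep}$ satisfies the orthogonality condition $\E[(\subsetnodealt - \wienerprojection{\subsetnodealt}{\subsetnodesep})q] = 0$ for every $q \in \text{tfspan}(\subsetnodeprime \cup \subsetnodesep)$. Since $\text{tfspan}(\subsetnode \cup \subsetnodesep) \subseteq \text{tfspan}(\subsetnodeprime \cup \subsetnodesep)$, the same orthogonality holds on the smaller span, and the uniqueness clause of Proposition~\ref{prop: my wiener} identifies $\wienerprojection{\subsetnodealt}{\subsetnodesep}$ with the Wiener estimate of $\subsetnodealt$ from $(\subsetnode, \subsetnodesep)$, yielding $\wsep(\subsetnode, \subsetnodesep, \subsetnodealt)$. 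The hardest step in this plan is establishing $\MarkovBlanket{\graph'}{\subsetnodeprime} \subseteq \subsetnodesep$ through the coparent case-analysis; the subsidiary delicate point is verifying that the marginalization introduces no spurious edges inside $U$, which is handled cleanly by performing the ancestor restriction first.
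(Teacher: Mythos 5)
Your proposal is correct and follows essentially the same route as the paper's proof: ancestor restriction justified by Lemma~\ref{lem:separation on restriction}, enlargement to a partition via Lemma~\ref{lem:inferDsepOnEnlargedSets}, the Markov-blanket containment $\MarkovBlanket{\graph'}{\subsetnodeprime}\subseteq\subsetnodesep$ extracted from Lemma~\ref{lem:separation under 3 conditions}, Theorem~\ref{thm:MarkovBlanketSeparation} after self-loop removal, and the nested-span/orthogonality argument to shrink back to $(\subsetnode,\subsetnodealt)$. The only differences are cosmetic (you remove self-loops immediately after the ancestor restriction rather than after writing the block dynamics, and you spell out the coparent case analysis and the final uniqueness step in more detail than the paper does).
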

\begin{proof}
    Let
    $\subsetnode$,
    $\subsetnodesep$,
    and
    $\subsetnodealt$
    be three disjoint subsets such that
    $\dsep{\graph}(\subsetnode,\subsetnodesep,\subsetnodealt)$.
    Define
    $\vertexset'=\ancestor{\graph}{\subsetnode\cup\subsetnodesep\cup\subsetnodealt}$
    and observe that
    $\subsetnode\cup\subsetnodesep\cup\subsetnodealt\subseteq \vertexset'$.
    Since the transfer functions from
    $\node_{\vertexset\setminus\vertexset'}$
    to
    $\node_{\vertexset}$
    are identically zero, we have that
    $\node_{\vertexset'}=\adjTF_{\vertexset'\vertexset'}(\ZTvar)\node_{\vertexset'}+\noise_{\vertexset'}$
    allowing us to define a LDIM
    $\ldim'$
    over the nodes
    $\node_{\vertexset'}$.
    Observe that
    the restriction $\graph'$ of
    $\graph$
    to
    $\vertexset'=\ancestor{\graph}{\subsetnode\cup\subsetnodesep\cup\subsetnodealt}$
    is a graphical representation for $\ldim'$.
    From Lemma~\ref{lem:separation on restriction}, we have that
    $\dsep{\graph'}(\subsetnode,\subsetnodesep,\subsetnodealt)$.
    Defining
    $\subsetnodeprime\subseteq\subsetnode$
    and
    $\subsetnodealtprime\subseteq\subsetnodealt$
    as 
    in Lemma~\ref{lem:inferDsepOnEnlargedSets} we get 
    $\dsep{\graph'}(\subsetnodeprime,\subsetnodesep,\subsetnodealtprime)$.
    Observe that
    $\subsetnodeprime$,
    $\subsetnodealtprime$
    and
    $\subsetnodesep$
    are a partition of
    $\vertexset'$,
    thus, from
    Lemma~\ref{lem:separation under 3 conditions}
    we have that
    $\MarkovBlanket{\graph'}{\subsetnodeprime}\subseteq\subsetnodesep$.
    Also, we can write the dynamics of $\ldim'$ as
    \scriptsize
    \begin{align}
        \left(\begin{array}{c}
            \subsetnodeprime\\
            \subsetnodesep\\
            \subsetnodealtprime
        \end{array}\right)
        =
        \left(\begin{array}{ccc}
            \adjTF_{I'I'} & \adjTF_{I'Z} & 0\\
            \adjTF_{ZI'} & \adjTF_{ZZ} & \adjTF_{ZJ'}\\
            0 & \adjTF_{J'Z} & \adjTF_{J'J'}
        \end{array}\right)
        \left(\begin{array}{c}
            \subsetnodeprime\\
            \subsetnodesep\\
            \subsetnodealtprime
        \end{array}\right)
        +
        \left(\begin{array}{c}
            \noise_{I'}\\
            \noise_{Z}\\
            \noise_{J'}
        \end{array}\right).
    \end{align}
    \normalsize
    By applying the self-loop removal Lemma, this can be rewritten as 
    \scriptsize
    \begin{align}
        \left(\begin{array}{c}
            \subsetnodeprime\\
            \subsetnodesep\\
            \subsetnodealtprime
        \end{array}\right)
        =
        \left(\begin{array}{ccc}
            0 & \tilde\adjTF_{I'Z} & 0\\
            \tilde\adjTF_{ZI'} & 0 & \tilde\adjTF_{ZJ'}\\
            0 & \tilde\adjTF_{J'Z} & 0
        \end{array}\right)
        \left(\begin{array}{c}
            \subsetnodeprime\\
            \subsetnodesep\\
            \subsetnodealtprime
        \end{array}\right)
        +
        \left(\begin{array}{c}
            \noise_{I'}\\
            \noise_{Z}\\
            \noise_{J'}
        \end{array}\right),
    \end{align}
    \normalsize
    so that we obtain a LDIM
    $\ldim''$
    with the same output signals as in
    $\ldim'$.
    In the graphical representation $\graph''$ of $\ldim''$ we still have that
    $\MarkovBlanket{\graph''}{\subsetnodeprime}\subseteq\subsetnodesep$:
    (i) no edge connects directly a node in
    $\subsetnodeprime$
    and a node in
    $\subsetnodealtprime$;
    (ii)
    if a node in $\subsetnodeprime$ and a node in $\subsetnodealtprime$ shared a child in $\graph''$
    then there would be from Lemma~\ref{lem:selfloopremoval}
    a node in $\subsetnodeprime$ and a node in $\subsetnodealtprime$ sharing a child in $\graph'$
    and this is ruled out by the fact that
    $\dsep{\graph'}(\subsetnodeprime,\subsetnodesep,\subsetnodealtprime)$.
    From
    Theorem~\ref{thm:MarkovBlanketSeparation}
    applied to the LDIM $\ldim'$
    we have that
    $\wsep\left(\subsetnodeprime,\MarkovBlanket{\graph''}{\subsetnodeprime},\vertexset'\setminus(\MarkovBlanket{\graph''}{\subsetnodeprime}\cup \subsetnodeprime) \right)$.
    Thus, the Wiener filter estimating 
    $\subsetnodeprime$
    from
    $\vertexset'\setminus \subsetnodeprime$
    has zero components for all signals in
    $\vertexset'\setminus(\MarkovBlanket{\graph''}{\subsetnodeprime}\cup \subsetnodeprime)$.
    Since
    $\MarkovBlanket{\graph'}{\subsetnodeprime}\subseteq\subsetnodesep$,
    we have also that the Wiener filter estimating 
    $\subsetnodeprime$
    from
    $\vertexset'\setminus \subsetnodeprime$
    has zero components for all signals in
    $\vertexset'\setminus (\subsetnodesep \cup \subsetnodeprime)$
    implying
    $\wsep\left(\subsetnodeprime,\subsetnodesep,\vertexset'\setminus (\subsetnodesep \cup \subsetnodeprime) \right)$.
    Since 
    $\subsetnodealt\subseteq \subsetnodealtprime = \vertexset'\setminus (\subsetnodesep \cup \subsetnodeprime)$
    we also have 
    $\wsep\left(\subsetnodeprime,\subsetnodesep,\subsetnodealt \right)$.
    Finally, since 
    $\subsetnode\subseteq \subsetnodeprime$
    we also have that
    $\wsep\left(\subsetnode,\subsetnodesep,\subsetnodealt \right)$.
    The proof for the causal case follows the same steps, where the fact that the LDIM
    $\ldim$ is causal and causally well-posed guarantees via Lemma~\ref{lem:node marginalization} and Lemma~\ref{lem:selfloopremoval} that the LDIMs $\ldim'$ and $\ldim''$ are causal and causally well-posed.
\end{proof}

The importance of Theorem~\ref{thm:non-causal Wiener-separation} can be summarized as follows:
when estimating a certain node from the observations of a set of other nodes,
a graphical test ($d$-separation) allows one to determine if some of the observations are irrelevant for the estimate and thus can be disregarded.
Example~\ref{ex:irrelevant for estimate} illustrates the power of Theorem~\ref{thm:non-causal Wiener-separation} by showing how it is possible to answer questions on sensor locations without solving any optimization problem but relying entirely on the graphical properties of the network topology.
\begin{example}\label{ex:irrelevant for estimate}
	\normalsize
	Consider a LDIM with a graphical representation given by the graph
	$\graph$
	shown in Figure~\ref{fig:exampleIrrelevance}.
	\begin{figure}[b]
		\centering
		\includegraphics[width=0.75\columnwidth]{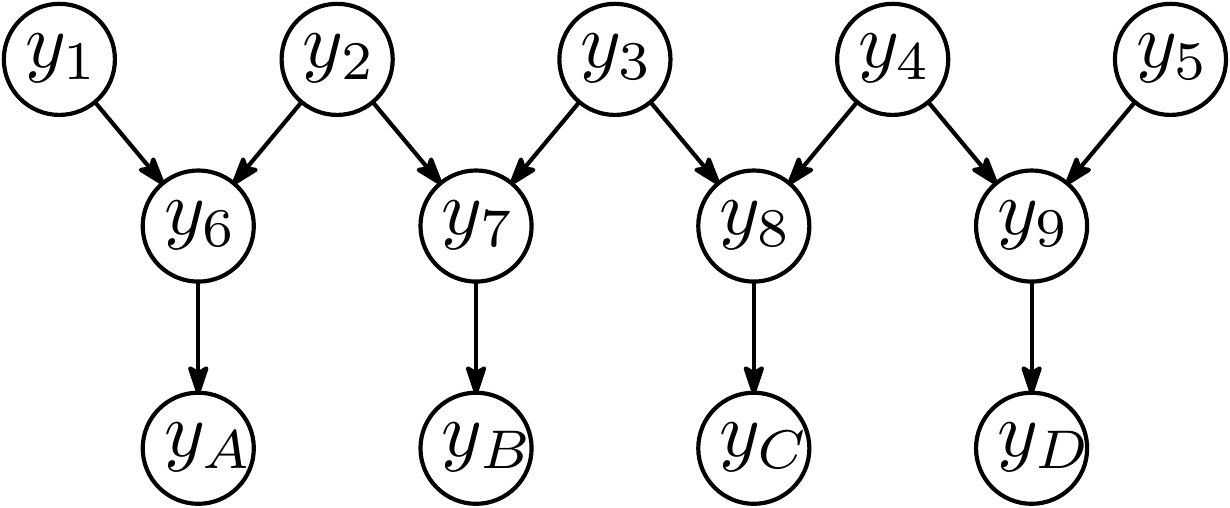}
		\caption{
			Graphical representation of the LDIM of Example~\ref{ex:irrelevant for estimate}.
			The objective is to minimize the least square estimation error on
			$\node_{1}$
			using only three of the four observed nodes
			$\node_{A}$, $\node_{B}$, $\node_{C}$, $\node_{D}$.
			By the iterative application of Theorem~\ref{thm:non-causal Wiener-separation}, it is shown that, irrespective of the power spectal densities of the signals, because of the system structure, the triplet $\node_{A}$, $\node_{B}$, $\node_{C}$ always minimizes the least square error in the estimate.
			\label{fig:exampleIrrelevance}
		}
	\end{figure}
	The objective is to estimate 
	$\node_{1}$.
	As a first scenario assume that the nodes
	$\node_{A}$, $\node_{B}$, $\node_{C}$, $\node_{D}$
	are being measured. 
	Let us assume that $\node_{6}$ is added to the set of observed signals.
	Using Theorem~\ref{thm:non-causal Wiener-separation}, we conclude that addition of
	$\node_{6}$ 
	to the set of observed nodes
	$\{\node_{A}$, $\node_{B}$, $\node_{C}$, $\node_{D}\}$
	makes
	$\node_{A}$
	irrelevant for the estimate of
	$\node_{1}$,
	since
	\begin{align*}
		\dsep{\graph}(\node_{1},\left\{\node_{6},\node_{B}, \node_{C},\node_{D},\right\},\node_{A}).
	\end{align*}
	Thus, having both $\node_{A}$ and $\node_{6}$
	being measured in addition to $\node_{B}$, $\node_{C},$ and $\node_{D}$ being measured is wasteful in estimating $\node_{1}$.
	This is quite intuitive, since 
	$\node_{6}$
	is on the path from $\node_{1}$ to node $\node_{6}$.
	Similarly, the further addition of $\node_{7}$ to the set of observed nodes makes its child $\node_{B}$ irrelevant as well.
	Indeed we have
	\begin{align*}
		\dsep{\graph}(\node_{1},\left\{\node_{6},\node_{7}, \node_{C},\node_{D},\right\},\left\{\node_{A},\node_{B}\right\}).
	\end{align*}
	In general, if we use the nodes
	$\node_{6},\node_{7},\node_{8},\node_{9}, \node_{A},\node_{B},\node_{C},\node_{D}$
	for estimating $\node_{1}$, Theorem~\ref{thm:non-causal Wiener-separation} allows us to conclude that
	$\node_{A},\node_{B},\node_{C},\node_{D}$ are not needed, because
	\begin{align*}
		\dsep{\graph}(\node_{1},\left\{\node_{6},\node_{7}, \node_{8},\node_{9},\right\},\left\{\node_{A},\node_{B},\node_{C},\node_{D}\right\}).
	\end{align*}
	
	A less intuitive consequence of Theorem~\ref{thm:non-causal Wiener-separation} is the following.
	Let us go back to the situation where the observed nodes are again
	$\node_{A}$, $\node_{B}$, $\node_{C}$, $\node_{D}$,
	but now we have a sparsity requirement: we can only use $3$ of those signals for the estimation of $\node_{1}$.
	Observe that, even though
	$\node_{B}$, $\node_{C}$, $\node_{D}$
	are independent of $\node_{1}$,
	their information, in  combination with
	$\node_{A}$,
	can help improve the estimate of $\node_{1}$ that would be obtained using only $\node_{A}$.
	Indeed it can be verified that, in general, the Wiener filter estimating $y_{1}$
	from
	$\node_{A}$, $\node_{B}$, $\node_{C}$, $\node_{D}$
	has $4$ non-zero entries.
	However, the goal is to find a triplet of observed signals optimally estimating $y_{1}$ in the least square sense.
	We start with a straighforward observation: $y_{A}$ necessarily belongs to an optimal triplet since any triplet containing $y_{A}$ performs at least as well as $\{y_{B}$, $y_{C}$, $y_{D}\}$, which are all independent of $y_{1}$.
	This can also be seen via Theorem~\ref{thm:non-causal Wiener-separation}, since
	\begin{align*}
		\dsep{\graph}(\node_{1},\emptyset,\left\{\node_{B},\node_{C}, \node_{D}\right\}).
	\end{align*}
	Having estabilished that $y_{A}$ belongs to an optimal triplet for the estimation of $y_{1}$, we find now that
	$\{y_{A},y_{B}\}$ is necessarily contained in an optimal triplet.
	Indeed, any triplet containing $\{y_{A},y_{B}\}$ performs as well as any other triplet containing $\{y_{A}\}$ when estimating $y_{1}$, since
	\begin{align*}
		\dsep{\graph}(\node_{1},\node_{A},\left\{\node_{C}, \node_{D}\right\}).
	\end{align*}
	implies that the estimate using only $\{y_{A}\}$ performs as well as the estimate using 
	$\{y_{A},y_{C},y_{D}\}$.
    Given that $\{y_{A},y_{B}\}$ is contained in an optimal triplet for the estimation of $y_{1}$, we eventually find that
    $\{y_{A},y_{B},y_{C}\}$ is necessarily optimal, since 
	\begin{align*}
		\dsep{\graph}(\node_{1},\left\{\node_{A},\node_{B}\right\} , \node_{D}).
	\end{align*}
    implies that $\{y_{A},y_{B}\}$ performs as well as $\{y_{A},y_{B},y_{D}\}$.
    Thus, irrespective of the actual transfer functions and power spectral densities of the network
    $\{y_{A},y_{B},y_{C}\}$ is always an optimal triplet for the estimate of $y_{1}$.
	Observe how only graphical properties lead to this conclusion without the need to solve any involved optimization problem with sparsity contraints.
\end{example}

\section{Identification of individual transfer functions in a LDIM}\label{sec:idresults}
This section presents the contributions of the article about the identification of individual transfer functions in a LDIM.

The single door criterion is a powerful tool developed for the identification of parameters in structural equation models \cite{Pea00}. Here, we provide a similar criterion for the identification of transfer functions in a LDIM.

\begin{lem}\label{lem:generalized single door}
	Let
	$\graph=(\nodeset,\directededgeset)$
	be a graphical representation of the LDIM
	$\ldim=(\adjTF(\ZTvar),\noise)$
	with output
    $\node=(\node_{1}^{T},...,\node_{\nodeindexend}^{T})$,
	where $G$ is allowed to have self-loops.
	Let
	$\mutilatedgraph$
	be the graph obtained by removing the link
	$(\node_{\nodeindex},\node_{\nodeindexalt})$
	from the graph $\graph$.
	For some $\subsetnodesep\subset \node$, let 
	$\nodesetsmallertocheckdsep:=\ancestor{\graph}{\node_{\nodeindex},\node_{\nodeindexalt},\subsetnodesep}$.
	Define the following sets of processes:
	\small
	\begin{align*}
		\subsetnodealt &:=\{\node_{\nodeindexaltb}\in \nodesetsmallertocheckdsep|\ \ 
			\dsep{\ov{G}}(\node_{\nodeindexaltb},\subsetnodesep,\node_{\nodeindex})\},&\\
		\subsetnodealtenlarged &:=\subsetnodealt\setminus \{\node_{\nodeindexalt}\}\\
		\node_{K}&:=\{\node_{k}\in\subsetnodesep| \text{ there is an edge } \node_{\nodeindexalt}\to \node_{k}&\\
		& \qquad \qquad \qquad \text{ or there is a chain } \node_{\nodeindexalt}\to ...\to \node_{k}\\
		& \qquad \qquad \qquad \text{ with internal nodes all in } \node_{J} \}.&
	\end{align*}
	\normalsize
Assume $\PSD_{(\node_{\nodeindex},\node_{\nodeindexalt},\node_{Z})}(\ZTvar)>0$ for all $|\ZTvar|=1$ and
\begin{itemize}
    \item[A1.]\label{item:A1} $\node_i$ and $\node_j$ are $d$-separated by $\node_Z$  in the graph $\overline{G}$;
    \item[A2.]\label{item:A2} in the restriction of $G$ to $\node_{J}$, $\node_{j}$ is involved in no loops. 
    \item[A3.]\label{item:A3}There is no link of the form $\node_{k}\to\node_{\nodeindexalt}$ for $\node_{k}\in\node_{K}$; 
\end{itemize}
Under the above assumptions, 
if
$\node_{K}$
is empty we have 
\begin{align}
    \adjTF_{\nodeindexalt\nodeindex}=\Wiener[\node_{\nodeindex}]{\node_{\nodeindexalt}}{(\node_{\nodeindex},\node_{Z})}.
\end{align}
Instead, if $\node_{K}$ is not empty, the following holds
	\begin{align}\label{eq:generalized single door}
	\adjTF_{\nodeindexalt\nodeindex}
		=\left[
				\identitymatrix
				-\Wiener[\node_{K}]{\node_{\nodeindexalt}}{(\node_{\nodeindex},\node_{Z})}
				\adjTF_{Kj}^{(f)}
		\right]^{-1}
		\Wiener[\node_{\nodeindex}]{\node_{\nodeindexalt}}{(\node_{\nodeindex},\node_{Z})},
	\end{align}
	where
	\begin{equation}
	\begin{aligned}\label{eq:HKJf}
		\adjTF_{Kj}^{(f)}&=
				\left(
					\identitymatrix-\left( \adjTF_{KK}+\adjTF_{K\bar J}(\identitymatrix-\adjTF_{\bar J \bar J})^{-1}\adjTF_{\bar JK}
				\right)\right)^{-1}\cdot\\
				&\qquad\qquad\cdot
				\left(
					\adjTF_{Kj}+\adjTF_{K\bar{J}}(\identitymatrix-\adjTF_{\bar{J}\bar{J}})^{-1}\adjTF_{\bar{J}j}
				\right).
	\end{aligned}
	\end{equation}
Here
	$\Wiener[\node_{\nodeindex}]{\node_{\nodeindexalt}}{(\node_{\nodeindex},\node_{Z})}$
	and $W_{\node_j[\node_K]|(\node_i,\node_Z)}$  are the component corresponding to
	$\node_{\nodeindex}$ and $\node_K$, respectively,
	in the Wiener filter that estimates the process
	$\node_{\nodeindexalt}$
	based on processes
	$\node_{\nodeindex}$
	and	 
	${\subsetnodesep}$.
\end{lem}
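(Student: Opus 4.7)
The plan is to prove the identity by a graphical reduction of the LDIM followed by an algebraic manipulation of Wiener orthogonality. First, by Lemma~\ref{lem:separation on restriction} I restrict the LDIM to $\nodesetsmallertocheckdsep = \ancestor{\graph}{\{\node_{\nodeindex}, \node_{\nodeindexalt}\} \cup \subsetnodesep}$, preserving the transfer functions of interest and the $d$-separation in A1. Next, using Lemma~\ref{lem:node marginalization} I marginalize the nodes in $\nodesetsmallertocheckdsep \setminus (\{\node_{\nodeindex}, \node_{\nodeindexalt}\} \cup \subsetnodesep)$, which includes $\subsetnodealtenlarged = \subsetnodealt \setminus \{\node_{\nodeindexalt}\}$, and use Lemma~\ref{lem:selfloopremoval} to remove any induced self-loops, obtaining a reduced LDIM $L''$ on $\{\node_{\nodeindex}, \node_{\nodeindexalt}\} \cup \subsetnodesep$.

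Three structural facts about $L''$ drive the argument. (i) By A1 and item~\ref{item:tfUnchanged} of Lemma~\ref{lem:node marginalization}, the edge $\node_{\nodeindex} \to \node_{\nodeindexalt}$ in $L''$ preserves its transfer function $\adjTF_{\nodeindexalt\nodeindex}$: any chain $\node_{\nodeindex} \to \cdots \to \node_{\nodeindexalt}$ with internal nodes in $\subsetnodealtenlarged$ would be unblocked by $\subsetnodesep$ in $\mutilatedgraph$ and contradict A1. (ii) A2 rules out loops at $\node_{\nodeindexalt}$ whose internal nodes lie in $\subsetnodealtenlarged$, so no self-loop is induced at $\node_{\nodeindexalt}$ and $\adjTF''_{\nodeindexalt\nodeindexalt} = 0$. (iii) Applying the block-matrix formulas of Lemmas~\ref{lem:node marginalization} and~\ref{lem:selfloopremoval} at $\node_{K}$ exactly produces the expression in~\eqref{eq:HKJf}, so the effective transfer from $\node_{\nodeindexalt}$ to $\node_{K}$ in $L''$ equals $\adjTF_{Kj}^{(f)}$; A3 further restricts the edges from $\node_{K}$ back to $\node_{\nodeindexalt}$ used in the next step.

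With $L''$ in hand I turn to the orthogonality argument. The structural constraints imposed by A1 and A3 on $L''$ imply that $\node_{\nodeindex}$ is expressible through noises that do not include the effective input noise $e''_{\nodeindexalt}$ at $\node_{\nodeindexalt}$; combined with the block-diagonal noise PSD preserved by Lemma~\ref{lem:selfloopremoval}, this yields $\E[e''_{\nodeindexalt}\,\node_{\nodeindex}^{*}] = 0$. Writing the structural equation for $\node_{\nodeindexalt}$ in $L''$, substituting the feedback relation $\node_{K} = \adjTF_{Kj}^{(f)}\node_{\nodeindexalt} + (\text{terms orthogonal to } e''_{\nodeindexalt})$ obtained from the $\node_{K}$-equation of $L''$, and projecting both sides onto $\mathrm{span}\{\node_{\nodeindex}, \node_{\subsetnodesep}\}$ via Proposition~\ref{prop: my wiener}, the $\node_{\nodeindex}$-component of the Wiener estimate satisfies $\Wiener[\node_{\nodeindex}]{\node_{\nodeindexalt}}{(\node_{\nodeindex},\node_{\subsetnodesep})} = (\identitymatrix - \Wiener[\node_{K}]{\node_{\nodeindexalt}}{(\node_{\nodeindex},\node_{\subsetnodesep})}\adjTF_{Kj}^{(f)})\adjTF_{\nodeindexalt\nodeindex}$, from which the claim follows by inversion; when $\node_{K}=\emptyset$ this collapses to $\Wiener[\node_{\nodeindex}]{\node_{\nodeindexalt}}{(\node_{\nodeindex},\node_{\subsetnodesep})} = \adjTF_{\nodeindexalt\nodeindex}$. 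The hardest step is this final self-consistent substitution: since only $e''_{\nodeindexalt}\perp\node_{\nodeindex}$ is available (and not $e''_{\nodeindexalt}\perp\node_{K}$, as $\node_{K}$ sees $\node_{\nodeindexalt}$), the inversion factor $(\identitymatrix - \Wiener[\node_{K}]{\node_{\nodeindexalt}}{(\node_{\nodeindex},\node_{\subsetnodesep})}\adjTF_{Kj}^{(f)})^{-1}$ arises exactly from routing the feedback of $\node_{\nodeindexalt}$ through $\node_{K}$ back into itself. The causal version of the claim follows from the same plan with the causal parts of Theorem~\ref{thm:non-causal Wiener-separation} and Lemmas~\ref{lem:node marginalization},~\ref{lem:selfloopremoval}.
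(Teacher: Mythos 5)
Your overall architecture (restrict to $\ancestor{\graph}{\node_{\nodeindex},\node_{\nodeindexalt},\subsetnodesep}$, marginalize, remove self-loops, then project) matches the paper's, and you correctly identify where the correction factor must come from. But the pivotal orthogonality claim in your third paragraph is false, and it is precisely the point at which this lemma is harder than the plain single-door criterion. You assert that A1 and A3 give $\E[e''_{\nodeindexalt}\,\node_{\nodeindex}^{*}]=0$, i.e.\ that the effective noise driving $\node_{\nodeindexalt}$ is orthogonal to $\node_{\nodeindex}$. When $\node_{K}\neq\emptyset$ this fails: $\node_{K}$ consists of descendants of $\node_{\nodeindexalt}$ inside $\subsetnodesep$, and nothing forbids edges from $\node_{K}$ into $\node_{\nodeindex}$ --- in the three-node loop of Example~\ref{ex:CLIDwithTF} one has exactly $\node_{\nodeindexalt}\to\node_{K}\to\node_{\nodeindex}$. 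Hence $\node_{\nodeindexalt}$ is an ancestor of $\node_{\nodeindex}$, the noise entering at $\node_{\nodeindexalt}$ propagates into $\node_{\nodeindex}$, and $\E[e''_{\nodeindexalt}\node_{\nodeindex}^{*}]\neq 0$. Assumption A1 only gives separation in $\mutilatedgraph$ \emph{after conditioning on} $\subsetnodesep$, not unconditional orthogonality, and A3 only excludes the single edges $\node_{k}\to\node_{\nodeindexalt}$, not the chain $\node_{\nodeindexalt}\to\node_{K}\to\node_{\nodeindex}$. So your ``hardest step'' --- deriving the inversion factor from $e''_{\nodeindexalt}\perp\node_{\nodeindex}$ together with the feedback relation for $\node_{K}$ --- rests on a premise that the feedback itself destroys. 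A secondary deviation: you marginalize $\node_{\bar{I}}$ away, whereas the paper keeps those nodes and instead uses Theorem~\ref{thm:non-causal Wiener-separation} to show that adjoining $\node_{\bar{I}}$ to the conditioning set leaves the Wiener estimate unchanged; this is what later makes the span of the observations coincide with the span of a convenient noise set.

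The paper's substitute for the missing orthogonality is a change of coordinates. After marginalizing only $\node_{\bar{J}}$ and removing self-loops, it shows $\mathrm{tfspan}\{\node_{\nodeindex},\node_{\bar{I}},\subsetnodesep\}=\mathrm{tfspan}\{\noise_{\nodeindex},\noise_{\bar{I}},\noise_{Z_P},\noise_{Z_I},\epsilon_{K},\epsilon_{L}\}$ for suitable effective noises, of which only $\epsilon_{K}$ and $\epsilon_{L}$ are correlated with the effective noise $\epsilon_{\nodeindexalt}$ at $\node_{\nodeindexalt}$. The projection of $\epsilon_{\nodeindexalt}$ onto the observation span therefore collapses to a projection onto $\mathrm{tfspan}\{\epsilon_{K},\epsilon_{L}\}$; re-expressing $\epsilon_{K}$ in the observed signals reintroduces a term $-\adjTF_{Kj}^{(f)}\adjTF_{\nodeindexalt\nodeindex}\node_{\nodeindex}$, and matching the $\node_{\nodeindex}$-coefficient against the unique Wiener filter of Proposition~\ref{prop: my wiener} yields
\begin{align*}
\Wiener[\node_{\nodeindex}]{\node_{\nodeindexalt}}{(\node_{\nodeindex},\node_{Z})}
=\left(\identitymatrix-\Wiener[\node_{K}]{\node_{\nodeindexalt}}{(\node_{\nodeindex},\node_{Z})}\adjTF_{Kj}^{(f)}\right)\adjTF_{\nodeindexalt\nodeindex},
\end{align*}
which is then inverted. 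Your proposal needs this device (or an equivalent one) in place of the false orthogonality; as written, the step that is supposed to produce the inversion factor does not go through.
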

\begin{proof}
	See the Appendix.
\end{proof}

We illustrate the application of Lemma~\ref{lem:generalized single door} with an example.
\begin{example}\label{ex:revolvingdoorsets}
    Consider a LDIM with graph representation as in Figure~\ref{fig:revolvingdoorsets}(a) where we have chosen $\node_{i}=\node_{1}$ and $\node_{j}=\node_{2}$.
    \begin{figure}[h!]
        \centering
        \begin{tabular}{cc}
        \includegraphics[width=0.3\columnwidth]{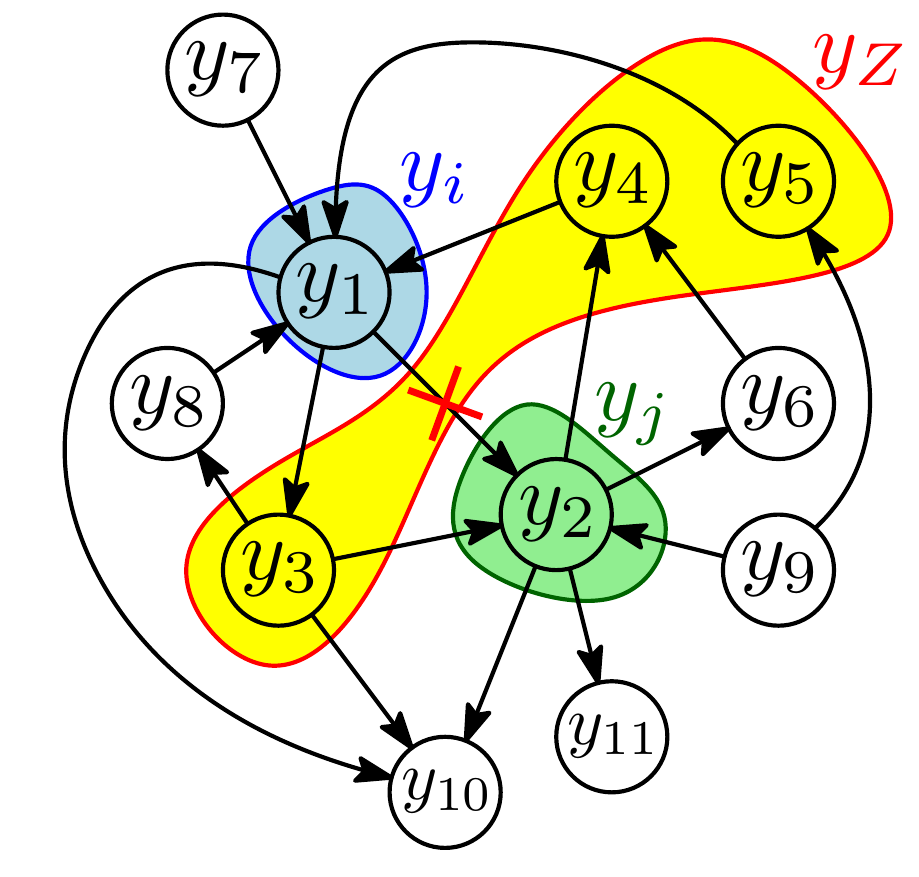} &
        \includegraphics[width=0.3\columnwidth]{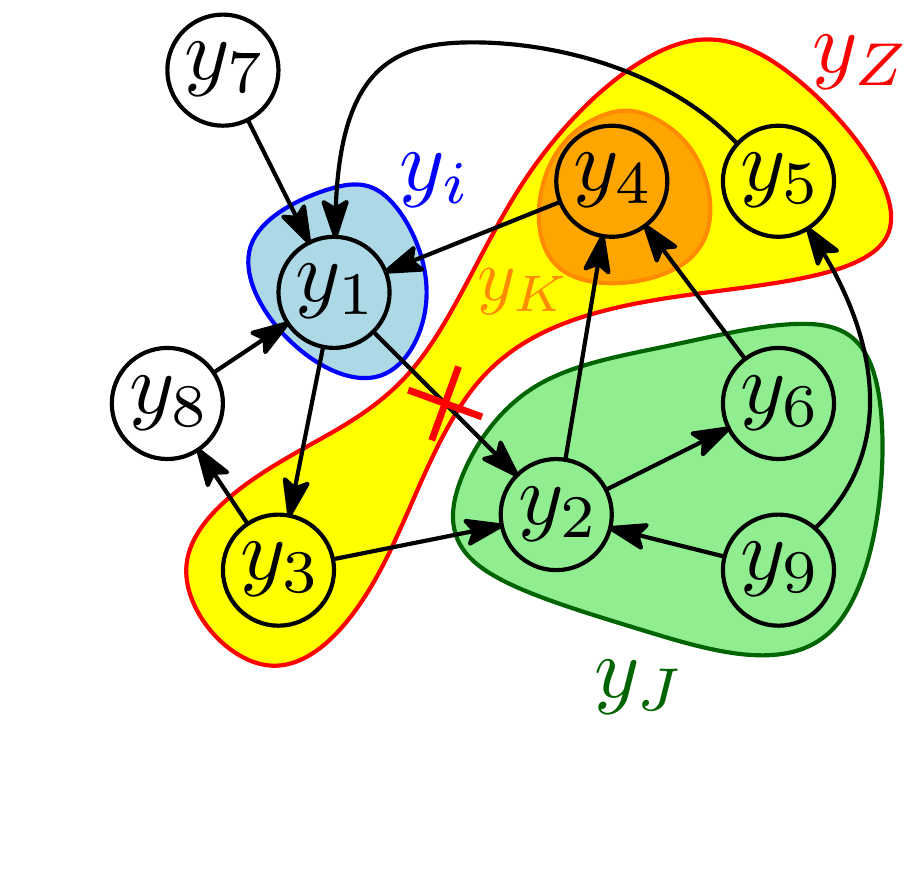} \\
        (a) & (b)
        \end{tabular}
        \caption{
        (a) Application of Lemma~\ref{lem:generalized single door} with $y_{i}=y_{1}$, $y_{j}=y_{2}$ and 
        $y_{Z}=\{y_{3},y_{4},y_{5}\}$ in Example~\ref{ex:revolvingdoorsets}.
        (b) Restriction to $an_{G}(\{y_{i},y_{j}\}\cup y_{Z})$ and sets $y_{J}$ and $y_{K}$.
        Observe that if the ``door'' (edge) from $y_{i}$ and $y_{j}$ is removed, the two nodes are $d$-separated by $y_{Z}$.
        \label{fig:revolvingdoorsets}
        }
    \end{figure}
    Let also $\node_{Z}=\{\node_{3},\node_{4},\node_{5}\}$.
    Observe that Assumption A1. is verified since $dsep(\node_{1},\{\node_{3},\node_{4},\node_{5}\},\node_{2})$ in the graph $\overline{G}$ obtained by removing the edge $\node_{1}\to \node_{2}$ (marked with a cross).
    The restriction of the LDIM to $\node_{U}=an_{G}\{\{\node_{i},\node_{j}\}\cup \node_{Z}\}$ is shown in Figure~\ref{fig:revolvingdoorsets}(b), where it is also shown the set $\node_{J}$ of all the nodes in $\node_{U}$ that are $d$-separated from $\node_{i}$ by $\node_{Z}$ in $\overline{G}$. $\node_{J}$ is given by $\{\node_{2},\node_{6},\node_{9}\}$.
    Assumption A2. is then trivially verified.
    The only node in $\node_{K}$ is $\node_{4}$, because it is the only node in $\node_{Z}$ for which there is a chain starting from $\node_{j}$ with all internal nodes in $\node_{J}$.
    Thus, Assumption A3. is verified as well. The application of the Lemma~\ref{lem:generalized single door} yields
    \begin{align*}
        \adjTF_{21}
        =
        (1-W_{\node_{2}[\node_{4}]|(\node_{1},\node_{4})}\adjTF^{(f)}_{41})^{-1}W_{\node_{2}[\node_{1}]|(\node_{1},\node_{4})},
    \end{align*}
    where the expression of $\adjTF^{(f)}_{41}$ is given by \eqref{eq:HKJf}.
\end{example}

In Equation~(\ref{eq:generalized single door}), the term
$\Wiener[\node_{\nodeindex}]{\node_{\nodeindexalt}}{(\node_{\nodeindex},\node_{Z})}$
is a function of the power spectral densities of
$\node_{\nodeindex}$,
$\node_{\nodeindexalt}$
and
$\subsetnodesep$), thus computable from data.
Instead, the term
$\adjTF_{Kj}^{(f)}$
is in general not computable from the power spectral densities.
If it is known that the set
$\subsetnodesep$
contains no descendants of
$\node_{\nodeindex}$
we have that $\node_{K}$ is empty and thus
$\adjTF_{\nodeindexalt\nodeindex}$
matches the expression of 
$\Wiener[\node_{\nodeindex}]{\node_{\nodeindexalt}}{(\node_{\nodeindex},\node_{Z})}$,
leading to the following important consequence. 
\begin{thm}[Single door criterion for LDIMs]\label{thm:single door}
	Let
	$\graph=(\nodeset,\directededgeset)$
	be a graphical representation of the LDIM
	$\ldim=(\adjTF(\ZTvar),\noise)$
	with output
    $\node=(\node_{1}^{T},...,\node_{\nodeindexend}^{T})$,
	where $G$ has no self-loop in $y_{j}$.
	Let
	$\mutilatedgraph$
	be the graph obtained by removing the link
	$(\node_{\nodeindex},\node_{\nodeindexalt})$
	from the graph $\graph$.
    Assume $\PSD_{(\node_{\nodeindex},\node_{\nodeindexalt},\node_{Z})}(\ZTvar)>0$ for all $|\ZTvar|=1$ and
    \begin{itemize}
    \item[A1.]\label{item:A1b} $\node_i$ and $y_j$ are $d$-separated by $\node_Z$  in the graph $\overline{G}$;
    \item[A4.]\label{item:A4}  $\node_Z$  has no descendants of $y_{j}$.
    \end{itemize}
    Then, $\adjTF_{ji}=\Wiener[\node_{\nodeindex}]{\node_{\nodeindexalt}}{(\node_{\nodeindex},\node_{Z})}$.
\end{thm}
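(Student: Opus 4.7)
The plan is to derive Theorem~\ref{thm:single door} as a direct specialization of Lemma~\ref{lem:generalized single door}. Since A1 appears verbatim in both statements, the work reduces to (i) showing that under A4 the auxiliary set $y_K$ defined in Lemma~\ref{lem:generalized single door} collapses to the empty set, and (ii) checking that the remaining assumptions A2 and A3 of the lemma are satisfied, so that the general identity \eqref{eq:generalized single door} degenerates into the claimed expression $Q_{ji} = W_{y_j[y_i]\mid(y_i,y_Z)}$.

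The first step, $y_K = \emptyset$, is the easy one. By the very definition of $y_K$ in Lemma~\ref{lem:generalized single door}, every $y_k \in y_K$ is reached from $y_j$ either by the edge $y_j \to y_k$ or by a directed chain $y_j \to \cdots \to y_k$; in either case $y_k \in \descendant{G}{y_j}$. Since $y_K \subseteq y_Z$ and A4 stipulates that $y_Z$ contains no descendants of $y_j$, this forces $y_K = \emptyset$. Assumption A3 of the lemma, which prohibits edges $y_k \to y_j$ for $y_k \in y_K$, then becomes vacuous.

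For A2, namely that $y_j$ is involved in no loops in the restriction of $G$ to $y_J$, the length-one case is killed by the hypothesis that $G$ has no self-loop on $y_j$. For cycles of length $\geq 2$, one would argue by contradiction: such a cycle would lie entirely among nodes that are $d$-separated from $y_i$ by $y_Z$ in $\overline{G}$, and one would trace how the resulting $d$-connection on the mutilated graph interacts with A1 and A4 to rule it out. This is the step I expect to be the main obstacle, because $d$-separation behaves delicately in the presence of directed cycles and the argument must avoid creating new $d$-connecting paths through colliders on the candidate cycle.

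Once A1--A3 are in place, the matrix inverse $[\identitymatrix - W_{y_j[y_K]\mid(y_i,y_Z)}\adjTF_{Kj}^{(f)}]^{-1}$ in \eqref{eq:generalized single door} reduces to the identity because $y_K = \emptyset$, and $\adjTF_{Kj}^{(f)}$ does not need to be computed. The lemma then yields $\adjTF_{ji} = \Wiener[\node_{\nodeindex}]{\node_{\nodeindexalt}}{(\node_{\nodeindex},\node_{Z})}$, which is exactly the theorem's claim. The strict positivity condition $\PSD_{(\node_i,\node_j,\node_Z)}(\ZTvar) \succ 0$ is inherited from the lemma's hypothesis and guarantees that the Wiener filter component on the right-hand side is uniquely defined, closing the proof.
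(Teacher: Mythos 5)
Your overall route is exactly the paper's: the published proof of Theorem~\ref{thm:single door} is a one-sentence reduction to Lemma~\ref{lem:generalized single door}, asserting that the no-self-loop hypothesis together with A4 yields A2, A3 and $y_{K}=\emptyset$. Your arguments for $y_{K}=\emptyset$ (every element of $y_{K}$ is by construction a descendant of $y_{j}$ lying in $y_{Z}$, which A4 forbids) and for the consequent vacuity of A3 are correct and coincide with what the paper intends.

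The step you leave open, the verification of A2, is a genuine gap, and your suspicion that it is the delicate point is well founded: A2 does \emph{not} follow from ``no self-loop at $y_{j}$'' together with A1 and A4. If $y_{j}$ lies on a directed cycle of length at least two, every node of that cycle is simultaneously a descendant and an ancestor of $y_{j}$; A4 forces all of them outside $y_{Z}$, and nothing in A1 prevents them from being $d$-separated from $y_{i}$ by $y_{Z}$ in $\mutilatedgraph$, in which case they all land in $y_{J}$ and A2 fails. Concretely, take the three-node LDIM with edges $y_{i}\to y_{j}$, $y_{j}\to y_{p}$, $y_{p}\to y_{j}$ and $y_{Z}=\emptyset$: A1 and A4 hold vacuously and there is no self-loop at $y_{j}$, yet
\begin{align*}
W_{y_{j},[y_{i}]|y_{i}}=\left(1-Q_{jp}Q_{pj}\right)^{-1}Q_{ji}\neq Q_{ji}
\end{align*}
in general, so it is not merely the proof but the conclusion that breaks down; the obstruction is precisely that marginalizing $y_{\bar J}=\{y_{p}\}$ creates a self-loop at $y_{j}$. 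The paper's phrase ``the absence of loops in $y_{j}$'' silently strengthens the stated hypothesis from ``no self-loop'' to ``$y_{j}$ is involved in no directed cycle within $an_{\graph}(\{y_{i},y_{j}\}\cup y_{Z})$''; under that stronger reading A2 is immediate, since any cycle through $y_{j}$ in the restriction of $\graph$ to $y_{J}$ is in particular a cycle through $y_{j}$ in $\graph$. To close your proof you must either adopt this stronger hypothesis explicitly, or add an assumption guaranteeing that every directed cycle through $y_{j}$ meets $y_{i}$ or $y_{Z}$ (the latter being excluded by A4, this again reduces to excluding such cycles); no $d$-separation argument from A1 and A4 alone can supply A2.
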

\begin{proof}
    The absence of loops in $y_{j}$ and assumption A4. implies assumptions A2. and A3. in Lemma~\ref{lem:generalized single door} and the fact that $y_{K}=\emptyset$, giving immediately the assertion.
\end{proof}
A criterion similar to Theorem~\ref{thm:single door}, but valid for the identification of simple proportional gains and not general transfer functions, is known in the area of Structural Equation Models as the Single Door criterion \cite{Pea00}.
The main intuition behind Theorem~\ref{thm:single door} is the following:
if there is only a ``single door'' (edge) from
$\node_{\nodeindex}$
to
$\node_{\nodeindexalt}$
that prevents these two nodes from being ``separated''
(in the sense of $d$-separation) by a set $y_{Z}$ that does not contain any descendants of $y_{j}$, 
then the Wiener filter estimating $y_{j}$ from $y_{i}\cup y_{Z}$ has component associated with $y_{i}$ equal to the transfer function $\adjTF_{ji}$.

The identification of transfer functions in presence of confounding variables is a topic of active investigation (see \cite{DanVan17}) and Theorem~\ref{thm:single door} can leverage the notion of $d$-separation towards such a goal, as the following example shows.
\begin{example}[A network with a confounder and no observable external variable]\label{ex:confounder}
	\normalsize
	Consider the graph
	$\graph$
	of Figure~\ref{fig:confounding}(a) representing a LDIM following the dynamics
	$\node=\noise+\adjTF(\ZTvar)\node$
	where the only potentially non-zero entries of
	$\adjTF(\ZTvar)$
	are
	$\adjTF_{21}$, $\adjTF_{32}$, $\adjTF_{14}$, $\adjTF_{41}$ and $\adjTF_{34}$. 
	\begin{figure}[hbt!]
		\centering
		\begin{tabular}{ccc}
		\includegraphics[width=0.29\columnwidth]{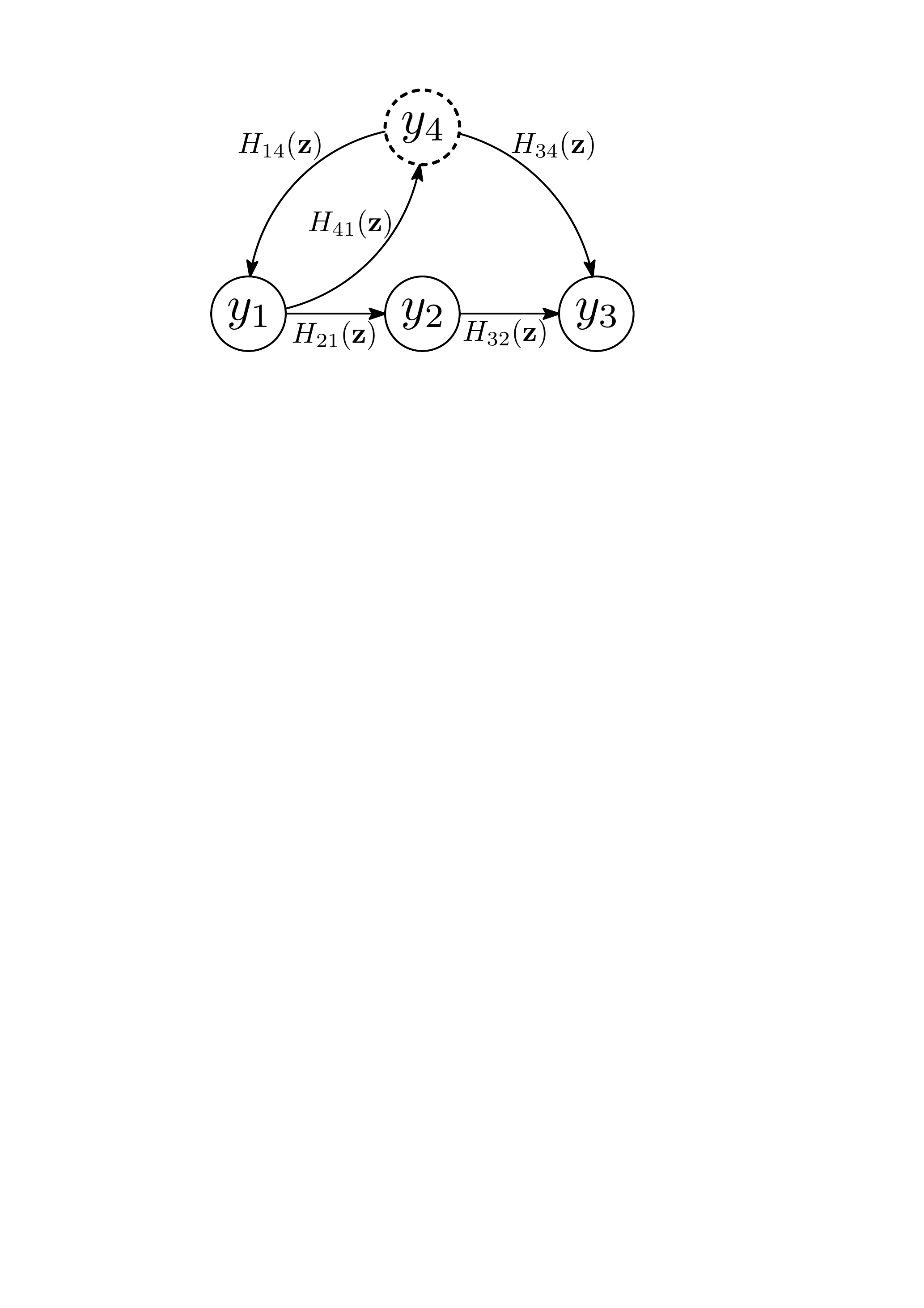} &
		\includegraphics[width=0.29\columnwidth]{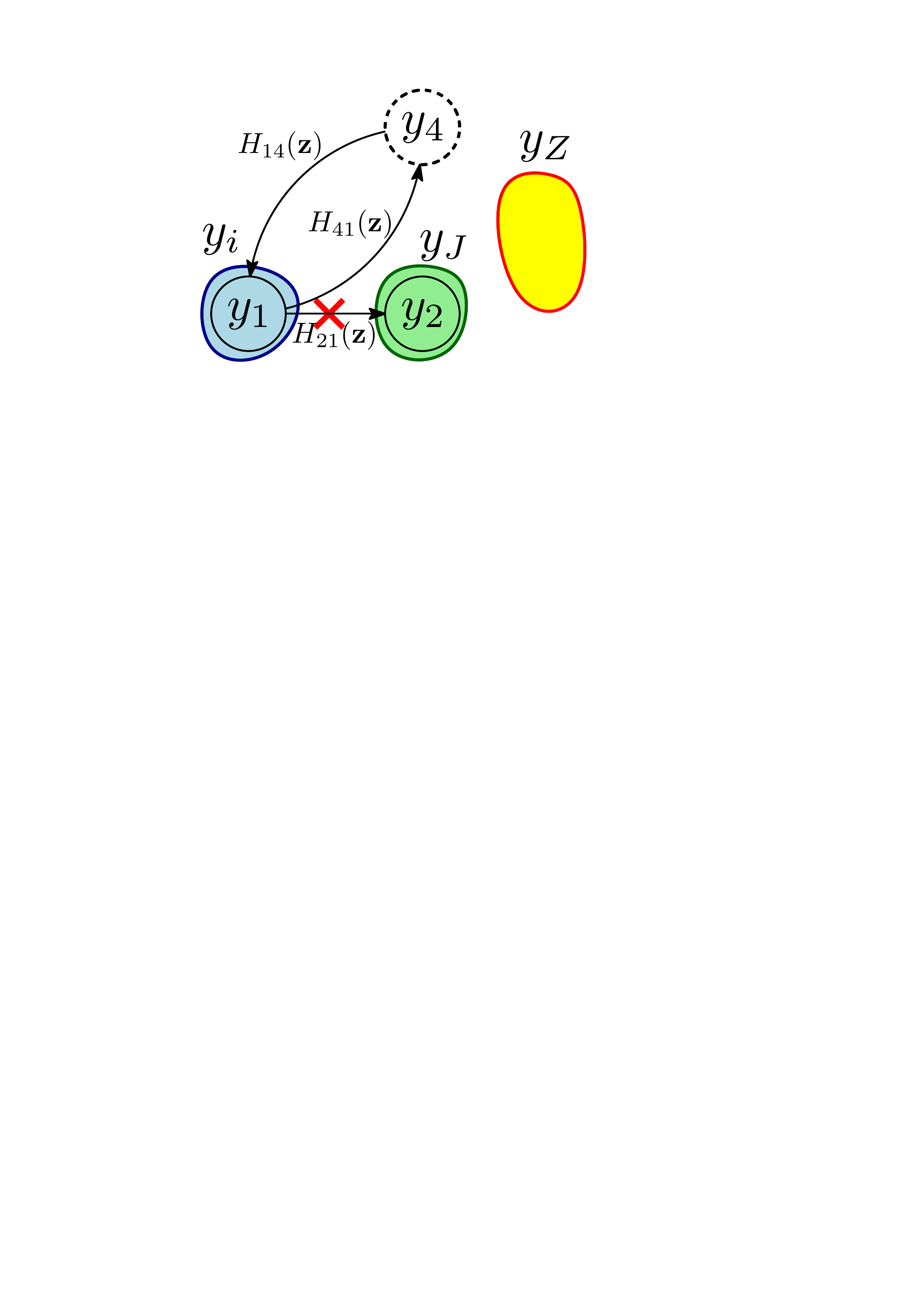} &
		\includegraphics[width=0.29\columnwidth]{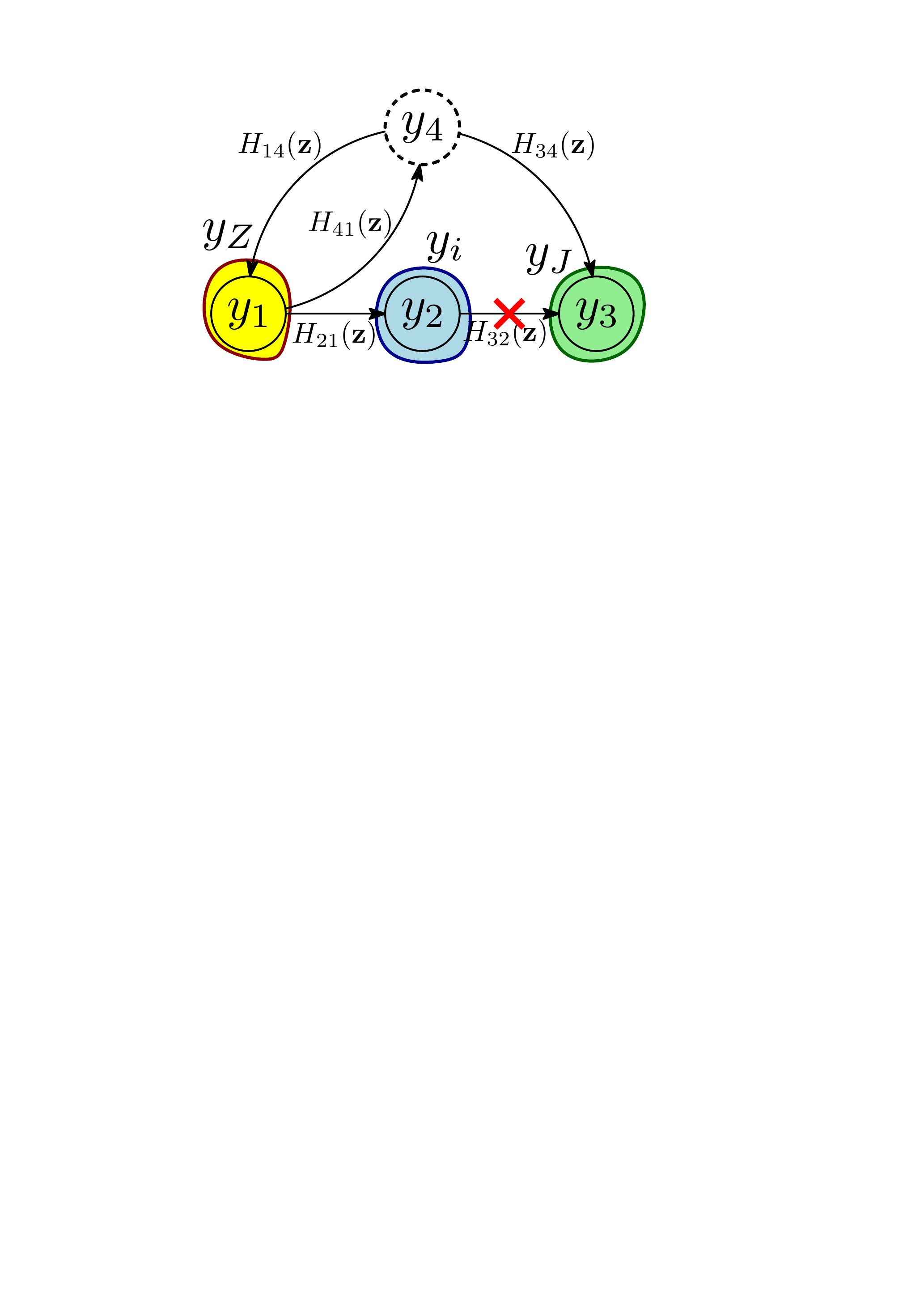}\\
		(a) & (b) & (c)
		\end{tabular}
		\caption{
			(a) Network with a confounding variable $\node_{4}$.
			If the confounding variable were not present, the identification of the transfer functions
			$\adjTF_{21}(\ZTvar)$
			and
			$\adjTF_{32}(\ZTvar)$
			could be obtained as
			$\adjTF_{21}(\ZTvar)=\PSD_{\node_{2}\node_{1}}\PSD_{\node_{1}\node_{1}}^{-1}$
			and
			$\adjTF_{32}(\ZTvar)=\PSD_{\node_{3}\node_{2}}\PSD_{\node_{2}\node_{2}}^{-1}$.
			The presence of the confounding variable makes these relations not correct in the general case.
			The application of the Single Door Criterion
			(Theorem~\ref{thm:single door}) provides a solution to this identification problem that is robust with respect to the presence/absence of the $\node_{4}$.
			(b) Application of the Single Door Criterion for the identification of 
			$\adjTF_{21}(\ZTvar)$.
			(c) Application of the Single Door Criterion for the identification of 
            $\adjTF_{32}(\ZTvar)$
			\label{fig:confounding}
		}
	\end{figure}
	Also assume that only $\node_{1}$, $\node_{2}$ and $\node_{3}$ are observed, but not $\node_{4}$.
	The goal is to identify the transfer functions
	$\adjTF_{21}(\ZTvar)$
	and
	$\adjTF_{32}(\ZTvar)$.
	Let $\graph_{\ov{1\to 2}}$ be the graph obtained by removing the edge $1\to 2$ from $\graph$.
	Observe that 
	$\dsep{\graph_{\ov{1\to 2}}}(\node_{1},\emptyset,\node_{2})$.
	By applying Theorem~\ref{thm:single door}, with $y_{Z}=\emptyset$, as shown in Figure~\ref{fig:confounding}(b), we get
	\begin{align}\label{eq:ex confounding estimate H21}
		\adjTF_{21}
		=
		\Wiener[\node_{1}]{\node_{2}}{\node_{1}}
		=
		\PSD_{\node_{2}\node_{1}}\PSD_{\node_{1}\node_{1}}^{-1}.
	\end{align}
	This is not surprising because
	$\node_{2}$
	is influenced only by node
	$\node_{1}$
	and Equation~\eqref{eq:ex confounding estimate H21} is a trivial consequence of Wiener-Khinchin Theorem \cite{KaiSay00}.
	A more difficult task is to identify
	$\adjTF_{32}(\ZTvar)$
	because of the presence of the confounding process
	$\node_{4}$
	that is not observed.
	Indeed the relation
	\begin{align}\label{eq:ex nonrobust ID}
		\adjTF_{32}=\PSD_{\node_{3}\node_{2}}\PSD_{\node_{2}\node_{2}}^{-1}.
	\end{align}
	could be obtained from Wiener-Khinchin Theorem only if $\adjTF_{34}=0$ or $\PSD_{\node_{4}}=0$.
	Instead, Theorem~\ref{thm:single door} provides a way to identify 
	$\adjTF_{32}(\ZTvar)$.
	Let $\graph_{\ov{2\to 3}}$ be the graph obtained by removing the edge $2\to 3$ from $\graph$.
	Observe that
	$\dsep{\graph}(\node_{2},\node_{1},\node_{3})$.
	By applying Theorem~\ref{thm:single door} with $\node_{Z}=\{y_{1}\}$, as shown in Figure~\ref{fig:confounding}(b), we obtain
	\begin{align}\label{eq:ex confounding estimate H32}
		\adjTF_{32}
		&=
		\Wiener[\node_{2}]{\node_{3}}{(\node_{2},\node_{1})}\\
		&=
		\left(\begin{array}{cc}
			\PSD_{\node_{3}\node_{2}} & \PSD_{\node_{3}\node_{1}}
		\end{array}\right)
		\left(\begin{array}{cc}
			\PSD_{\node_{2}\node_{2}} & \PSD_{\node_{2}\node_{1}}\\
			\PSD_{\node_{1}\node_{2}} & \PSD_{\node_{1}\node_{1}}
		\end{array}\right)^{-1}
		\left(\begin{array}{c}
			1\\ 0
		\end{array}\right). \nonumber
	\end{align}
	Thus, from \eqref{eq:ex confounding estimate H32}, we have that the identification of
	$\adjTF_{32}(\mathbf{z})$
	can be obtained from the knowledge of the power spectral densities of
	$\node_{1}$,
	$\node_{2}$
	and
	$\node_{3}$
	only.
	Notice that, irrespective of node
	$\node_{4}$
	and how it influences
	$\node_{1}$ and $\node_{3}$,
	Equation~\eqref{eq:ex confounding estimate H32}
	always provides an unbiased identification for $\adjTF_{32}(\mathbf{z})$.
	In particular $\node_{4}$ might exist or not in the network and still 
	\eqref{eq:ex confounding estimate H32} would be correct.
	In other words, \eqref{eq:ex confounding estimate H32} is modification of 
	\eqref{eq:ex nonrobust ID} that is robust with respect to the presence of $\node_{4}$ or the link $\node_{4}\to\node_{3}$.
	Thus, the identification approach of this article provides a novel form of robustness against uncertainties in the network structure.
\end{example}

We also consider an example similar to Example~\ref{ex:confounder} to show that the assumption that the unknown forcing inputs $\noise$ are mutually independent can be relaxed in the framework developed in this article.
This can be done by exploiting the fact that not all outputs $y$ need to be measure to apply the single door criterion. 
\begin{example}\label{ex:DSF comparison}
    Consider a network in the DSF form $y=Q(\ZTvar)y+P(\ZTvar)u$ with a non-strictly causal $Q$ and a non-diagonal $P$
    \small
\begin{align*}
	\left(\begin{array}{c}
	y_{1}\\
	y_{2}\\
	y_{3}
	\end{array}\right)
	=
	&
	\left(\begin{array}{ccc}
	0 & 0 & 0\\
	\adjTF_{21} & 0 & 0\\
	0 & \adjTF_{32} & 0
	\end{array}\right)
	\left(\begin{array}{c}
	y_{1}\\
	y_{2}\\
	y_{3}
	\end{array}\right)\\
	&+
	\left(\begin{array}{ccc}
	P_{11} & 0 & P_{13}\\
	0 & P_{22} & 0\\
	0 & 0 & P_{33}
	\end{array}\right)
	\left(\begin{array}{c}
	u_{1}\\
	u_{2}\\
	u_{3}
	\end{array}\right)
\end{align*}
\normalsize
where $\PSD_{u}(\ZTvar)$ is diagonal. The objective is to determine $\adjTF_{32}$ and $\adjTF_{21}$ by knowing
\small
\begin{align*}
	\Phi_{(y_{1},y_{2},y_{3})}
	=
	\left(\begin{array}{ccc}
	2 & 2 & 3\\
	2 & 3 & 4\\
	3 & 4 & 6
	\end{array}\right);
	&
    \Phi_{u}
	=
	\left(\begin{array}{ccc}
	1 & 0 & 0\\
	0 & 1 & 0\\
	0 & 0 & 1
	\end{array}\right);\\
    Q
	=
	\left(\begin{array}{ccc}
	0 & 0 & 0\\
	* & 0 & 0\\
	0 & * & 0
	\end{array}\right);
	&
    P
	=
	\left(\begin{array}{ccc}
	* & 0 & *\\
	0 & * & 0\\
	0 & 0 & *
	\end{array}\right).
\end{align*}
\normalsize
where $*$ denotes an entry potentially different from zero.
Observe that this DSF model does not meet the assumptions in \cite{hayden2017network}, since the $Q$ is not strictly causal and $P$ is not diagonal.
The results in \cite{GonWar08} can provide $Q$ and $P$ from the knowledge of  $G(\ZTvar)=(I-Q(\ZTvar))^{-1}P(\ZTvar)$,
by solving the matrix equality
\small
\begin{align}\label{eq:DSF informativity conditions}
	G=QG+P=(Q, P)
		\left(\begin{array}{cc}
		G\\
		I
		\end{array}\right)
	\Leftrightarrow
	G^{T}= (G^{T}, I)
		\left(\begin{array}{cc}
		Q^{T}\\
		P^{T}
		\end{array}\right).
\end{align}
\normalsize

Such an equation admits a solution if $3$ elements in each row of the matrix $(Q,P)$ are already known (DSF informativity conditions in \cite{GonWar08}). Observe that $4$ elements in each row of $(Q,P)$ are known, thus the problem of computing $Q$ and $P$ from the knowledge $G$ is overdetermined.
On the other hand, $G(\ZTvar)$ is not known directly, but it is known that $G(\ZTvar)G(\ZTvar)^{*}=\Phi_{(y_{1},y_{2},y_{3})}$. From the sparsity patterns of $Q$ and $P$, we have the sparsity pattern of $G$
\small
\begin{align*}
	G=
	(I-Q)^{-1}P
	=
	\left(\begin{array}{ccc}
		* & 0 & *\\
		* & * & *\\
		* & * & *
	\end{array}\right).
\end{align*}
\normalsize
Multiple $G(\ZTvar)$ satisfy such a factorization with such a sparsity pattern, for example
\small
\begin{align*}
	G_2=
	\left(\begin{array}{ccc}
	\sqrt{2} & 0 & 0\\
	\sqrt{2} & 1 & 0\\
	\frac{3}{\sqrt{2}} & 1 & \frac{1}{\sqrt{2}}
	\end{array}\right);
	&
    G_{3}
    =
	\left(\begin{array}{ccc}
        -\frac{2}{\sqrt{3}} & 0 &  \frac{\sqrt{2}}{\sqrt{3}}\\
        \frac{1-2\sqrt{2}}{\sqrt{6}} & \frac{1}{\sqrt{2}} & \frac{1+\sqrt{2}}{\sqrt{3}}\\
        \frac{\sqrt{2}-7}{2\sqrt{3}} & \frac{1+\sqrt{2}}{2} & \frac{1+\sqrt{2}}{3}
	\end{array}\right).
\end{align*}
\normalsize
Thus, finding the correct factor $G$ from $G(\ZTvar)G(\ZTvar)^{*}=\Phi_{(y_{1},y_{2},y_{3})}$ is an underdetermined problem. If we try to plug $G_{2}$ or $G_{3}$ into Equation~\eqref{eq:DSF informativity conditions}, we find no $(Q,P)$ with the known sparsity pattern (we remind that Equation~\eqref{eq:DSF informativity conditions} was overdetermined).
The fundamental challenge is that the factorization $\Phi_{(y_{1},y_{2},y_{3})}=GG^{*}$ admits an infinite number of solutions and it is a quadratic problem in the entries of $G$ which in turn needs to satisfy $G=(I-Q)P^{-1}$ with specific sparsity patterns for $Q$ and $P$. The infinite solutions, then, need to be checked against Equation~\eqref{eq:DSF informativity conditions} to see if they are admissible.
To the best knowledge of the authors, this problem is difficult to tackle using current DSF tools.

On the other hand, such a network is not a LDIM, since $e=P(\ZTvar)u$ does not have a diagonal PSD.
The problematic input is $u_{3}$ which influences both $y_{1}$ and $y_{3}$.
However, by introducing a new output node $y_{4}$ we can rewrite the system, so that it is a LDIM
\small
\begin{align*}
	\left(\begin{array}{c}
	y_{1}\\
	y_{2}\\
	y_{3}\\
	y_{4}
	\end{array}\right)
	=
	\left(\begin{array}{cccc}
	0 & 0 & 0 & 1\\
	1 & 0 & 0 & 0\\
	0 & 1 & 0 & 1\\
	0 & 0 & 0 & 0
	\end{array}\right)
	\left(\begin{array}{c}
	y_{1}\\
	y_{2}\\
	y_{3}\\
	y_{4}
	\end{array}\right)
	+
	\left(\begin{array}{c}
	e_{1}\\
	e_{2}\\
	e_{3}\\
	e_{4}
	\end{array}\right).
\end{align*}
\normalsize
    In this form, the system is recognized as a LDMI of the same form of the LDMI in Example~\ref{ex:confounder}.
    The application of the single door criterion gives
\begin{align*}
	Q_{21}(z)&=\Phi_{y_2 y_1}(z)/\Phi_{y_1 y_1}(z)=\frac{2}{2}=1\\
    Q_{32}(z)
    &=
    \left( \Phi_{y_{3}y_{1}}, \Phi_{y_{3}y_{2}}\right) 
	\left(\begin{array}{ccc}
        \Phi_{y_{1}y_{1}} &  \Phi_{y_{1}y_{2}}\\
        \Phi_{y_{1}y_{2}} & \Phi_{y_{2}y_{2}}
	\end{array}\right)^{-1}
	\left(\begin{array}{c}
	1\\0
	\end{array}\right)\\
	&=
	(3, 4)
	\left(\begin{array}{cc}
        2 & 2 \\
        2 & 3
	\end{array}\right)^{-1}
	\left(\begin{array}{c}
	0\\1
	\end{array}\right)
	=1.
\end{align*}
    Observe that the single door criterion has not used the fact that $\PSD_{u}=I$, but only that it is diagonal.
\end{example}

We have shown that the single door criterion allows us to identify a single transfer function
$\adjTF_{\nodeindexalt\nodeindex}(\ZTvar)$
in a LDIM with no self-loop at $\node_{\nodeindexalt}$ without necessarily observing all the signals of the network.
The main assumption is that, after removing the edge
$\node_{\nodeindex}\to \node_{\nodeindex}$
it must be possible to $d$-separate
$\node_{\nodeindex}$
and
$\node_{\nodeindexalt}$
in the graph of the LDIM using no descendants of
$\node_{\nodeindexalt}$.
This assumption is not very limiting if the edge
$\node_{\nodeindex}\to \node_{\nodeindexalt}$
is not involved in a directed loop. Indeed, the standard formulation of the single door criterion has been developed in the context of Directed Acyclic Graphs where this assumption is always verified. 
However, if the edge
$\node_{\nodeindex}\to \node_{\nodeindexalt}$
is involved in a directed loop, it is not possible to $d$-separate
$\node_{\nodeindex}$
and
$\node_{\nodeindexalt}$
making no use of descendants of
$\node_{\nodeindexalt}$. 
The following example shows how to use Lemma~\ref{lem:generalized single door} instead of Theorem~\ref{thm:single door} when the node 
$\node_{\nodeindexalt}$
is involved in a loop if other transfer functions of the network are known.
\begin{example}[Closed loop identification with knowledge of a transfer function]\label{ex:CLIDwithTF}
	\normalsize
	\begin{figure}[ht]
		\centering
        \begin{tabular}{cc}
		\includegraphics[width=0.28\columnwidth]{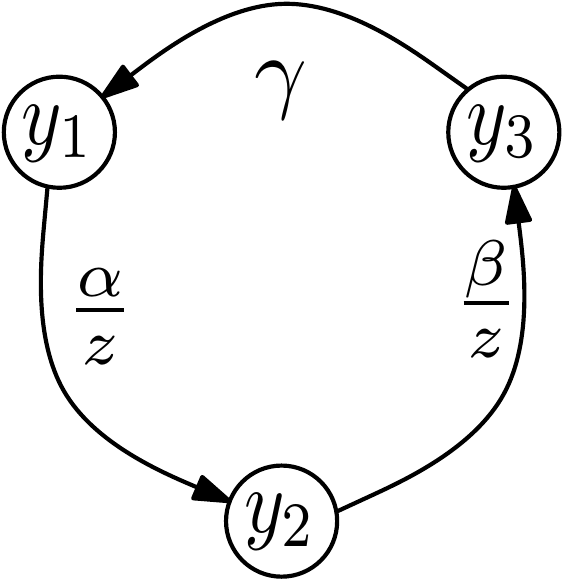} & 
		\includegraphics[width=0.28\columnwidth]{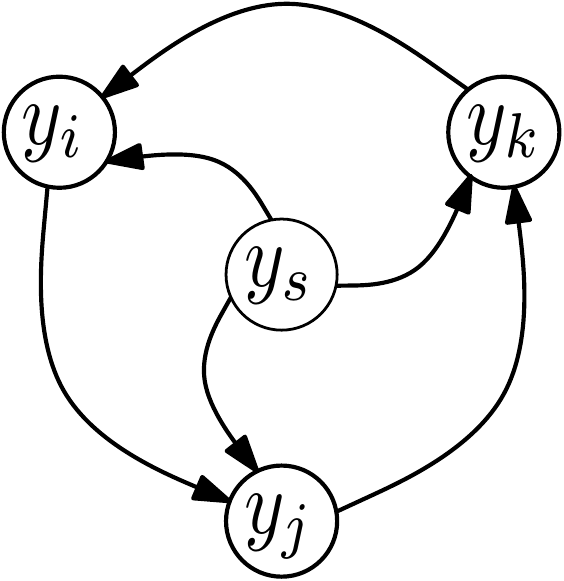}\\
		(a) & (b)
        \end{tabular}
		\caption{
			(a) The loop network of Example~\ref{ex:CLIDwithTF} and Example~\ref{ex:CLIDwithoutTF}.
			(b)         Network configuration of Theorem~\ref{thm:revolvingDoor}.
			\label{fig:loopID}
            \label{fig:revolvingDoorWithOneExternal}
		}
	\end{figure}
	Consider again the loop network in Figure~\ref{fig:loopID}(a) and this time assume that it is only known that the network only involves causal transfer functions.
	The objective is to identify the transfer function
	$\adjTF_{21}$ relying on the power spectral densities of the output processes of the LDIM.
	After removing the link
	$\node_{1}\to \node_{2}$,
	it is not possible to
	$d$-separate nodes
	$\node_{1}$
	and
	$\node_{2}$
	without using node $\node_{3}$
	which is a descendant of $\node_{2}$.
	Thus, the single door criterion (Theorem~\ref{thm:single door}) cannot be used.
	However, if it is known that
	$\adjTF_{32}(\ZTvar)=\frac{\beta}{\ZTvar}$
	we can apply Lemma~\ref{lem:generalized single door}
	where
	$\node_{\nodeindex}=\node_{1}$,
	$\node_{\nodeindexalt}=\node_{2}$,
	$\subsetnodesep=\node_{K}=\{\node_{3}\}$,
	$\adjTF_{K{\nodeindexalt}}^{(f)}=\adjTF_{32}(\ZTvar)=\frac{\beta}{\ZTvar}$.
	For the application of Lemma~\ref{lem:generalized single door} to compute
	$\adjTF_{21}$
	we need the quantities
	$\Wiener[\node_{1}]{\node_{2}}{(\node_{1},\node_{3})}$
	and
	$\Wiener[\node_{3}]{\node_{2}}{(\node_{1},\node_{3})}$
	which can be computed from the knowledge of the power and cross-power spectral densities of the signals
	$\node_{1}$,
	$\node_{2}$, and
	$\node_{3}$. 
	Several methods exist to estimate the power and cross-power spectral densities from data.
	We illustrate the application of Lemma~\ref{lem:generalized single door} using the analytical expressions of
	$\Wiener[\node_{1}]{\node_{2}}{(\node_{1},\node_{3})}$
	and
	$\Wiener[\node_{3}]{\node_{2}}{(\node_{1},\node_{3})}$
	which are
	\begin{align}\label{eq: example loop Wiener for node 1}
		\Wiener[\node_{1}]{\node_{2}}{(\node_{1},\node_{3})}
		=
		\frac{\alpha\PSD_{\noise_{3}}\ZTvar}{\PSD_{\noise_{3}}\ZTvar^2+\beta^{2}\PSD_{\noise_{2}}}\text{ and}\\
		\Wiener[\node_{3}]{\node_{2}}{(\node_{1},\node_{3})}
		=
		\frac{\beta\PSD_{\noise_{2}}\ZTvar}{\PSD_{\noise_{3}}\ZTvar^2+\beta^{2}\PSD_{\noise_{2}}}.
	\end{align}
	From Lemma~\ref{lem:generalized single door} we find
	\begin{align*}
		\adjTF_{21}
		&=
		\left(
			1-\Wiener[\node_{3}]{\node_{2}}{(\node_{1},\node_{3})}\adjTF_{32}
		\right)
		\Wiener[\node_{1}]{\node_{2}}{(\node_{1},\node_{3})}\\
		&=
		\left(
			1
			-\frac{\beta\PSD_{\noise_{2}}\ZTvar}
					{\PSD_{\noise_{3}}\ZTvar^2+\beta^{2}\PSD_{\noise_{2}}}\frac{\beta}{\ZTvar}
		\right)
		\frac{\alpha\PSD_{\noise_{3}}\ZTvar}{\PSD_{\noise_{3}}\ZTvar^2+\beta^{2}\PSD_{\noise_{2}}}
		=\frac{\alpha}{\ZTvar}
	\end{align*}
	and thus the transfer function $\adjTF_{21}$ is recovered correctly.
	Notice how the identification results without measuring the input signals of the network
	$\noise_{1}$,
	$\noise_{2}$, and
	$\noise_{3}$,
	but just its outputs. Also no assumption is needed about the color of the noises.
\end{example}

\subsection{Generalized single door criterion for closed loop identification}
Lemma~\ref{lem:generalized single door} can overcome the limitations of the standard single-door criterion to deal with closed loop identification problems even when other transfer functions in the network are not known and potentially null (to model partial knowledge of the network). 
To exemplify the versatility of Lemma~\ref{lem:generalized single door}, we derive a novel methodology for the identification of a transfer function involved in a closed loop in the potential presence of a measurable common process affecting all the nodes of the loop.
A wealth of other, generally parametric, techniques exist to identitify individual transfer functions in networks \cite{van2013identification,DanVan15,DanVan16}.
These techniques can be applied to situations where loops are present, but require the presence of at least a strictly causal transfer function in each loop and some form of knowledge about which of the transfer functions in the loop is strictly causal \cite{van2013identification,DanVan15}, or cannot deal in general with uncertainties in the network topology because edges in the graph cannot represent null transfer functions \cite{DanVan16}.
Our method has the advantage of not requiring such knowledge and of being applicable to scenarios where edges are only potentially present. Furthermore, our method, being non-parametric, does not need an explicit noise model and the solution of non-convex optimization problems.
However such a technique requires the solution of an implicit relation involving the transfer function $\adjTF_{ji}(\ZTvar)$.
\begin{thm}\label{thm:revolvingDoor} [Revolving Door Criterion]
	Consider a LDIM
	$\ldim=(\adjTF(\ZTvar),\noise)$
	with output
	$(\node_{i},\node_{j},\node_{k},\node_{s})$ and graphical representation, $\graph=(\nodeset,\directededgeset)$, shown in Figure~\ref{fig:revolvingDoorWithOneExternal}(b).
	Then, the following implicit relation involving the transfer function $\adjTF_{ji}(\ZTvar)$ holds
	\small
	\begin{align}\label{eq:three floor equation}
		\adjTF_{ji}=
		\left\{
			\identitymatrix-\Wiener[\node_{k}]{j}{i,\node_{k},s}\Wiener[j]{\node_{k}}{J,i,s}-\Wiener[\node_{k}]{j}{i,\node_{k},s}\Wiener[i]{\node_{k}}{j,i,s}\cdot
			\right. \nonumber \\
		\left.
			\cdot(\identitymatrix-\Wiener[j]{i}{\node_{k},j,s}\adjTF_{ji}-\Wiener[\node_{k}]{i}{\node_{k},j,s}\Wiener[i]{\node_{k}}{j,i,s})^{-1}\cdot
			\right.\nonumber \\
		\left.
			\Wiener[\node_{k}]{i}{\node_{k},j,s}\Wiener[j]{\node_{k}}{j,i,s}
		\right\}^{-1}\Wiener[i]{j}{i,\node_{k},s}.
	\end{align}
	\normalsize
\end{thm}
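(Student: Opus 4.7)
The plan is to apply the generalized single door criterion (Lemma~\ref{lem:generalized single door}) once for each of the three directed edges of the loop $y_i \to y_j \to y_k \to y_i$, using the remaining two output nodes (together with the common exogenous signal $y_s$) as the conditioning set. Each single application expresses one of the loop's transfer functions in terms of computable Wiener-filter components and one of the \emph{other} transfer functions in the loop. Cycling through all three edges and substituting produces the claimed implicit equation for $\adjTF_{ji}$.

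First I would check, for each of the three applications, the hypotheses A1--A3 of Lemma~\ref{lem:generalized single door}. Take $\adjTF_{ji}$ and choose $y_Z=\{y_k,y_s\}$. In the mutilated graph $\mutilatedgraph$ obtained by removing the edge $y_i\to y_j$, every path between $y_i$ and $y_j$ passes through $y_k$ (as a non-collider on $y_i\leftarrow y_k\leftarrow y_j$) or through the fork at $y_s$, so both are blocked by $y_Z$ and A1 holds. The set $J$ reduces to $\{y_j\}$ (the only ancestor outside $y_Z\cup\{y_i\}$ that is $d$-separated from $y_i$ in $\mutilatedgraph$), so $\bar J=\emptyset$, A2 holds trivially, and the only node of $y_Z$ reachable from $y_j$ via a chain with internal nodes in $J$ is $y_k$, giving $K=\{y_k\}$ and (since $\adjTF_{K\bar J}=0$) $\adjTF_{Kj}^{(f)}=\adjTF_{kj}$. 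A3 is immediate because no direct edge $y_k\to y_j$ exists. Lemma~\ref{lem:generalized single door} then yields
\begin{equation}\label{eq:revolv-ji}
\adjTF_{ji}=\bigl[\identitymatrix-\Wiener[\node_k]{j}{(i,k,s)}\adjTF_{kj}\bigr]^{-1}\Wiener[\node_i]{j}{(i,k,s)}.
\end{equation}

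Repeating this procedure for the other two edges in the loop by permuting roles (for $\adjTF_{kj}$ use $y_Z=\{y_i,y_s\}$ and $K=\{y_i\}$; for $\adjTF_{ik}$ use $y_Z=\{y_j,y_s\}$ and $K=\{y_j\}$) produces the two analogous identities
\begin{equation}\label{eq:revolv-kj}
\adjTF_{kj}=\bigl[\identitymatrix-\Wiener[\node_i]{k}{(j,i,s)}\adjTF_{ik}\bigr]^{-1}\Wiener[\node_j]{k}{(j,i,s)},
\end{equation}
\begin{equation}\label{eq:revolv-ik}
\adjTF_{ik}=\bigl[\identitymatrix-\Wiener[\node_j]{i}{(k,j,s)}\adjTF_{ji}\bigr]^{-1}\Wiener[\node_k]{i}{(k,j,s)}.
\end{equation}
The verification of A1--A3 in each case is symmetric to the first; in particular, each single application is legitimate because, after removing the target edge, the remaining two nodes of the loop are $d$-separated from the source by the conditioning set.

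The remaining step is purely algebraic: substitute \eqref{eq:revolv-ik} into \eqref{eq:revolv-kj} to eliminate $\adjTF_{ik}$, then substitute the resulting expression for $\adjTF_{kj}$ into \eqref{eq:revolv-ji}. Using the block matrix identity $(I-AB)^{-1}A=A(I-BA)^{-1}$ to move the inner inverse to the proper side, and collecting terms, gives an implicit equation in which $\adjTF_{ji}$ appears both on the left and inside the middle inverse factor on the right, matching the displayed expression in the theorem statement. I expect the main obstacle to be precisely this bookkeeping: ensuring the order of the matrix factors is preserved (the Wiener-filter components are matrix-valued so do not commute) and tracking which inverses exist -- existence of each inverse in the chain follows from the well-posedness of the LDIM together with the assumption $\PSD_{(\node_i,\node_j,\node_k,\node_s)}(\ZTvar)\succ 0$, which is inherited through each marginalization used inside Lemma~\ref{lem:generalized single door}.
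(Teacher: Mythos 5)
Your proposal is correct and follows essentially the same route as the paper's proof: three applications of Lemma~\ref{lem:generalized single door} around the loop with conditioning sets $\{\node_k,\node_s\}$, $\{\node_i,\node_s\}$, $\{\node_j,\node_s\}$, yielding exactly the paper's intermediate relations for $\adjTF_{ji}$, $\adjTF_{kj}$, $\adjTF_{ik}$ (with $\bar J=\emptyset$ forcing $\adjTF_{Kj}^{(f)}=\adjTF_{kj}$ in each case), followed by back-substitution and the matrix inversion lemma. The hypothesis checks (A1 via the fork at $\node_s$ and the non-collider $\node_k$, $J=\{\node_j\}$, $K$ a singleton, A2--A3 trivial) match the paper's verification.
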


\begin{proof}
Consider the graph, $G_{\overline{i\rightarrow j}}$, resulting from removing the link from $\node_i\rightarrow \node_j $ from graph $G.$ In the graph $G_{\overline{i\rightarrow j}}$, the paths connecting $i$ and $j$ are: $\node_j\rightarrow \node_k\rightarrow\node_i,\ \node_j\rightarrow \node_k\leftarrow \node_s\rightarrow\node_i,\ \node_j\leftarrow \node_s\rightarrow\node_k\rightarrow\node_i,\ \mbox{and, }\  \node_j\leftarrow \node_s\rightarrow \node_i$ which are all paths that are d-separated by the set $\{\node_s,\ \node_k\}.$ Moreover,  the set $J:=\{\node_m|dsep_{G_{\overline{i\rightarrow j}}}(\node_m,\{\node_s,\ \node_k\},\node_i)\}=\{\node_j\}.$  Note that $\node_j$ is not in any loop in $G$ restricted to $\node_J\cup \node_k=\node_j\cup\node_k.$ Thus, all the conditions of Lemma~\ref{lem:generalized single door} are satisfied and thus 
\begin{align}\label{eq:Hji}
		\adjTF_{\nodeindexalt\nodeindex}
		=\left[
				\identitymatrix
				-\Wiener[\node_{k}]{\node_{\nodeindexalt}}{(\node_{\nodeindex},\node_{k},\node_{s})}
				\adjTF_{kj}^{(f)}
			\right]^{-1}\Wiener[\node_{\nodeindex}]{\node_{\nodeindexalt}}{(\node_{\nodeindex},\node_{k},\node_{s})}.
	\end{align}
where 
	\begin{align*}
		\adjTF_{kj}^{(f)}=&
				\left(
					\identitymatrix-\left( \adjTF_{kk}+\adjTF_{k\bar J}(\identitymatrix-\adjTF_{\bar J \bar J})^{-1}\adjTF_{\bar Jk}
				\right)\right)^{-1}\cdot\\
				&\quad\cdot\left(
					\adjTF_{kj}+\adjTF_{k\bar{J}}(\identitymatrix-\adjTF_{\bar{J}\bar{J}})^{-1}\adjTF_{\bar{J}j}
				\right).
	\end{align*}
Note that $\bar{J}=\emptyset$ and there is no self-loop at $k$. Thus, $\adjTF_{kj}^{(f)}=\adjTF_{kj}.$
It follows in a similar manner that:
	\begin{align}\label{eq:Hkj}
		\adjTF_{kj}=
			\left[
				\identitymatrix
				-\Wiener[\node_{\nodeindex}]{\node_{k}}{(\node_{\nodeindexalt},\node_{\nodeindex},\node_{s})}
				\adjTF_{ik}
			\right]^{-1}\Wiener[\node_{\nodeindexalt}]{\node_{k}}{(\node_{\nodeindexalt},\node_{\nodeindex},\node_{s})}
	\end{align}
and 
	\begin{align}\label{eq:Hik}
		\adjTF_{ik}=
			\left[
				\identitymatrix
				-\Wiener[\node_{\nodeindexalt}]{\node_{\nodeindex}}{(\node_{k},\node_{\nodeindexalt},\node_{s})}
				\adjTF_{ji}
			\right]^{-1}\Wiener[\node_{k}]{\node_{\nodeindex}}{(\node_{k},\node_{\nodeindexalt},\node_{s})}.
	\end{align}
	By replacing  \eqref{eq:Hik} in  \eqref{eq:Hkj} and the resulting expression into \eqref{eq:Hji} we obtain the assertion using the matrix inversion lemma.
\end{proof}

\noindent{\bf Remark:}
Theorem~\ref{thm:revolvingDoor} represents only an instance of the application of Lemma~\ref{lem:generalized single door}. There are many results that can be derived on identification of  transfer functions involved in loops. For instance, Theorem~\ref{thm:revolvingDoor} can easily be extended to involve a closed-loop with multiple  intermediate nodes between $\node_i$ and $\node_j$ and not just $\node_K;$ the resulting analog of relation (\ref{eq:three floor equation}) will still be quadratic in $\adjTF_{21}.$ Furthermore, it is straightforward to generalize the results to include non-ancestors of the variables $\node_i,\ \node_j\, \node_k$ and $\node_s$ as there is no loss of generality, in the conclusions of Lemma~\ref{lem:generalized single door},  by  restricting the graph to the set of ancestors $U=an\{\node_i,\ \node_j\, \node_k,\ \node_s\}$.

\noindent{\bf Remark:}
Theorem~\ref{thm:revolvingDoor} provides an implicit relation for the transfer function
$\adjTF_{\nodeindexalt\nodeindex}$ that needs to be solved.
This relation is quadratic in the variable $\adjTF_{\nodeindexalt\nodeindex}$ providing in general two solutions.
The following example shows that, at least in some situations, imposing the additional condition that 
$\adjTF_{\nodeindexalt\nodeindex}$
is a causal transfer function helps to determine
$\adjTF_{\nodeindexalt\nodeindex}$
unequivocally.

\noindent{\bf Remark:}
The presence of the observable external signal $y_{s}$ is not required, since, contrary to other identification methods, Theorem~\ref{thm:revolvingDoor} can seamlessly deal with null transfer functions, as shown in the following example.

\normalsize 
\begin{example}[Closed loop identification with no knowledge of a transfer function]\label{ex:CLIDwithoutTF}
	Consider again the loop network in Figure~\ref{fig:loopID} where all nodes are measured and the dynamics is known to be causal.
	The objective is  to identify the transfer function, $\adjTF_{21}$.
    Note that the conditions for the application of Theorem~\ref{thm:revolvingDoor} hold.
	The following can  be determined from  power spectral densities:

	\scriptsize
	\begin{align*}
		\Wiener[\node_{2}]{\node_{3}}{(\node_{2},\node_{1})}
		=
		\frac{\beta\,\PSD_{\noise_{1}}}{\PSD_{\noise_{3}}\,\ZTvar\,{\gamma}^{2}+\Phi_{e_{1}}\,\ZTvar};
		&
		\quad
		\Wiener[\node_{1}]{\node_{3}}{(\node_{2},\node_{1})}
		=
		\frac{\PSD_{\noise_{3}}\,\gamma}{\PSD_{\noise_{3}}\,{\gamma}^{2}+\PSD_{e_{1}}}.
	\end{align*}
	\begin{align*}
		\Wiener[\node_{3}]{\node_{1}}{(\node_{3},\node_{2})}
		=
		\frac{\PSD_{\noise_{2}}\,{\ZTvar}^{2}\,\gamma}{\PSD_{\noise_{2}}\,{\ZTvar}^{2}+{\alpha}^{2}\,\PSD_{\noise_{1}}};
		&
		\quad
		\Wiener[\node_{2}]{\node_{1}}{(\node_{3},\node_{2})}
		=
		\frac{\alpha\,\PSD_{\noise_{1}}\,\ZTvar}{\PSD_{\noise_{2}}\,{\ZTvar}^{2}+{\alpha}^{2}\,\PSD_{\noise_{1}}}.
	\end{align*}
	\normalsize
	and can thus be determined from measured time-series data. Equation~(\ref{eq:three floor equation}) reduces to (\ref{eq:three floor equation 2}) described by:

    \scriptsize
\begin{align}\label{eq:three floor equation 2}
&\adjTF_{21}=\left[\alpha \ZTvar\left( \alpha\,\Phi_{e_{3}}\,\adjTF_{21}\ZTvar{\gamma}^{2}-{\alpha}^{2}\Phi_{e_{3}}{\gamma}^{2}-\Phi_{e_{2}}{\ZTvar}^{2}+\alpha \Phi_{e_{1}}\,\adjTF_{21}\ZTvar-{\alpha}^{2}\Phi_{e_{1}}\right)\right]\nonumber \\
/ &\left[\alpha\,\Phi_{e_{3}}\,\adjTF_{21}\,{\ZTvar}^{3}\,{\gamma}^{2}
	-{\alpha}^{2}\,\Phi_{e_{3}}\,{\ZTvar}^{2}\,{\gamma}^{2}
	+\alpha\,{\beta}^{2}\,\Phi_{e_{2}}\,\adjTF_{21}\,\ZTvar\,{\gamma}^{2}
	-{\alpha}^{2}\,{\beta}^{2}\,\Phi_{e_{2}}\,{\gamma}^{2}\right. \nonumber \\
	&\left. -\Phi_{e_{2}}\,{\ZTvar}^{4}
	+\alpha\,\Phi_{e_{1}}\,\adjTF_{21}\,{\ZTvar}^{3}
	-{\alpha}^{2}\,\Phi_{e_{1}}\,{\ZTvar}^{2}\right].
\end{align}
\normalsize
Relation (\ref{eq:three floor equation 2}) is quadratic in $\adjTF_{21}(z)$ and can be solved to yield two solutions described by:
\begin{align*}
	\adjTF_{21}=\frac{\alpha}{\ZTvar}; & \quad 
	\adjTF_{21}=\frac{{\alpha}^{2}\,\Phi_{e_{3}}\,\ZTvar\,{\gamma}^{2}+\Phi_{e_{2}}\,{\ZTvar}^{3}+{\alpha}^{2}\,\Phi_{e_{1}}\,\ZTvar}{\left( \alpha\,\Phi_{e_{3}}\,{\ZTvar}^{2}+\alpha\,{\beta}^{2}\,\Phi_{e_{2}}\right) \,{\gamma}^{2}+\alpha\,\Phi_{e_{1}}\,{\ZTvar}^{2}}.
\end{align*}
The second solution can  be discarded since it is not causal.
\end{example}

\section{Conclusion}
The article builds bridges between tools from probabilistic graphical models and systems theory.
As a first contribution, the article provides graphical criteria, based on $d$-separation, to determine the relevance of any set of nodes on the estimation of a specific node's activity. 
As a second contribution, the article lays a framework for the identification of a transfer function between a specified pair of network nodes in a network. 
Assuming a partially known topology of linear interactions between agents, the effectiveness of the methods rests on the ease of selecting a set of additional nodes whose activities, if used in a linear prediction, provide as a by-product a consistent estimate of the transfer between the specified pair of nodes.
Again, the set of nodes to be measured for identifying a specified transfer function is achieved using graphical criteria based on $d$-separation.
The results are applicable to topologies that include feedback loops, providing novel perspective on closed loop identification techniques as well. 
Furthermore, the article provides insights into notions of estimation and identification that are robust with respect against uncertainties on the network topology.

\section*{Acknowledgments}
Donatello Materassi acknowledges NSF for partially supporting this work (CAREER \#1553504).

\bibliography{./topident}

\section{Appendix}

\subsection{Proof of Lemma~\ref{lem:generalized single door}}
	The proof proceeds by manipulating the original LDIM in order to obtain an explicit expression for
	$\Wiener{\node_{\nodeindexalt}}{(\node_{\nodeindex},\node_{Z})}$.  First, marginalize all nodes that are not ancestors of $\node_i,\ \node_j$ and $\node_Z.$\\
	\noindent{\it Marginalizing $\nodesetsmallertocheckdsepcomplement:=\node\setminus \nodesetsmallertocheckdsep$:}\\
	Note that the set of vertices
	$y=\nodesetsmallertocheckdsep\cup \nodesetsmallertocheckdsepcomplement$.
	As
	$\nodesetsmallertocheckdsep$
	is an ancestor set, there can be no parents to any element of 
	$\nodesetsmallertocheckdsep$
	in
	$\nodesetsmallertocheckdsepcomplement$.
	Thus, directed links from
	$\nodesetsmallertocheckdsepcomplement$
	to 
	$\nodesetsmallertocheckdsep$
	are not possible, as depicted in Figure~\ref{fig:step}(a).
		\begin{figure}[ht!]
		\centering 
		\begin{tabular}{cc}
			\includegraphics[width=0.45\columnwidth]{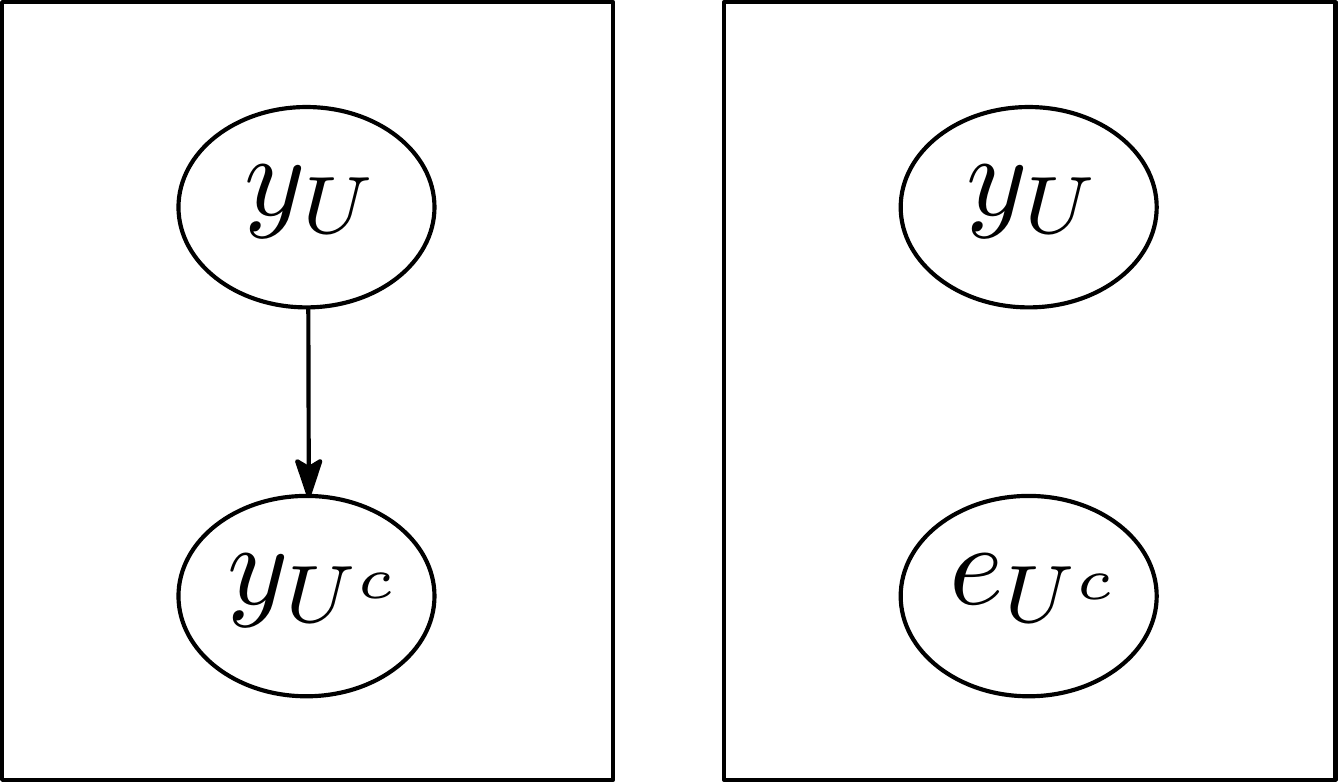} &
			\includegraphics[width=0.45\columnwidth]{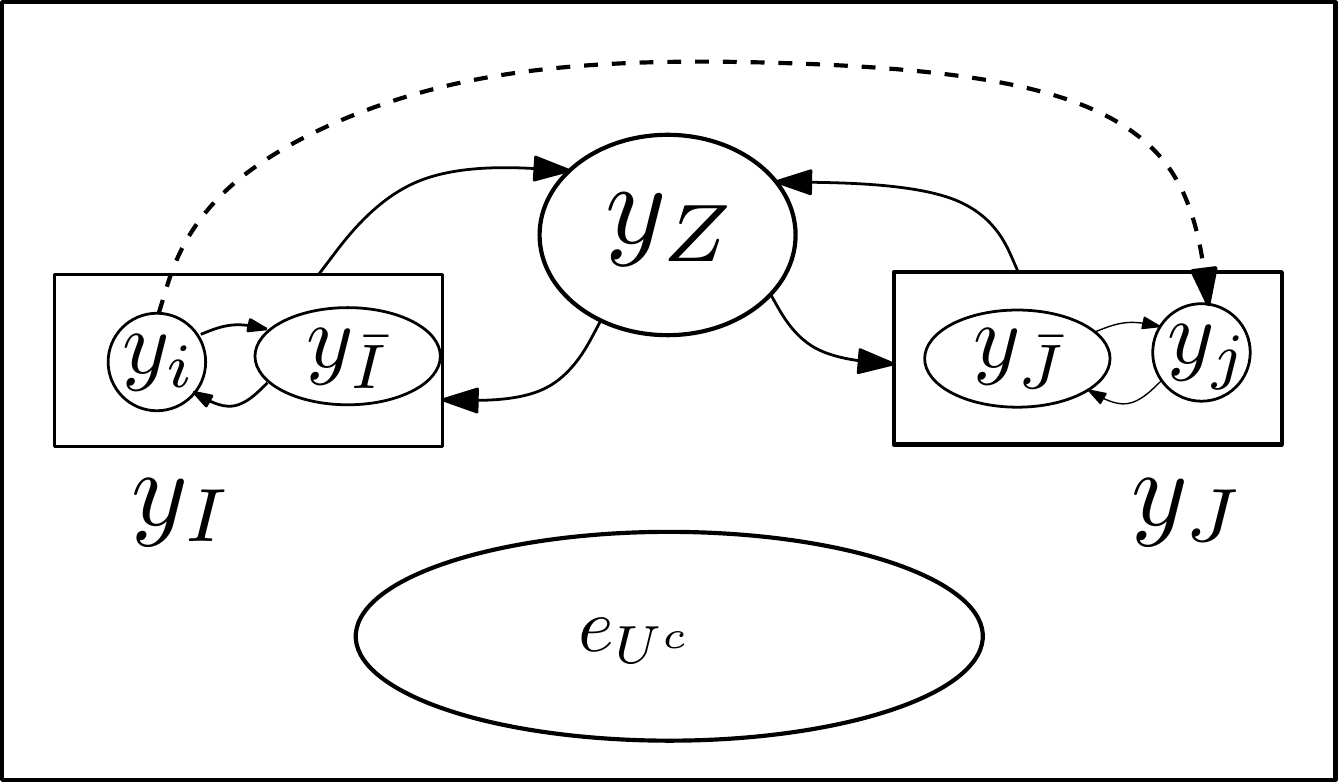}
			\\
			(a) \qquad\qquad\quad (b) & (c)\\
			\includegraphics[width=0.45\columnwidth]{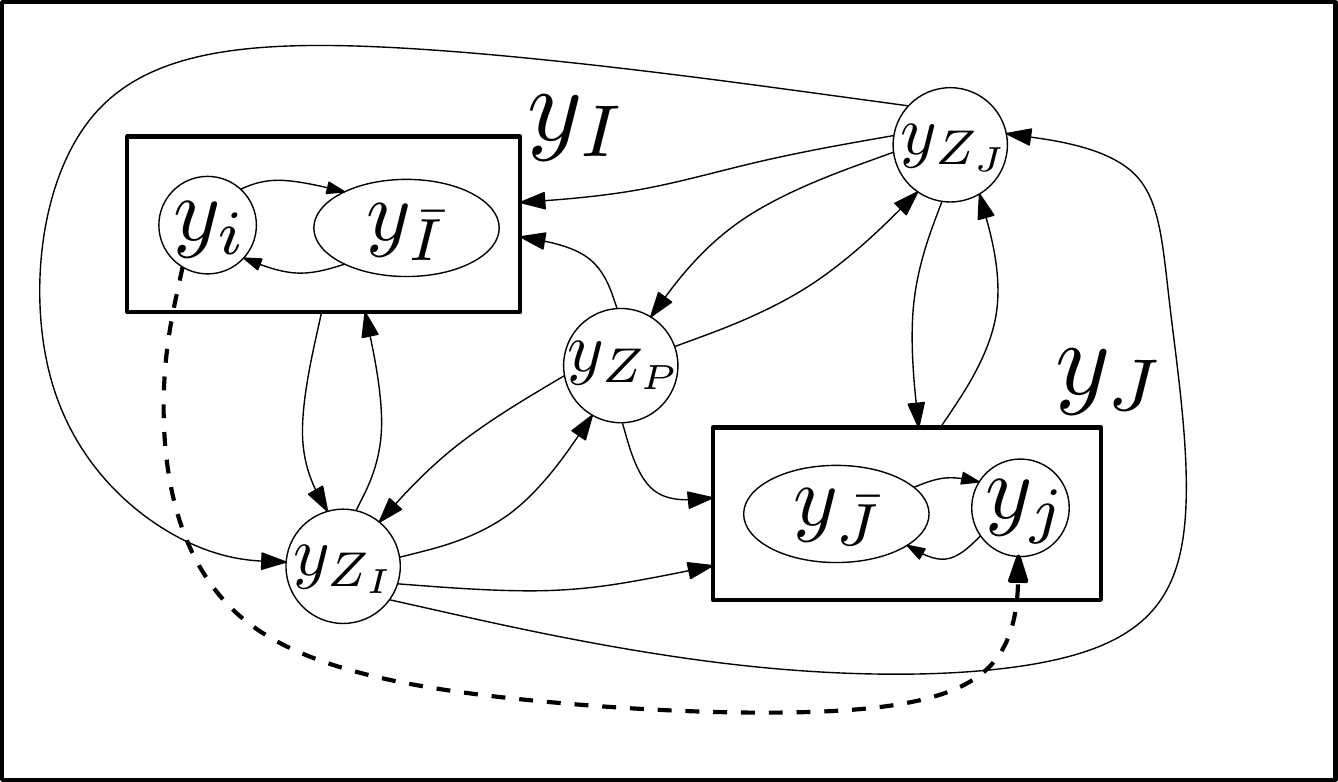} &
			\includegraphics[width=0.45\columnwidth]{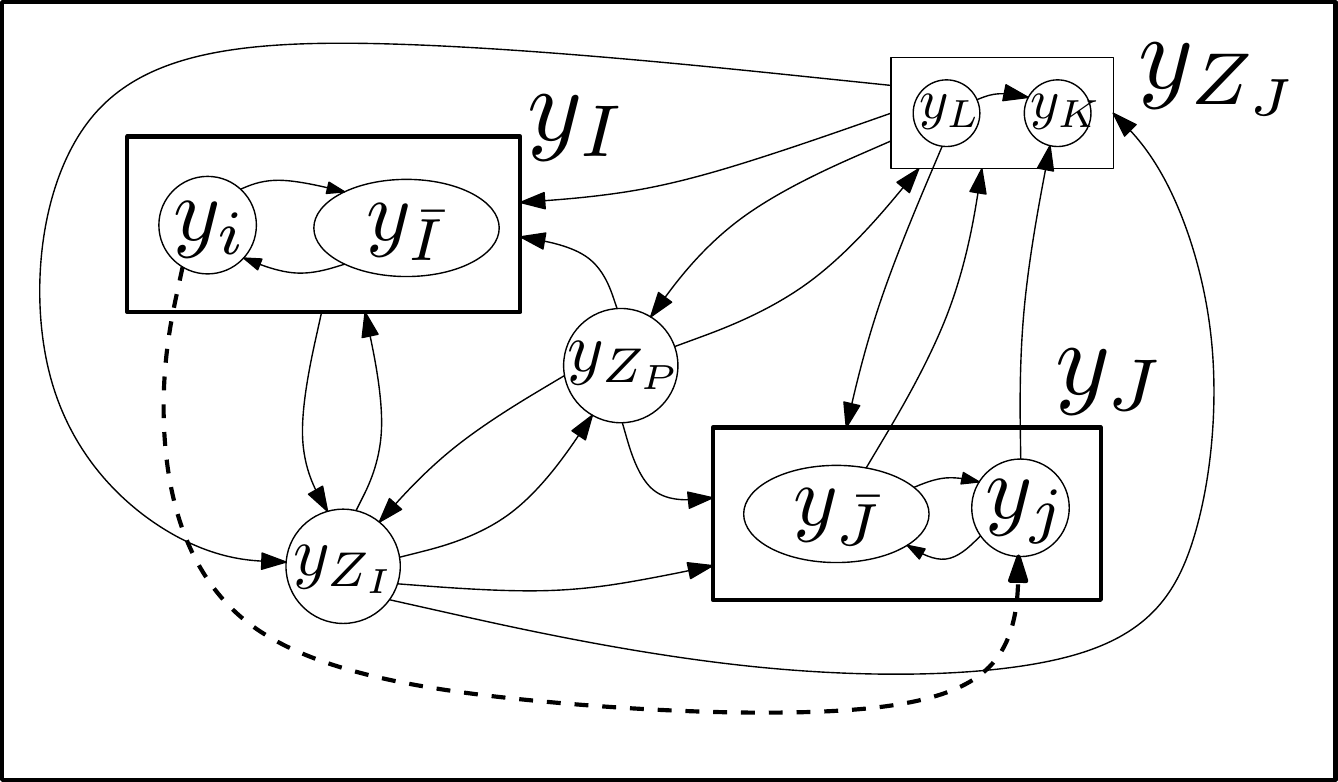}
			\\
			(d) & (e)\\
		\end{tabular}
		\caption{Steps in the proof of Lemma~\ref{lem:generalized single door}.
		\label{fig:step} \label{fig:step2} \label{fig:step3} \label{fig:step4}}
	\end{figure}
	Note, from Lemma~\ref{lem:node marginalization}, that, as there are no links from
	$\nodesetsmallertocheckdsepcomplement$
	to
	$\nodesetsmallertocheckdsep$
	in the original
	LDIM $\ldim$,
	it is not possible to have a chain from
	$\nodesetsmallertocheckdsep$
	to
	$\nodesetsmallertocheckdsep$
	with all intermediate links in
	$\nodesetsmallertocheckdsepcomplement$.
	It follows that the transfer matrix that maps elements in the set
	$\nodesetsmallertocheckdsep$
	to elements in the set
	$\nodesetsmallertocheckdsep$
	after marginalizing
	$\nodesetsmallertocheckdsepcomplement$,
	remains unaltered and is given by
	$\adjTF_{\nodesetsmallertocheckdsep\nodesetsmallertocheckdsep}$
	(see property \ref{item:tfUnchanged} of Lemma~\ref{lem:node marginalization}).
	Thus after marginalizing the set
	$\nodesetsmallertocheckdsepcomplement$,
	the LDIM in Figure~\ref{fig:step}(b) results.
	
	The marginalization of the LDIM, $\ldim$, with respect to
	$\nodesetsmallertocheckdsep^{c}$
	is complete. We further analyze the structure of the LDIM after marginalizing
	$\nodesetsmallertocheckdsepcomplement$.
	Let,
	\begin{align*}
		&\subsetnode:=\nodesetsmallertocheckdsep\setminus \{\subsetnodealt\cup \subsetnodesep\}; & \qquad &
		\subsetnodeenlarged :=\subsetnode\setminus \{\node_{\nodeindex}\}.
	\end{align*}
	As
	$dsep_{\mutilatedgraph}(\node_{\nodeindex},\subsetnodesep,\node_{\nodeindexalt})$
	holds,
	it follows from Theorem~\ref{lem:inferDsepOnEnlargedSets},
	that $dsep_{\mutilatedgraph}(\subsetnode,\subsetnodesep,\subsetnodealt)$
	holds. 
	Observe that
	$\nodesetsmallertocheckdsep=\subsetnode\cup\subsetnodesep\cup\subsetnodealt$
	and that
	$\subsetnode$,
	$\subsetnodealt$
	and
	$\subsetnodesep$
	are pairwise disjoint.
	Thus
	$\subsetnode$,
	$\subsetnodealt$
	and
	$\subsetnodesep$
	partition $U$.
	As in the graph
	$\mutilatedgraph$
	we have
	$dsep_{\mutilatedgraph}(\subsetnode,\subsetnodesep,\subsetnodealt)$,
	links of the forms
	$\node_{k}\to \node_{l}$,
	$\node_{l}\to\node_{k}$
	and $\node_{l}\rightarrow \node_{c}\leftarrow \node_{k}$
	for any
	$\node_{k}\in \subsetnode$,
	$\node_{l}\in \subsetnodealt$
	and
	$\node_{c}\in \subsetnodesep$
	are not allowed in $\mutilatedgraph$ (see Lemma~\ref{lem:linkTypeRestricted}).
	Thus Figure~\ref{fig:step}(c) follows where the set
	$\nodesetsmallertocheckdsep$
	is partitioned by
	$\subsetnode$,
	$\subsetnodealt$
	and
	$\subsetnodesep$,
	and there are  no direct links between 
	$\subsetnode$
	and
	$\subsetnodealt$
	in
	$\mutilatedgraph$.
	In  Figure~\ref{fig:step}(c) all solid links belong to
	$\mutilatedgraph$
	and the link
	$\node_{\nodeindex}\rightarrow \node_{\nodeindexalt}$
	shown as a dashed link is added to recover the graphical structure of all links.

	Define sets
	\begin{align*}
		&\subsetnodesepchildalt:=\child{\graph}{\subsetnodealt}\cap \subsetnodesep,\\
		&\subsetnodesepchild:=\child{\graph}{\subsetnode}\cap \subsetnodesep,~\text{and}\\
		&\subsetnodesep_{P}:=\subsetnodesep\setminus
		(\subsetnodesep_{J}\cup\subsetnodesep_{I}).
	\end{align*}
	Consider a link  $\node_\ell\rightarrow \node_c$ where $\node_\ell\in \node_I$ and $\node_c\in \node_{Z_J}$, present.  This implies that there exists $\node_p\in \node_{J}$ which is a parent of $\node_c.$ Thus the path 
	$\node_\ell \rightarrow \node_c\leftarrow \node_p$ exists with $\node_\ell \in \node_I,\ \node_c\in \node_{Z_J}\subset \node_Z$ and $\node_p\in \node_J$. This is a contradiction as $\node_I,\ \node_Z$ and $\node_J$ partition $\node_U$ and $dsep_{\bar{G}} (\node_I,\node_Z,\node_J)$ holds (see Lemma~\ref{lem:linkTypeRestricted}). Thus there is no link from  $\node_I$ to $\node_{Z_J}$ in $\bar{G}.$ Moreover, from the definition of $\node_{Z_P}$, there can be no link from $\node_I$ to $\node_{Z_P}$ in $\bar{G}.$ For similar reasons there can be no links from $\node_J$ to $\node_{Z_I}$ or to $\node_{Z_P}$ in $\bar{G}.$
	  Furthermore note that as,   $\node_K=de(\node_j)\cap \node_{Z_J},$ it follows that, $\node_K\subset \node_{Z_J}$. Let $\node_L:=\node_{Z_J}\backslash \node_K$. 
	 Suppose there is a link from  $\node_k\in \node_K$ to $\node_l\in \node_L$. Then $\node_l\in Z_J$ and $\node_l\in de(\node_j)$ implying $\node_l\in \node_K$. This is a contradiction as $\node_L$ and $\node_K$ are disjoint. Thus there can be no link from $\node_K$ to $\node_L$. For similar reasons there can be no link from $\node_j$ to $\node_L.$ 
	 
	 Also, from assumption A3. there are no links of the form $\node_{k} \rightarrow \node_j$ for $\node_k\in \node_K. $ Moreover, $\node_K$ and $\node_L$ form partition of $\node_{Z_J}.$ Restricting attention to $\node_{U}$ and introducing the link between $\node_i$ and $\node_j$ (the dashed line) 	Figure~\ref{fig:step}(d) follows. Also, note that none of the transfer functions are altered; thus the transfer matrix mapping $\node_U$ to $\node_U$ remains $\adjTF_{UU}.$. Thus the the sparsity pattern  depicted by Figure~\ref{fig:step}(d) along with the LDIM of Figure~\ref{fig:bigeq1}.
	 \begin{figure*}[!h]
	 {
	 \small
	 \begin{align*}
	 \qquad\qquad\qquad\qquad
	 \left(\begin{array}{c}
	 \node_i \\ 
	 \node_{\bar{I}} \\ 
	 \node_{Z_P} \\ 
	 \node_{Z_I} \\ 
	 \node_K \\ 
	 \node_L\\
	 \node_j \\ 
	 \node_{\bar{J}}
	 \end{array}\right)= \left(\begin{array}{cccccccc}
	 \adjTF_{ii} & \adjTF_{i\bar{I}} & \adjTF_{iZ_P} & \adjTF_{iZ_I} & \adjTF_{iK} &\adjTF_{iL} & 0 & 0 \\ 
	 \adjTF_{\bar{I},i} & \adjTF_{\bar{I}\bar{I}} & \adjTF_{\bar{I}Z_P} & \adjTF_{\bar{I}Z_I} & \adjTF_{\bar{I}K} & \adjTF_{\bar{I}L} &0 & 0 \\ 
	 0 & 0 & \adjTF_{Z_PZ_p} & \adjTF_{Z_PZ_I} & \adjTF_{Z_PK} & \adjTF_{Z_PL} &0 & 0 \\ 
	 \adjTF_{Z_Ii} & \adjTF_{Z_I \bar{I}} & \adjTF_{Z_I Z_P} & \adjTF_{Z_IZ_I} & \adjTF_{Z_IK} &\adjTF_{Z_IL} & 0 & 0 \\ 
	 0 & 0 & \adjTF_{KZ_P} & \adjTF_{K Z_I} & \adjTF_{KK}  & \adjTF_{KL}  &\adjTF_{Kj} & \adjTF_{K\bar{J}} \\ 
	  0 & 0 & \adjTF_{LZ_P} & \adjTF_{L Z_I} & 0  & \adjTF_{LL}  &0 & \adjTF_{L\bar{J}} \\
	 \adjTF_{ji} & 0 & \adjTF_{jZ_P} & \adjTF_{jZ_I} & 0 &  \adjTF_{jL} & 0 & \adjTF_{j\bar{J}} \\ 
	 0 & 0 & \adjTF_{\bar{J}Z_P} & \adjTF_{\bar{J} Z_I} & \adjTF_{\bar{J}K} &\adjTF_{\bar{J}L} & \adjTF_{\bar{J}j} & \adjTF_{\bar{J}\bar{J}}
	 \end{array} \right)\left(\begin{array}{c}
	 \node_i \\ 
	 \node_{\bar{I}} \\ 
	 \node_{Z_P} \\ 
	 \node_{Z_I} \\ 
	 \node_K \\ 
	 \node_L\\
	 \node_j \\ 
	 \node_{\bar{J}}
	 \end{array}\right)+\left(\begin{array}{c}
	 e_i \\ 
	 e_{\bar{I}} \\ 
	 e_{Z_P} \\ 
	 e_{Z_I} \\ 
	 e_{K} \\ 
	 e_L\\
	 e_j \\ 
	 e_{\bar{J}}
	 \end{array}\right).
	 \end{align*}
	 }
	 \caption{
		LDIM obtained after reaching the step represented in Figure~\ref{fig:step}(d).
		\label{fig:bigeq1}
	 }
	 \end{figure*}
	\normalsize
	
	\noindent{\it Marginalizing set $\subsetnodealtenlarged$:}\\
	Consider marginalization with respect to $\node_{\bar{J}}$, given the LDIM in Figure~\ref{fig:step}(d). Consider any two elements, $r $ and $s$ in $ U\backslash (\node_{\bar{J}}\cup \node_j\cup \node_{Z_J})=\node_i\cup \node_{\bar{I}}\cup\node_{Z_P}\cup\node_{Z_I}.$ There is no chain of the form $r\rightarrow\node_{\pi_1}\rightarrow \node_{\pi_2}\rightarrow \ldots \rightarrow \node_{\pi_t}\rightarrow s$ with  $\node_{\pi_m} \in \node_{\bar{J}}$ for $m=1,\ldots,t.$ From property \ref{item:tfUnchanged}) of Lemma~\ref{lem:node marginalization}, it follows that the transfer matrix  between $\node_i\cup \node_{\bar{I}}\cup\node_{Z_P}\cup\node_{Z_I}$  and itself remain unchanged in the LDIM resulting after marginalizing $\node_{\bar{J}}$.
	Also there is a no chain of the form $\node_{\pi_1}\rightarrow \ldots \node_{\pi_\ell}\rightarrow s$ with $\node_{\pi_m}\in \node_{\bar{J}}$ for any $s\in \node_i\cup \node_{\bar{I}}\cup\node_{Z_P}\cup\node_{Z_I}$. It follows from property \ref{item:Hji'=0}) of Lemma~\ref{lem:node marginalization} that the transfer matrix from $\node_{\bar{J}}$ to any $s\in \node_i\cup \node_{\bar{I}}\cup\node_{Z_P}\cup\node_{Z_I}$ remains unchanged and thus remains zero. 
	Moreover, for any $s\in\node_i\cup \node_{\bar{I}}\cup\node_{Z_P}\cup\node_{Z_I}$ there is no chain of the form $\node_j\rightarrow \node_{\pi_1}\rightarrow \ldots \node_{\pi_\ell}\rightarrow s$ with $\node_{\pi_m}\in \node_{\bar{J}}.$ Thus all transfer matrices mapping $\node_j$ into $\node_s$ remain unchanged. Similarly there are no chains contained in $\node_{\bar{J}}$ that connect $\node_i$ to $\node_{Z_J}$, $\node_{\bar{I}}$ to $\node_{Z_J}$, $\node_i$ to $\node_j$ and $\node_{\bar{I}}$ to $\node_j$ and thus all corresponding transfer functions matrices remain unchanged. 

	Assumption A2. precludes a chain connecting $\node_j$ to $\node_j$ with all intermediate nodes in $\node_{\bar{J}}$; thus the map from $j$ to $j$ remains unchanged and is zero (from assumption A2. the original LDIM does not have a self-loop at $j$ and this $\adjTF_{jj}=0.$). Also, from Assumption A3. there is no chain connecting any element of $\node_K$ to $\node_j$ which is contained in $\bar{J}.$ Thus the transfer matrix from $\node_K$ to $\node_j$ remains at zero after marginalizing $\node_{\bar{J}}.$

	After marginalizing $\node_{\bar{J}}$ the LDIM  is described by the equation in Figure~\ref{fig:bigeq2}
	\begin{figure*}[!h]
	{\tiny
	 \begin{align*}\begin{array}{ccc}
	\left(\begin{array}{c}
	 \node_i \\ 
	 \node_{\bar{I}} \\ 
	 \node_{Z_P} \\ 
	 \node_{Z_I} \\ 
	 \node_{K} \\ 
	 \node_{L} \\
	 \node_j \\ 
	 e_{\bar{J}}
	 \end{array}\right)= \left(\begin{array}{cccccccc}
	 \adjTF_{ii} & \adjTF_{i\bar{I}} & \adjTF_{iZ_P} & \adjTF_{iZ_I} & \adjTF_{iK} & \adjTF_{iL} & 0 & 0 \\ 
	 \adjTF_{\bar{I},i} & \adjTF_{\bar{I}\bar{I}} & \adjTF_{\bar{I}Z_P} & \adjTF_{\bar{I}Z_I} & \adjTF_{\bar{I}K} & \adjTF_{\bar{I}L} & 0 & 0 \\ 
	 0 & 0 & \adjTF_{Z_PZ_p} & \adjTF_{Z_PZ_I} & \adjTF_{Z_PK} &\adjTF_{Z_PL} & 0 & 0 \\ 
	 \adjTF_{Z_Ii} & \adjTF_{Z_I \bar{I}} & \adjTF_{Z_I Z_P} & \adjTF_{Z_IZ_I} & \adjTF_{Z_IK} &\adjTF_{Z_IL} & 0 & 0 \\ 
	 0 & 0 & \adjTF_{KZ_P} +\adjTF_{K\bar{J}}\adjTF_{\bar{J}Z_P}^{(e)}& \adjTF_{K Z_I}+\adjTF_{K\bar{J}}\adjTF_{\bar{J}Z_I}^{(e)} & \adjTF_{K K}+\adjTF_{K\bar{J}}\adjTF_{\bar{J}K}^{(e)}  &  \adjTF_{K L}+\adjTF_{K\bar{J}}\adjTF_{\bar{J}L}^{(e)}  &  \adjTF_{Kj} +\adjTF_{K\bar{J}}\adjTF_{\bar{J}j}^{(e)}& \adjTF_{K\bar{J}}^{(e)} \\ 
	  0 & 0 & \adjTF_{LZ_P} +\adjTF_{L\bar{J}}\adjTF_{\bar{J}Z_P}^{(e)}& \adjTF_{L Z_I}+\adjTF_{L\bar{J}}\adjTF_{\bar{J}Z_I}^{(e)} & 0  &  \adjTF_{L L}+\adjTF_{L\bar{J}}\adjTF_{\bar{J}L}^{(e)}  &  0& \adjTF_{L\bar{J}}^{(e)} \\ 
	 \adjTF_{ji} & 0 & \adjTF_{jZ_P}+\adjTF_{j\bar{J}}\adjTF_{\bar{J}Z_P}^{(e)} & \adjTF_{jZ_I}+\adjTF_{j\bar{J}}\adjTF_{\bar{J}Z_I}^{(e)}  &0 & \adjTF_{jL} +\adjTF_{j\bar{J}}\adjTF_{\bar{J}L}^{(e)} & 0 & \adjTF_{j\bar{J}}^{(e)} \\ 
	 0 & 0 & 0 & 0& 0 & 0 & 0&0
	 \end{array} \right)\left(\begin{array}{c}
	 \node_i \\ 
	 \node_{\bar{I}} \\ 
	 \node_{Z_P} \\ 
	 \node_{Z_I} \\ 
	 \node_K \\ 
	  \node_L\\ 
	 \node_j \\ 
	 e_{\bar{J}}
	 \end{array}\right)\\ \qquad\qquad+\left(\begin{array}{c}
	 e_i \\ 
	 e_{\bar{I}} \\ 
	 e_{Z_P} \\ 
	 e_{Z_I} \\ 
	 e_K \\ 
	  e_L \\ 
	 e_j \\ 
	 e_{\bar{J}}
	 \end{array}\right)
	 \end{array}
	 \end{align*}
	 }
	 \caption{
		LDIM after the marginalization of $\node_{\bar{J}}$.
	 \label{fig:bigeq2}
	 }
	\end{figure*}
	 where 
	 \begin{align}\label{eq:HhatE}
	 \begin{array}{ccc}
	 \adjTF_{\bar{J}m}^{(e)}&=&(\identitymatrix -\adjTF_{\bar{J}\bar{J}})^{-1}\adjTF_{\bar{J}m}	\mbox{ and }\\
	   \adjTF_{m\bar{J}}^{(e)}&=&\adjTF_{m\bar{J}}(\identitymatrix -\adjTF_{\bar{J}\bar{J}})^{-1}.
	  \end{array}
	  \end{align}
	This completes the marginalization with respect to the set $\node_{\bar{J}}.$
	 
	Removing self-loops at $\node_K$ and $\node_L$ and subsuming the effect of $e_{\bar{J}}$ into corresponding noise terms we have the relation described in Figure~\ref{fig:bigeq3}
	\begin{figure*}[!h]
	{\scriptsize
	\begin{equation}\label{eq:nonLdim}
	\begin{array}{ccc}
	\left(\begin{array}{c}
	\node_i \\ 
	\node_{\bar{I}} \\ 
	\node_{Z_P} \\ 
	\node_{Z_I} \\ 
	\node_{K} \\ 
	\node_{L} \\
	\node_j 
	\end{array}\right)&=& \left(\begin{array}{ccccccc}
	\adjTF_{ii} & \adjTF_{i\bar{I}} & \adjTF_{iZ_P} & \adjTF_{iZ_I} & \adjTF_{iK} & \adjTF_{iL} & 0  \\ 
	\adjTF_{\bar{I},i} & \adjTF_{\bar{I}\bar{I}} & \adjTF_{\bar{I}Z_P} & \adjTF_{\bar{I}Z_I} & \adjTF_{\bar{I}K} & \adjTF_{\bar{I}L} & 0 \\ 
	0 & 0 & \adjTF_{Z_PZ_p} & \adjTF_{Z_PZ_I} & \adjTF_{Z_PK} &\adjTF_{Z_PL} & 0 \\ 
	\adjTF_{Z_Ii} & \adjTF_{Z_I \bar{I}} & \adjTF_{Z_I Z_P} & \adjTF_{Z_IZ_I} & \adjTF_{Z_IK} &\adjTF_{Z_IL} & 0 \\ 
	0 & 0 & \adjTF_{KZ_P}^{(f)} & \adjTF_{K Z_I}^{(f)} &0  &  \adjTF_{K L}^{(f)}  &  \adjTF_{Kj}^{(f)} \\ 
	0 & 0 & \adjTF_{LZ_P}^{(f)}& \adjTF_{L Z_I}^{(f)} & 0  &  0 &  0\\ 
	\adjTF_{ji} & 0 & \adjTF_{jZ_P}^{(f)} & \adjTF_{jZ_I}^{(f)}  &0 & \adjTF_{jL}^{(f)}& 0 
	\end{array} \right)\left(\begin{array}{c}
	\node_i \\ 
	\node_{\bar{I}} \\ 
	\node_{Z_P} \\ 
	\node_{Z_I} \\ 
	\node_K \\ 
	\node_L\\ 
	\node_j 
	\end{array}\right)+\left(\begin{array}{c}
	e_i \\ 
	e_{\bar{I}} \\ 
	e_{Z_P} \\ 
	e_{Z_I} \\ 
	\epsilon_K-\adjTF^{(f)}_{Kj}\epsilon_j \\ 
	\epsilon_L \\ 
	\epsilon_j
	\end{array}\right).
	\end{array}
	\end{equation}
	}
	\caption{
	Relation obtained after removing self-loops at $\node_K$ and $\node_L$ and subsuming the effect of $e_{\bar{J}}$ into corresponding noise terms. This relation does not represent a LDIM because the noise vector does not have a block diagonal power spectral density.
		\label{fig:bigeq3}
	}
	\end{figure*}
	where
	\begin{itemize}
	\item 
	$\adjTF_{K\alpha}^{(f)}=[\identitymatrix-(\adjTF_{KK}+\adjTF_{K\bar{J}}\adjTF_{\bar{J}K}^{(e)})]^{-1}[\adjTF_{K\alpha}+\adjTF_{K\bar{J}}\adjTF^{(e)}_{\bar{J}\alpha}]$ for $\alpha \in \{Z_P, Z_I,\ L,\ j\}, $  $\adjTF_{K\bar{J}}^{(f)}=[\identitymatrix-(\adjTF_{KK}+\adjTF_{K\bar{J}}\adjTF_{\bar{J}K}^{(e)})]^{-1}\adjTF^{(e)}_{K\bar{J}}]$, with $e_K^{(f)}=[\identitymatrix-(\adjTF_{KK}+\adjTF_{K\bar{J}}\adjTF_{\bar{J}K}^{(e)})]^{-1}e_K$ and $\epsilon_K=e_K^{(f)}+\adjTF_{K\bar{J}}^{(f)}e_{\bar{J}}+\adjTF_{Kj}^{(f)}\epsilon_j.$
 
	\item $\adjTF_{L\alpha}^{(f)}=[\identitymatrix-(\adjTF_{LL}+\adjTF_{L\bar{J}}\adjTF_{\bar{J}L}^{(e)})]^{-1}[\adjTF_{L\alpha}+\adjTF_{L\bar{J}}\adjTF^{(e)}_{\bar{J}\alpha}]$ for $\alpha \in \{Z_P, Z_I\}, $  $\adjTF_{L\bar{J}}^{(f)}=[\identitymatrix-(\adjTF_{LL}+\adjTF_{L\bar{J}}\adjTF_{\bar{J}L}^{(e)})]^{-1}\adjTF^{(e)}_{L\bar{J}}]$, with $e_L^{(f)}=[\identitymatrix-(\adjTF_{LL}+\adjTF_{L\bar{J}}\adjTF_{\bar{J}L}^{(e)})]^{-1}e_L,$ and $\epsilon_L=e_L^{(f)}+\adjTF_{L\bar{J}}^{(f)}e_{\bar{J}}$
	
	\item $\adjTF_{j\alpha}^{(f)}=\adjTF_{j\alpha}+\adjTF_{j\bar{J}}\adjTF_{\bar{J}\alpha}^{(e)}$ for $\alpha \in \{Z_P,\ Z_{I}, L\}$ and $\adjTF_{j\bar{J}}^{(f)}=\adjTF_{j\bar{J}}^{(e)},$ and $\epsilon_j=e_j+\adjTF_{j\bar{J}}^{(f)}e_{\bar{J}}.$
	 \end{itemize}
	 
	Note that (\ref{eq:nonLdim}) is not in the LDIM form because the noise processes are no longer independent.  The relation of $\epsilon_j$ from the last row of (\ref{eq:nonLdim}) can be substituted in the row corresponding $\node_K$ to eliminate $\epsilon_j$ from the row to obtain the relation in Figure~\ref{fig:bigeq4}.
	\begin{figure*}[!h]
	   {\tiny
	 \begin{equation}\label{eq:nonLdimFinal}
	 \begin{array}{ccc}
	\left(\begin{array}{c}
	 \node_i \\ 
	 \node_{\bar{I}} \\ 
	 \node_{Z_P} \\ 
	 \node_{Z_I} \\ 
	 \node_{K} \\ 
	 \node_{L} \\
	 \node_j 
	 \end{array}\right)&=& \left(\begin{array}{ccccccc}
	 \adjTF_{ii} & \adjTF_{i\bar{I}} & \adjTF_{iZ_P} & \adjTF_{iZ_I} & \adjTF_{iK} & \adjTF_{iL} & 0  \\ 
	 \adjTF_{\bar{I},i} & \adjTF_{\bar{I}\bar{I}} & \adjTF_{\bar{I}Z_P} & \adjTF_{\bar{I}Z_I} & \adjTF_{\bar{I}K} & \adjTF_{\bar{I}L} & 0 \\ 
	 0 & 0 & \adjTF_{Z_PZ_p} & \adjTF_{Z_PZ_I} & \adjTF_{Z_PK} &\adjTF_{Z_PL} & 0 \\ 
	 \adjTF_{Z_Ii} & \adjTF_{Z_I \bar{I}} & \adjTF_{Z_I Z_P} & \adjTF_{Z_IZ_I} & \adjTF_{Z_IK} &\adjTF_{Z_IL} & 0 \\ 
	  \adjTF^{(f)}_{Kj}\adjTF_{ji}& 0 & \adjTF_{KZ_P}^{(f)} + \adjTF^{(f)}_{Kj}\adjTF_{jZ_P}^{(f)} & \adjTF_{K Z_I}^{(f)} +\adjTF^{(f)}_{Kj}\adjTF_{jZ_I}^{(f)}&0  &  \adjTF_{K L}^{(f)} +\adjTF^{(f)}_{Kj}\adjTF_{jL}^{(f)} &  0 \\ 
	  0 & 0 & \adjTF_{LZ_P}^{(f)}& \adjTF_{L Z_I}^{(f)} & 0  &  0 &  0\\ 
	 \adjTF_{ji} & 0 & \adjTF_{jZ_P}^{(f)} & \adjTF_{jZ_I}^{(f)}  &0 & \adjTF_{jL}^{(f)}& 0 
	 \end{array} \right)\left(\begin{array}{c}
	 \node_i \\ 
	 \node_{\bar{I}} \\ 
	 \node_{Z_P} \\ 
	 \node_{Z_I} \\ 
	 \node_K \\ 
	  \node_L\\ 
	 \node_j 
	 \end{array}\right)+\left(\begin{array}{c}
	 e_i \\ 
	 e_{\bar{I}} \\ 
	 e_{Z_P} \\ 
	 e_{Z_I} \\ 
	 \epsilon_K \\ 
	  \epsilon_L \\ 
	 \epsilon_j
	 \end{array}\right).
	 \end{array} \end{equation}
	 }	 
	 \caption{
	Relation obtained after substituting $\node_{j}$ from the last row of (\ref{eq:nonLdim}) in the row corresponding $\node_K$.
	 \label{fig:bigeq4}
	 }
	\end{figure*}
	\normalsize
	Let us compute the Wiener estimate
	$\hat \node_{\nodeindexalt}$
	of
	$\node_{\nodeindexalt}$
	using the processes
	$\{\node_{\nodeindex}\}\cup\node_{Z}$.
	We have that
	\begin{align*}
		&\hat \node_{\nodeindexalt}
		:=\arg\min_{q\in Q}\|\node_{\nodeindexalt}-q\|\\
		&\text{s.t. } Q=tfspan\{\node_{\nodeindex},\node_{Z}\}.
	\end{align*}
	Since $\node_{\bar{I}}$ is $d$-separated from $\node_{j}$ by $\node_Z\cup\{\node_i\}$ (see Figure~\ref{fig:step}(d)), it follows from Theorem~\ref{thm:non-causal Wiener-separation} that $\node_Z\cup\{\node_i\}$ Wiener separates $\node_{\bar{I}}$ and $\node_j.$  we have that
	\begin{align*}
		&\hat \node_{\nodeindexalt}
		=\arg\min_{q\in Q}\|\node_{\nodeindexalt}-q\|\\
		&\text{s.t. } Q=tfspan\{\node_{\nodeindex},\node_{\bar I},\node_{Z}\}.
	\end{align*}
	Using (\ref{eq:nonLdimFinal}) we have:
	\begin{align*}
		\hat \node_{\nodeindexalt}
		&=\adjTF_{\nodeindexalt\nodeindex}\node_{\nodeindex}
			+\adjTF_{\nodeindexalt{Z_{P}}}^{(f)}\node_{Z_{P}}
			+\adjTF_{\nodeindexalt{Z_{I}}}^{(f)}\node_{Z_{I}}\\
			&\qquad +\adjTF_{\nodeindexalt{L}}^{(f)}\node_{L}
			+\arg\min_{q\in Q}\|\epsilon_j-q\|\\
		&\text{s.t. } Q=tfspan\{\node_{\nodeindex},\node_{\bar I},\node_{Z_{P}},\node_{Z_{I}},\node_{L},\node_{K}\}.
	\end{align*}
	From (\ref{eq:nonLdimFinal}) we have that 
    $tfspan\{\node_{\nodeindex},\node_{\bar I},\node_{Z_{P}},\node_{Z_{I}},\node_{L},\node_{K}\}
    =tfspan\{e_i,e_{\bar{I}},e_{Z_P},e_{Z_I},\epsilon_K,\epsilon_L\}$.

	We also note that processes $\{e_i,e_{\bar{I}},e_{Z_P},e_{Z_I}\}$ are independent of $\epsilon_j=e_j+\adjTF_{j\bar{J}}^{(f)}e_{\bar{J}}.$
	Thus the optimization problem
	\begin{align*}
	\begin{array}{ccc}
		\begin{array}{c}
			\arg\min_{q\in Q}\|\epsilon_j-q\|\\
			\text{s.t. } Q=tfspan\{e_i,e_{\bar{I}},e_{Z_P},e_{Z_I},\epsilon_K,\epsilon_L\}
		\end{array}
	\end{array}.
	\end{align*}
	is equivalent to the optimization problem 
	\begin{align*}
	\begin{array}{ccc}
		\begin{array}{c}
			\arg\min_{q\in Q}\|\epsilon_j-q\|\\
		\text{s.t. } Q=tfspan\{\epsilon_K,\epsilon_L\}
		\end{array}.
	\end{array}
\end{align*}
	Thus
\[\begin{array}{ccc}
		\hat \node_{\nodeindexalt}
		&=\adjTF_{\nodeindexalt\nodeindex}\node_{\nodeindex}
			+\adjTF_{\nodeindexalt{Z_{P}}}^{(f)}\node_{Z_{P}}
			+\adjTF_{\nodeindexalt{Z_{I}}}^{(f)}\node_{Z_{I}}+\adjTF_{\nodeindexalt{L}}^{(f)}\node_{L}+\\
			&+\arg\min_{q\in Q}\|\epsilon_j-q\|\\
		&\text{s.t. } Q=tfspan\{\node_{\nodeindex},\node_{\bar I},\node_{Z_{P}},\node_{Z_{I}},\node_{L},\node_{K}\}
		\end{array}\]
		leads to
\[\begin{array}{ccc}
		\hat \node_{\nodeindexalt}
		&=\adjTF_{\nodeindexalt\nodeindex}\node_{\nodeindex}
			+\adjTF_{\nodeindexalt{Z_{P}}}^{(f)}\node_{Z_{P}}
			+\adjTF_{\nodeindexalt{Z_{I}}}^{(f)}\node_{Z_{I}}+\adjTF_{\nodeindexalt{L}}^{(f)}\node_{L}\\
			&+\arg\min_{q\in Q}\|\epsilon_j-q\|\\
			&\text{s.t. } Q=tfspan\{\epsilon_K,\epsilon_L\}.
	\end{array}\]
	Thus
	\begin{align}\label{eq:projYjIni}
	\hat{y}_j=\adjTF_{\nodeindexalt\nodeindex}\node_{\nodeindex}
			+\adjTF_{\nodeindexalt{Z_{P}}}^{(f)}\node_{Z_{P}}
			+\adjTF_{\nodeindexalt{Z_{I}}}^{(f)}\node_{Z_{I}}+\adjTF_{\nodeindexalt{L}}^{(f)}\node_{L} \nonumber \\
			+W_{\epsilon_j,[\epsilon_K]|(\epsilon_K,\epsilon_L)}\epsilon_K+W_{\epsilon_j,[\epsilon_L]|(\epsilon_K,\epsilon_L)}\epsilon_L
	\end{align}
	Note from (\ref{eq:nonLdimFinal}) that 
	\begin{align}\label{eq:epsInTermsOfOrginals}
		&\epsilon_K=\node_K-\adjTF_{Kj}^{(f)}\nonumber \adjTF_{ji}\node_i-[\adjTF_{KZ_P}^{(f)}+\adjTF_{Kj}^{(f)}\adjTF_{jZ_P}^{(f)}]+\node_{Z_P}\\	&\qquad+[\adjTF_{KZ_I}^{(f)}+\adjTF_{Kj}^{(f)}\adjTF_{jZ_I}^{(f)}]\node_{Z_I}-[\adjTF_{KL}^{(f)}+\adjTF_{Kj}^{(f)}\adjTF_{jL}^{(f)}]\node_L\\ 
		&\epsilon_L=\node_L-\adjTF_{LZ_P}^{(f)}\node_{Z_P}-\adjTF_{LZ_I}^{(f)}\node_{Z_I}\nonumber
	\end{align}
	Substituting, (\ref{eq:epsInTermsOfOrginals}) in (\ref{eq:projYjIni}) it follows that 
	\begin{equation}
	\begin{array}{lll}
	\hat{y}_j=&[\identitymatrix-W_{\epsilon_j,[\epsilon_K]|(\epsilon_K,\epsilon_L)}\adjTF_{Kj}^{(f)}]\adjTF_{ji}\node_i+\\
	&+[W_{\epsilon_j,[\epsilon_K]|(\epsilon_K,\epsilon_L)}]\node_K\\
	&+[\adjTF_{jZ_P}^{(f)}-W_{\epsilon_j,[\epsilon_K]|(\epsilon_K,\epsilon_L)}(\adjTF_{KZ_P}^{(f)}+\adjTF_{Kj}^{(f)}\adjTF_{jZ_P}^{(f)})\\
	&-W_{\epsilon_j,[\epsilon_L]|(\epsilon_K,\epsilon_L)}\adjTF_{LZ_P}^{(f)}]\node_{Z_P}\\
	&+[\adjTF_{jZ_I}^{(f)}-W_{\epsilon_j,[\epsilon_K]|(\epsilon_K,\epsilon_L)}(\adjTF_{KZ_I}^{(f)}+\adjTF_{Kj}^{(f)}\adjTF_{jZ_I}^{(f)})\\
	&-W_{\epsilon_j,[\epsilon_L]|(\epsilon_K,\epsilon_L)}\adjTF_{LZ_I}^{(f)}]\node_{Z_I}\\
	&+[\adjTF_{jL}^{(f)}-W_{\epsilon_j,[\epsilon_K]|(\epsilon_K,\epsilon_L)}(\adjTF_{KL}^{(f)}+\adjTF_{Kj}^{(f)}\adjTF_{jL}^{(f)})\\
	&\qquad+W_{\epsilon_j,[\epsilon_L]|(\epsilon_K,\epsilon_L)}]\node_L.
	\end{array}
	\end{equation}
	Note that if $v=(\begin{array}{cccccc}\noise_i^T & \noise_{\bar{I}}^T & \noise_{Z_P}^T & \noise_{Z_I}^T & \noise_{K}^T & \noise_{L}^T \end{array})$,
	then the  power spectral density matrix $\Phi_{vv}(e^{j\omega})$ is assumed positive definite for every $\omega\in R.$) From the uniqueness of the multivariate Wiener filter which follows from Proposition~\ref{prop: my wiener}, we get 
	\[\begin{array}{lll}
	W_{\node_j,[\node_K]|(\node_i,\node_Z,\node_{\bar{I}})}&=&W_{\epsilon_j,[\epsilon_K]|(\epsilon_K,\epsilon_L)}\mbox{ and }\\
	W_{\node_j,[\node_i]|(\node_i,\node_Z,\node_{\bar{I}})}&=&[\identitymatrix-W_{\epsilon_j,[\epsilon_K]|(\epsilon_K,\epsilon_L)}\adjTF_{Kj}^{(f)}]\adjTF_{ji}
	\end{array}\]
	Thus
	\[\begin{array}{lll}
	\adjTF_{ji}&=&[\identitymatrix-W_{\epsilon_j,[\epsilon_K]|(\epsilon_K,\epsilon_L)}\adjTF_{Kj}^{(f)}]^{-1}W_{\node_j,[\node_i]|(\node_i,\node_Z,\node_{\bar{I}})}\\
	&=&[\identitymatrix-W_{\node_j,[\node_K]|(\node_i,\node_Z,\node_{\bar{I}})}\adjTF_{Kj}^{(f)}]^{-1}W_{\node_j,[\node_i]|(\node_i,\node_Z,\node_{\bar{I}})}\\
	&=&[\identitymatrix-W_{\node_j,[\node_K]|(\node_i,\node_Z)}\adjTF_{Kj}^{(f)}]^{-1}W_{\node_j,[\node_i]|(\node_i,\node_Z)}
	\end{array}\]
	where we have used $W_{\node_j|(\node_i,\node_Z,\node_{\bar{I}})}=W_{\node_j|(\node_i,\node_Z)}$.
	Summarizing we have established that, 
	\[\adjTF_{ji}=[\identitymatrix-W_{\node_j,[\node_K]|(\node_i,\node_Z)}\adjTF_{Kj}^{(f)}]^{-1}W_{\node_j,[\node_i]|(\node_i,\node_Z)},\]
	where
	\[\adjTF_{Kj}^{(f)}=[\identitymatrix-(\adjTF_{KK}+\adjTF_{K\bar{J}}\adjTF_{\bar{J}K}^{(e)})]^{-1}[\adjTF_{Kj}+\adjTF_{K\bar{J}}\adjTF^{(e)}_{\bar{J}j}],\]
	with
	\[
	\begin{array}{lll}
	\adjTF_{\bar{J}K}^{(e)}&=&(\identitymatrix -\adjTF_{\bar{J}\bar{J}})^{-1}\adjTF_{\bar{J}K}	\mbox{ and }\\
	\adjTF_{\bar{J}j}^{(e)}&=&(\identitymatrix -\adjTF_{\bar{J}\bar{J}})^{-1}\adjTF_{\bar{J}j}.\\
	\end{array}
	\] 
	This completes the proof. The case for $y_{K}=\emptyset$ is analogous.

\end{document}